\documentclass[11pt,a4paper]{article}
\pdfoutput=1

\usepackage{amssymb}
\usepackage{amsmath}
\usepackage{amsfonts}
\usepackage{bbm}
\usepackage{amsthm}
\usepackage{mathrsfs}
\usepackage{hyperref}
\usepackage{color}
\usepackage[margin=2.41cm]{geometry}
\usepackage[all,cmtip,2cell]{xy}
\UseTwocells

\usepackage[utf8]{inputenc}
\usepackage{graphicx}
\usepackage{varwidth}
\usepackage{comment}
\usepackage{faktor}
\usepackage[shortlabels]{enumitem}
\usepackage{euscript}

\usepackage{mathtools}

\usepackage{upgreek}
\usepackage{rotating}

\usepackage{tikz}
\usetikzlibrary{shapes.geometric}

\usepackage{tikz-cd}
\usetikzlibrary{matrix,arrows,calc,decorations.pathmorphing,fit,shapes.geometric,decorations.pathreplacing,positioning}

\usepackage{amssymb}
\tikzset{curve/.style={settings={#1},to path={(\tikztostart)
    .. controls ($(\tikztostart)!\pv{pos}!(\tikztotarget)!\pv{height}!270:(\tikztotarget)$)
    and ($(\tikztostart)!1-\pv{pos}!(\tikztotarget)!\pv{height}!270:(\tikztotarget)$)
    .. (\tikztotarget)\tikztonodes}},
    settings/.code={\tikzset{quiver/.cd,#1}
        \def\pv##1{\pgfkeysvalueof{/tikz/quiver/##1}}},
    quiver/.cd,pos/.initial=0.35,height/.initial=0}

\definecolor{darkred}{rgb}{0.8,0.1,0.1}

\hypersetup{
	colorlinks=true,         
	urlcolor=darkred, 
	linkcolor=darkred,
	citecolor=blue,
}

\theoremstyle{plain}
\newtheorem{theo}{Theorem}[section]
\newtheorem{lem}[theo]{Lemma}
\newtheorem{propo}[theo]{Proposition}
\newtheorem{cor}[theo]{Corollary}

\theoremstyle{definition}
\newtheorem{defi}[theo]{Definition}

\newtheorem{assu}[theo]{Assumption}

\newtheorem{exx}[theo]{Example}
\newenvironment{ex}
{\pushQED{\qed}\exx}
{\popQED\endexx}

\newtheorem{remm}[theo]{Remark}
\newenvironment{rem}
{\pushQED{\qed}\remm}
{\popQED\endremm}

\newtheorem{constrr}[theo]{Construction}
\newenvironment{constr}
{\pushQED{\qed}\constrr}
{\popQED\endconstrr}

\numberwithin{equation}{section}

\renewcommand{\emptyset}{\varnothing}

\def\nn{\nonumber}

\def\bbK{\mathbb{K}}
\def\bbR{\mathbb{R}}
\def\bbC{\mathbb{C}}

\def\bbZ{\mathbb{Z}}

\def\bbM{\mathbb{M}}
\def\bbE{\mathbb{E}}

\def\id{\mathrm{id}}

\def\dd{\mathrm{d}}

\def\1{I}
\def\oone{\mathbbm{1}}
\def\op{\mathrm{op}}

\def\Set{\mathbf{Set}}
\def\Alg{\mathbf{Alg}}

\def\Ch{\mathbf{Ch}}

\def\CastAlg{C^{\ast}\mathbf{Alg}_\bbC}

\def\CastCat{C^{\ast}\mathbf{Cat}_\bbC}

\def\CC{\mathbf{C}}

\def\TT{\mathbf{T}}
\def\Cat{\mathbf{Cat}}
\def\CAT{\mathbf{CAT}}

\def\Grpd{\mathbf{Grpd}}

\def\Op{\mathbf{Op}}

\def\AQFT{\mathbf{AQFT}}

\def\Fun{\mathbf{Fun}}

\def\AAlg{\EuScript{A}\mathsf{lg}}

\def\Ad{\mathrm{Ad}}

\def\Rep{\mathbf{Rep}}
\def\SSS{\mathbf{SSS}}

\def\AAA{\mathfrak{A}}

\def\FFF{\mathfrak{F}}

\def\P{\mathcal{P}}

\def\V{\EuScript{V}}
\def\CCastCat{C^\ast\EuScript{C}\mathsf{at}_{\bbC}}
\def\CConv{\EuScript{C}\mathsf{onv}}

\def\colim{\mathrm{colim}}
\def\hocolim{\mathrm{hocolim}}
\def\bilim{\mathrm{bilim}}

\def\pt{\mathrm{pt}}

\newcommand\und[1]{\underline{#1}}

\DeclareMathOperator{\cdiamond}{\diamond}

\DeclareMathOperator*{\Motimes}{\text{\raisebox{0.25ex}{\scalebox{0.8}{$\bigotimes$}}}}

\newcommand{\perpline}[3]{%
  \begin{tikzcd}[
    baseline={([yshift=-axis_height]mycd)},
    ampersand replacement=\&,
    every matrix/.append style={
      name=mycd,
      nodes={inner sep=0pt, outer sep=2pt}
      },
    row sep=0pt, column sep=12pt
    ]
    \& #2 \&\\
    #1\ar[to=mycd-1-2.west]\& \scriptstyle{\text{$\perp$}} \& #3 \ar[to=mycd-1-2.east]
  \end{tikzcd}%
  }

\def\sk{\vspace{2mm}}

\makeatletter
\let\@fnsymbol\@alph
\makeatother

\title{%
Prefactorization algebras of superselection sectors 
}

\author{%
Marco Benini$^{1,2,a}$, Victor Carmona$^{3,b}$\ and\ Alexander Schenkel$^{4,c}$\vspace{4mm}\\
{\small ${}^1$ Dipartimento di Matematica, Dipartimento di Eccellenza 2023-27, Universit\`a di Genova,}\\
{\small Via Dodecaneso 35, 16146 Genova, Italy.}\vspace{2mm}\\
{\small ${}^2$ INFN, Sezione di Genova,}\\
{\small Via Dodecaneso 33, 16146 Genova, Italy.}\vspace{2mm}\\
{\small ${}^3$ Max Planck Institut f\"ur Mathematik in den Naturwissenschaften,}\\
{\small Inselstra\ss e 22, 04103 Leipzig, Germany.}\vspace{2mm}\\
{\small ${}^4$ Dipartimento di Matematica, Universit{\`a} di Trento and INFN-TIFPA,}\\
{\small Via Sommarive 14, 38123 Povo (Trento), Italy.}\vspace{4mm}\\
{\small \begin{tabular}{ll}
Email: & ${}^a$~\href{mailto:marco.benini@unige.it}{\texttt{marco.benini@unige.it}}\\
& ${}^b$~\href{mailto:victor.carmona@mis.mpg.de}{\texttt{victor.carmona@mis.mpg.de}}\\
& ${}^c$~\href{mailto:alexander.schenkel@unitn.it}{\texttt{alexander.schenkel@unitn.it}}
\vspace{2mm}
\end{tabular}
}
}

\date{April 2026}

\begin{document}

\maketitle

\begin{abstract}
\noindent This paper revisits the theory of superselection sectors in algebraic quantum field theory from the modern perspective of prefactorization algebras. Under the standard assumptions of Haag duality and a locally faithful vacuum representation, it is shown that every AQFT defined over a filtered orthogonal category of spacetime regions, satisfying some mild additional geometric hypotheses, has an associated locally constant $C^\ast$-categorical prefactorization algebra of superselection sectors over the same orthogonal category. In the case of double cones in the $(n\geq 2)$-dimensional Minkowski spacetime, our approach provides a conceptual explanation for the well-known $\mathbb{E}_n$-monoidal structure on the $C^\ast$-category of superselection sectors as the combination, through Dunn-Lurie additivity $\mathbb{E}_n\simeq \mathbb{E}_1\otimes \mathbb{E}_{n-1}$, of the familiar $\mathbb{E}_1$-monoidal structure from Haag duality and an $\mathbb{E}_{n-1}$-monoidal structure from Lorentzian geometry. A refinement of our results to equivariant contexts under a discrete group $G$ is also provided.
\end{abstract}
\vspace{-1mm}

\paragraph*{Keywords:} prefactorization algebras, $C^\ast$-categories, higher algebra, algebraic quantum field theory, superselection sectors, equivariance
\vspace{-2mm}

\paragraph*{MSC 2020:} 81Txx, 18Nxx
\vspace{-2mm}

\renewcommand{\baselinestretch}{0.85}\normalsize
\tableofcontents
\renewcommand{\baselinestretch}{1.0}\normalsize

\newpage


\section{Introduction and summary}
The theory of superselection sectors is a prominent and time-honored topic 
in algebraic quantum field theory (AQFT) whose principal goal is 
to extract the topological charges associated 
with a (not necessarily topological) quantum field theory. In the original
Doplicher-Haag-Roberts approach \cite{DHR}, such superselection sectors
are described by a specific class of representations of the AQFT $\AAA$
which behave like topological objects as they are required to satisfy
localizability and transportability properties relative to the underlying spacetime
geometry and a fixed vacuum representation $\pi_0$. A key result in this context is that
such representations form a monoidal $C^\ast$-category $\SSS_{(\AAA,\pi_0)}$
which, depending on the dimension of the underlying spacetime,
can be endowed further with the structure of a braided or symmetric monoidal $C^\ast$-category.
See also \cite{Roberts1,Roberts2,Halvorson,SSS} for reviews and 
more modern presentations of the theory of superselection sectors.
\sk

The aim of the present paper is to revisit the theory of superselection
sectors from a modern perspective using contemporary and powerful techniques
from algebraic topology. It is by now well-understood that various braided
or symmetric monoidal categories arising in applications, especially
in quantum field theory, have a direct and interesting topological 
origin rooted in the spaces of mutually disjoint embeddings of families of $n$-dimensional disks into a single disk.
Such structures are known as \textit{little $n$-disk operads}
$\bbE_n$, see e.g.\ \cite{Horel} and \cite{LurieHA}, which, when represented on $1$-categories, 
give rise to monoidal (=$\bbE_1$), braided monoidal (=$\bbE_2$) and symmetric monoidal (=$\bbE_{\geq 3}$) categories.
It is therefore natural to expect that the braided or symmetric monoidal structures on the
$C^\ast$-categories of superselection sectors in AQFT arise through a similar 
topological mechanism. We shall explain in this paper that this is indeed the case.
\sk

To establish a link between the more structured, and therefore also more rigid, Lorentzian geometric regions featuring
in AQFT, such as double cones in the Minkowski spacetime, and the 
topologically behaving little disks of the $\bbE_n$-operad, we use the framework
of \textit{prefactorization algebras} which originated in the works of Costello and Gwilliam
\cite{CG1,CG2}. Prefactorization algebras are a very general class of algebraic objects of geometric origin,
covering a diverse range of contexts, reaching from topological over complex
to Riemannian and Lorentzian geometries. Most importantly for the context of this paper, 
algebras over the little $n$-disk operad $\bbE_n$ can be identified 
with \textit{locally constant} prefactorization algebras on $\bbR^n$, 
see e.g.\ \cite[Theorem 5.4.5.9]{LurieHA}, and AQFTs admit an equivalent description
in terms of prefactorization algebras over their underlying orthogonal category $\CC^\perp$ of spacetime
regions, see e.g.\ \cite[Theorem 2.9]{BPSWcategorified} and also Example \ref{ex:PFA} in the main text.
Therefore, the theory of prefactorization algebras provides us with precisely the right framework and
tools to search for and establish links between AQFT and $\bbE_n$-algebras.
\sk

The main result of the present paper is Theorem \ref{theo:PFAstructure},
which proves that, under the standard assumptions of Haag duality
and a locally faithful vacuum representation $\pi_0$, the $C^\ast$-categories of superselection sectors of
an AQFT $\AAA\in\AQFT(\CC^\perp)$ over a filtered\footnote{Our reason behind this 
rather restrictive assumption of \textit{filtered} orthogonal categories $\CC^\perp$ is 
that it provides us with a direct and simple description of superselection sectors
in terms of localizable and transportable $\ast$-endomorphisms \cite{DHR,Halvorson,SSS}.
It is currently not clear to us if the results of the present paper 
can be generalized to superselection sector theory beyond this filtered context, 
which is usually formulated in terms of Roberts' net cohomology, see e.g.\ \cite{Roberts2,BrunettiRuzzi}.} 
orthogonal category $\CC^\perp$ (satisfying some mild additional geometric hypotheses) 
carry the structure of a locally constant prefactorization algebra 
\begin{flalign}\label{eqn:intro}
\SSS_{(\AAA,\pi_0)}\,\in\,\Alg_{\P_{\CC^\perp}}^{\mathrm{l.c.}}\!\Big(\Alg_{\mathsf{uAs}}\big(\CastCat\big)\Big)
\end{flalign}
over the same orthogonal category $\CC^\perp$. This prefactorization algebra 
assigns to every spacetime region $U\in \CC^\perp$ the (strict) monoidal $C^\ast$-category
$\SSS_{(\AAA,\pi_0)}(U)\in \Alg_{\mathsf{uAs}}\big(\CastCat\big)$ consisting
of all superselection sectors which are also strictly localized in the region $U$.
Local constancy means that the monoidal $\ast$-functor $\SSS_{(\AAA,\pi_0)}(U)\stackrel{\sim}{\longrightarrow} 
\SSS_{(\AAA,\pi_0)}(V)$ associated with any embedding $U\to V$ of spacetime regions in $\CC^\perp$ is an equivalence,
even if $V$ is geometrically larger than $U$, 
witnessing the intrinsic topological behavior of superselection sectors. 
It is worthwhile to highlight that the $C^\ast$-categories of superselection sectors in
\eqref{eqn:intro} carry two distinct but compatible types of algebraic structures, namely a 
locally constant prefactorization algebra structure $\Alg_{\P_{\CC^\perp}}^{\mathrm{l.c.}}$ 
and an object-wise monoidal structure $\Alg_{\mathsf{uAs}}$. The former is of a geometric
origin in the underlying orthogonal category $\CC^\perp$ of spacetime regions, while the
latter is of an analytic origin in Haag duality and its associated standard 
monoidal structure on the category of sectors. Our main result admits a relatively straightforward
generalization and refinement to the case in which the AQFT $\AAA$ is endowed with 
an additional $G$-equivariance structure, for $G$ a discrete group, 
which induces a $G$-equivariance structure
on both the associated locally constant prefactorization algebra 
of all superselection sectors and that of covariant superselection sectors, 
see Theorems \ref{theo:SSSeqvPFA} and \ref{theo:covSSSeqvPFA}.
This provides an alternative perspective on and implementation of 
the recent equivariant superselection sector theory in \cite{CRV}.
\sk

As a concrete application, we will consider in Section \ref{sec:examples}
the typical example of AQFTs defined on double cones in the $(n\geq 2)$-dimensional
Minkowski spacetime $\bbM^n$ and provide an elegant topological argument
explaining why, in this simple geometric context, our locally constant 
prefactorization algebras of superselection sectors 
from Theorem \ref{theo:PFAstructure} admit an equivalent
yet simpler description in terms of $\bbE_n$-monoidal $C^\ast$-categories.
The origin of this equivalence is rooted in the homotopy type of the 
configuration spaces of causally disjoint points in double cones in
the Minkowski spacetime, providing a direct link
between the Lorentzian geometry of Minkowski spacetime and $\bbE_n$-operads.
Such a link was established earlier in \cite{Grady} by means of a construction 
that is, however, not intrinsically Lorentzian in nature.
\sk

The paper is organized as follows: In Section \ref{sec:prelim} we briefly recall
and summarize some basic aspects of the operadic description of
prefactorization algebras and AQFTs, including their $G$-equivariant generalizations.
In Section \ref{sec:SSS} we show that the standard definitions and 
constructions used in the theory of superselection sectors
assemble very naturally into a locally constant prefactorization algebra 
structure on the monoidal $C^\ast$-categories of strictly localized superselection sectors, 
culminating in our main Theorem \ref{theo:PFAstructure}.
In Section \ref{sec:dEqvSSS} we provide a $G$-equivariant generalization
of these constructions and results, leading to Theorems \ref{theo:SSSeqvPFA} and \ref{theo:covSSSeqvPFA}.
The purpose of Section \ref{sec:examples} is to apply our general framework 
to the typical example of AQFTs defined on double cones in the $(n\geq 2)$-dimensional
Minkowski spacetime $\bbM^n$, which leads to a conceptual and topological explanation
for the braided monoidal (for $n=2$)
or symmetric monoidal (for $n\geq 3$) structure on the $C^\ast$-category
of superselection sectors on double cones in the $n$-dimensional
Minkowski spacetime $\bbM^n$. Appendix \ref{app:technical}
provides the necessary technical arguments about $\infty$-categories
and $\infty$-operads which are required to prove the results of Section \ref{sec:examples}.


\section{\label{sec:prelim}Preliminaries}
In this section we provide a concise recapitulation
of the operadic description of prefactorization algebras 
and algebraic quantum field theories (AQFTs). We refer
the reader to \cite{CG1,CG2,BSWoperad,BPSWcategorified,Carmona,Yau}
for more detailed expositions.
\sk

Recall that an \textit{orthogonal category} $\CC^\perp := (\CC,\perp)$
is a pair consisting of a small category $\CC$ 
and an orthogonality relation $\perp$, which is
a chosen subset of the set of cospans $\{f_1 :  U_1\to V \leftarrow U_2 : f_2\}$
in $\CC$ that is closed under transposition and under
pre- and post-compositions with $\CC$-morphisms. 
We write $f_1\perp f_2$ whenever a cospan belongs to $\perp$. Orthogonal categories are an abstraction 
of categories of spacetimes which are endowed with a
notion of independent pairs of subspacetimes.
\begin{defi}\label{def:Poperad}
The \textit{prefactorization operad} $\P_{\CC^\perp}$ associated to an
orthogonal category $\CC^\perp$ is the $\Set$-valued colored symmetric operad defined
by the following data:
\begin{enumerate}[(1)]
\item The objects of $\P_{\CC^\perp}$ are the objects of the category $\CC$.

\item The set of operations from a tuple of objects 
$\und{U} :=(U_1,\dots,U_n)\in \CC^n$ to an object $V\in\CC$ is given by
\begin{flalign}
\P_{\CC^\perp}\big(\substack{V \\ \und{U}}\big)\,:=\, \bigg\{ \und{f} := (f_1,\dots,f_n)\in \prod_{i=1}^n \CC(U_i,V)\,:\, f_i\perp f_j~ \forall i\neq j\,\bigg\}\quad.
\end{flalign}
For the empty tuple $\und{U}=()$,
we set $\P_{\CC^\perp}\big(\substack{V \\ ()}\big):=\{\pt_V^{}\}$ to be a singleton.

\item The operadic composition maps 
\begin{subequations}
\begin{flalign}
\gamma \,:\, \P_{\CC^\perp}\big(\substack{V \\ \und{U}}\big)\times\prod_{i=1}^n 
\P_{\CC^\perp}\big(\substack{U_i \\ \und{W_i}}\big)~\xrightarrow{\;\quad\;}~ \P_{\CC^\perp}\big(\substack{V \\ \und{\und{W}}}\big)\quad,
\end{flalign}
where $\und{\und{W}} := (\und{W_1},\dots,\und{W_n})$ denotes the concatenation of tuples, 
are given by the following compositions in the category $\CC$
\begin{flalign}
\gamma \big(\und{f},(\und{g_1},\dots,\und{g_n}) \big)\,:=\, \und{f}\,\und{\und{g}} \,:=\, \big(f_1\,g_{11},\dots, f_1\, g_{1 k_1},\dots,f_n\, g_{n1},\dots,f_n \, g_{n k_n}\big) \quad.
\end{flalign}
\end{subequations}

\item The identity operations are $\id_V \in \P_{\CC^\perp}\big(\substack{V \\ V}\big)$.

\item The permutation actions $\P_{\CC^\perp}(\sigma) : \P_{\CC^\perp}\big(\substack{V \\ \und{U}}\big)
\to \P_{\CC^\perp}\big(\substack{V \\ \und{U}\sigma}\big)$, for $\sigma\in\Sigma_n$, are given by
\begin{flalign}
\P_{\CC^\perp}(\sigma)(\und{f}) \,:=\, \und{f}\sigma \,:=\, (f_{\sigma(1)},\dots, f_{\sigma(n)})\quad.
\end{flalign}
\end{enumerate}
\end{defi}
\begin{defi}\label{def:PFA}
For $\CC^\perp$ an orthogonal category and $\TT$ a symmetric monoidal category, 
the category of \textit{$\TT$-valued prefactorization algebras over $\CC^\perp$}
is defined as the category $\Alg_{\P_{\CC^\perp}}^{}\!\big(\TT\big)$
of algebras over the operad $\P_{\CC^\perp}$ from Definition \ref{def:Poperad}
with values in $\TT$.
\end{defi}

\begin{rem}\label{rem:PFA}
Unpacking this definition, one finds that 
a prefactorization algebra $\FFF\in \Alg_{\P_{\CC^\perp}}^{}\!\big(\TT\big)$
consists of the following data:
\begin{itemize}
\item[(1)] For every object $U\in\CC$, an object $\FFF(U)\in\TT$.

\item[(2)] For every operation $\und{f} = (f_1,\dots,f_n) : \und{U}\to V$ in $\P_{\CC^{\perp}}$, a $\TT$-morphism
\begin{flalign}
\xymatrix{
\FFF(\und{f}) \,:\, \FFF(\und{U})\,:=\,\bigotimes\limits_{i=1}^n \FFF(U_i)~\ar[r]
&~\FFF(V)
}\quad.
\end{flalign}
\end{itemize}
These data have to satisfy $\FFF(\id_V) = \id_{\FFF(V)}$, for all $V\in \CC$,
\begin{equation}
\begin{tikzcd}[column sep=large]
\ar[rd,"\FFF(\und{f}\,\und{\und{g}})"'] \FFF(\und{\und{W}}) \ar[r,"\FFF(\und{\und{g}})"] & 
\FFF(\und{U}) \ar[d, "\FFF(\und{f})"]\\
 &  \FFF(V)
\end{tikzcd}
\quad,
\end{equation}
for all composable operations $\und{\und{g}} : \und{\und{W}}\to\und{U}$ and $\und{f} : \und{U}\to V$ in
$\P_{\CC^{\perp}}$, and
\begin{equation}
\begin{tikzcd}[column sep=small]
\ar[rd,"\FFF(\und{f})"'] \FFF(\und{U}) \ar[rr,"\text{permute}","\cong"'] && \FFF(\und{U}\sigma) \ar[dl, "\FFF(\und{f}\sigma)"]\\
 &  \FFF(V)&
\end{tikzcd}
\quad,
\end{equation}
for all operations $\und{f} : \und{U}\to V$ in $\P_{\CC^{\perp}}$ and all permutations $\sigma\in\Sigma_n$.
\sk

A morphism $\zeta : \FFF\to \FFF^\prime$ in $\Alg_{\P_{\CC^\perp}}^{}\!\big(\TT\big)$ is a family
of $\TT$-morphisms $\zeta_U : \FFF(U)\to\FFF^\prime(U)$, for all $U\in \CC$, which is compatible
with the structure maps, i.e.\
\begin{equation}
\begin{tikzcd}[column sep=large]
\ar[d,"\FFF(\und{f})"'] \FFF(\und{U}) \ar[r,"\zeta_{\und{U}}"] & \FFF^\prime(\und{U}) \ar[d, "\FFF^\prime(\und{f})"]\\
\FFF(V) \ar[r,"\zeta_V"']&  \FFF^\prime(V)
\end{tikzcd}
\quad,
\end{equation}
for all operations $\und{f} : \und{U}\to V$ in $\P_{\CC^{\perp}}$.
\end{rem}

\begin{ex}\label{ex:PFA}
We would like to highlight that Definition \ref{def:PFA} is very 
general as it captures a variety of concepts appearing throughout the literature.
For instance:
\begin{itemize}
\item[(1)] Choosing for $\TT=\Ch_\bbC$ the symmetric monoidal category of cochain complexes of vector spaces
and for $\CC^\perp=\mathbf{Open}(M)^\perp$ the poset of open subsets of a manifold $M$ with orthogonality
relation determined by disjointness of subsets, i.e.\ 
$(U_1\subseteq V)\perp(U_2\subseteq V)~\Longleftrightarrow~U_1\cap U_2 =\varnothing$, 
one recovers the prefactorization algebras of Costello and Gwilliam \cite{CG1,CG2}.

\item[(2)] Choosing for $\TT= \CastAlg$ the symmetric monoidal category of $C^\ast$-algebras
(with the maximal tensor product, see e.g.\ \cite[Chapter 6.3]{Murphy}), one obtains a category
\begin{flalign}
\Alg_{\P_{\CC^\perp}}^{}\!\big(\CastAlg\big)\,\simeq\, \AQFT(\CC^\perp)
\end{flalign}
which is equivalent to the category of $C^\ast$-algebraic AQFTs over $\CC^\perp$.
The latter is the full subcategory of the functor category 
$\Fun(\CC,\CastAlg)$ on those functors $\AAA: \CC \to \CastAlg$ which are 
$\perp$-commutative, i.e.\ such that 
$\mu \circ \big(\AAA(f_1) \otimes \AAA(f_2)\big) = \mu^\op \circ \big(\AAA(f_1) \otimes \AAA(f_2)\big)$, 
for all $f_1 \perp f_2$ in $\CC^\perp$, where $\mu^{(\op)}$ 
denotes the (opposite) $C^\ast$-algebra multiplication.
This was shown in \cite[Theorem 2.9]{BPSWcategorified} for the case of plain algebras
over a field $\bbK$, but the same argument applies in the context of $C^\ast$-algebras as well.
The key point is that the compatibility between the $\P_{\CC^\perp}^{}$-algebra
structure and the object-wise $C^\ast$-algebra structures implies the $\perp$-commutativity
axiom of AQFT via an Eckmann-Hilton argument.

\item[(3)] Choosing for $\TT=\CastCat$ the symmetric monoidal category of $C^\ast$-categories
(with the maximal tensor product, see e.g.\ \cite{CastCat}), one recovers the concept
of $C^\ast$-categorical prefactorization algebras that appeared recently
in the context of operator algebraic factorization homology \cite{Hataishi}
and topological order in lattice quantum systems \cite{BCNSsectors}. \qedhere
\end{itemize}
\end{ex}

As justified by Example \ref{ex:PFA} (2), we shall work in our present 
paper with the following equivalent definition of AQFTs which 
is expressed in terms of prefactorization algebras.
\begin{defi}\label{def:AQFT}
For $\CC^\perp$ an orthogonal category, the category of
($C^\ast$-algebraic) \textit{AQFTs over $\CC^\perp$} is defined by
\begin{flalign}
\AQFT\big(\CC^\perp\big)\,:=\, \Alg_{\P_{\CC^\perp}}^{}\!\big(\CastAlg\big)\quad.
\end{flalign}
\end{defi}

The above definitions can be extended easily to an equivariant setting.
Let $G$ be a discrete group and denote by $\mathsf{B}G \in\Grpd$
its delooping, i.e.\ the groupoid consisting of a single object 
and morphisms $G$, with identity and composition given by the group structure on $G$.
A \textit{$G$-action on an orthogonal category} is a $2$-functor
$\CC^\perp : \mathsf{B}G\to \Cat^\perp$ to the $2$-category of orthogonal categories,
orthogonal functors and natural transformations. Explicitly, that is the datum
of an orthogonal category $\CC^\perp$ and a family of 
orthogonal functors $\{\alpha_g : \CC^\perp\to \CC^\perp\}_{g\in G}$
which satisfy $\alpha_e = \id_{\CC^\perp}$, for the unit element $e\in G$, and
$\alpha_{g^\prime}\,\alpha_g = \alpha_{g^\prime g}$, for all $g,g^\prime\in G$.
Using the elementary fact that the assignment of the prefactorization operads 
from Definition \ref{def:Poperad} is canonically $2$-functorial $\P_{(-)}: \Cat^\perp\to \Op$,
and so is the assignment $\Alg_{(-)}\!\big(\TT\big) : \Op^\op\to \CAT$ of the categories of operad algebras,
we can state the following definition.
\begin{defi}\label{def:eqvPFA}
For $\CC^\perp : \mathsf{B}G\to \Cat^\perp$ an orthogonal category with $G$-action
and $\TT$ a symmetric monoidal category, 
the category of \textit{$\TT$-valued $G$-equivariant prefactorization algebras
over $\CC^\perp$} is defined as the bicategorical limit
\begin{flalign}
\Alg_{\P_{\CC^\perp}}^G\!\big(\TT\big) \,:=\,
\bilim\bigg(
\xymatrix{
\big(\mathsf{B}G\big)^\op \ar[r]^-{\CC^\perp}~&~\big(\Cat^\perp\big)^\op \ar[rr]^-{\Alg_{\P_{(-)}}\!(\TT)}~&&~\CAT
}\bigg)
\end{flalign}
in the $2$-category $\CAT$ of categories, functors and natural transformations.
\end{defi}

\begin{rem}\label{rem:eqvPFA}
Unpacking this definition, one finds that 
a $G$-equivariant prefactorization algebra $\FFF\in \Alg_{\P_{\CC^\perp}}^G\!\big(\TT\big)$
consists of the following data:
\begin{itemize}
\item[(1)] A prefactorization algebra $\FFF\in \Alg_{\P_{\CC^\perp}}^{}\!\big(\TT\big)$ over the underlying
orthogonal category $\CC^\perp$, see also Remark \ref{rem:PFA} for an explicit description.

\item[(2)] For every $g\in G$, an $\Alg_{\P_{\CC^\perp}}^{}\!\big(\TT\big)$-isomorphism
\begin{subequations}
\begin{flalign}
\xymatrix{
\Psi_g\,:\, \FFF \ar[r]^-{\cong}~&~\alpha_g^\ast(\FFF)
}
\end{flalign}
to the prefactorization algebra defined by 
\begin{flalign}
\xymatrix{
\alpha_g^\ast(\FFF) \,:\,\P_{\CC^\perp} \ar[r]^-{\P_{\alpha_g}}~&~\P_{\CC^\perp}\ar[r]^-{\FFF}~&~\TT
}\quad.
\end{flalign}
\end{subequations}
\end{itemize}
These data have to satisfy $\Psi_e = \id$, for the unit element $e\in G$, and
\begin{equation}
\begin{tikzcd}
\FFF \ar[r,"\Psi_g"] \ar[d,"\Psi_{g^\prime g}"'] & \alpha_g^\ast(\FFF)\ar[d, "\alpha_{g}^\ast(\Psi_{g^\prime})"]\\
\alpha_{g^\prime g}^\ast(\FFF)\ar[r,equal] &  \alpha_g^\ast \alpha_{g^\prime}^\ast(\FFF)
\end{tikzcd}
\quad,
\end{equation}
for all $g,g^\prime\in G$.
\sk

A morphism $\zeta : (\FFF,\Psi)\to (\FFF^\prime,\Psi^\prime)$ in $\Alg_{\P_{\CC^\perp}}^G\!\big(\TT\big)$ 
is an $\Alg_{\P_{\CC^\perp}}^{}\!\big(\TT\big)$-morphism $\zeta : \FFF\to\FFF^\prime$, see also Remark 
\ref{rem:PFA} for an explicit description, which satisfies
\begin{equation}
\begin{tikzcd}[column sep=large]
\ar[d,"\Psi_g"'] \FFF \ar[r,"\zeta"] & \FFF^\prime \ar[d, "\Psi^\prime_g"]\\
\alpha_g^\ast(\FFF) \ar[r,"\alpha_g^\ast(\zeta)"']&  \alpha_g^\ast(\FFF^\prime)
\end{tikzcd}
\quad,
\end{equation}
for all $g\in G$.
\end{rem}

Analogously to Definition \ref{def:AQFT}, this definition can be specialized to the context of AQFT.
\begin{defi}\label{def:eqvAQFT}
For $\CC^\perp : \mathsf{B}G\to \Cat^\perp$ an orthogonal category with $G$-action, 
the category of ($C^\ast$-algebraic) \textit{$G$-equivariant AQFTs over $\CC^\perp$} is defined by
\begin{flalign}
\AQFT^G\big(\CC^\perp\big)\,:=\, \Alg_{\P_{\CC^\perp}}^{G}\!\big(\CastAlg\big)\quad.
\end{flalign}
\end{defi}


\section{\label{sec:SSS}Prefactorization algebras of superselection sectors}
The aim of this section is to prove that, under suitable conditions to be specified
below, the $C^\ast$-categories of localized superselection sectors of an AQFT, see e.g.\ \cite{DHR,Roberts2,SSS},
carry the structure of a locally constant prefactorization algebra taking values in strict monoidal $C^\ast$-categories. 
Throughout this section we fix an orthogonal category $\CC^\perp$ 
and assume that its underlying category $\CC$ is filtered.

\paragraph{Superselection sectors:} 
Given any AQFT over $\CC^\perp$ as in Definition \ref{def:AQFT}, i.e.\
\begin{flalign}
\AAA\,\in\,\AQFT\big(\CC^\perp\big)\quad,
\end{flalign}
our filteredness assumption on $\CC^\perp$ implies that its representation category
(in the sense of \cite{BrunettiRuzzi}) is equivalent to the category of
representations of the associated universal $C^\ast$-algebra
\begin{flalign}\label{eqn:universal}
\AAA_{\star}\,:=\,\colim\Big(
\AAA\big\vert \,:\,\CC\,\longrightarrow\,\CastAlg\Big)\,\in\,\CastAlg\quad,
\end{flalign}
which is defined as the colimit of the restriction of $\AAA$ 
to the category $\CC\subseteq \P_{\CC^\perp}$ of $1$-ary operations.
Concretely, the representation category of $\AAA$ is given by
\begin{flalign}
\Rep_{\AAA}\,:=\,\begin{cases}
\mathsf{Obj}:\, &\, \text{$\ast$-homomorphisms
$\pi:\AAA_{\star}\to \mathrm{B}(H)$ to the $C^\ast$-algebra}\\
\,&\,\text{of bounded operators on a complex Hilbert space $H$}\sk\\
\mathsf{Mor}:\,&\,\text{$T : \pi\to \pi^\prime$ are bounded operators $T : H\to H^\prime$}\\ 
\,&\, \text{such that $T\circ \pi(-) = \pi^\prime(-)\circ T$}
\end{cases}\qquad.
\end{flalign}
Note that $\Rep_\AAA$ is naturally a $C^\ast$-category with involution 
$(T:H\to H^\prime) \mapsto (T^\ast: H^\prime\to H)$ 
given by taking adjoints of bounded operators
and norm given by the operator norm $\vert\!\vert T\vert\!\vert$.
Furthermore, let us observe that any object $\pi \in \Rep_\AAA$
defines, via restriction along the canonical morphisms
$\iota_U : \AAA(U)\to \AAA_\star$ to the colimit \eqref{eqn:universal}, 
representations of the local $C^\ast$-algebras $\pi_U:= \pi\,\iota_U : \AAA(U)\to \mathrm{B}(H)$,
for all $U\in \CC$. These local representations are compatible with each other
in the sense that, for all morphisms $f: U\to \widetilde{U}$ in $\CC$, the diagrams
\begin{equation}
\begin{tikzcd}
&\mathrm{B}(H)& \\
\ar[ru, "\pi_U"] \AAA(U)\ar[rr,"\AAA(f)"'] && \AAA(\widetilde{U})\ar[lu,"\pi_{\widetilde{U}}"']
\end{tikzcd}
\end{equation}
in $\CastAlg$ commute.
\sk

Superselection sectors \cite{DHR,Roberts2,SSS} are a special class
of representations which are defined relative to a given reference
representation $\pi_0\in \Rep_\AAA$, often called the `vacuum representation'.
\begin{defi}\label{def:SSSglobal}
Let $\AAA\in \AQFT\big(\CC^\perp\big)$ be an AQFT over a 
filtered orthogonal category $\CC^\perp$ and $\big(\pi_0:\AAA_\star\to \mathrm{B}(H_0)\big)\in \Rep_\AAA$
a choice of representation. The $C^\ast$-category
of \textit{superselection sectors of $\AAA$ relative to $\pi_0$}
is defined as the full $C^\ast$-subcategory
\begin{flalign}
\SSS_{(\AAA,\pi_0)}\,\subseteq\, \Rep_\AAA
\end{flalign}
consisting of all objects $\big(\pi:\AAA_\star\to \mathrm{B}(H)\big)\in \Rep_\AAA$ which satisfy
the following localizability conditions: For each $U\in \CC$,
there exists a unitary isomorphism $v:H\to H_0$ such that
\begin{flalign}
v\circ \pi_{U^\prime}(-)\,=\, \pi_{0 U^\prime}(-)\circ v\quad,
\end{flalign}
for all orthogonal cospans $(U^\prime \to V)\perp (U\to V)$ in $\CC^\perp$.
\end{defi}

For our purposes, it will be convenient to consider superselection sectors
that are not only localizable in the sense of Definition \ref{def:SSSglobal},
but furthermore are \textit{strictly localized} for some fixed choice of 
object $U\in\CC$.
\begin{defi}\label{def:SSSlocal}
For each $U\in\CC$, we define the $C^\ast$-category
of \textit{strictly $U$-localized superselection sectors of $\AAA$ relative to $\pi_0$}
as the full $C^\ast$-subcategory
\begin{subequations}\label{eqn:SSSlocalintoglobal}
\begin{flalign}
\SSS_{(\AAA,\pi_0)}(U)\,\subseteq\,\SSS_{(\AAA,\pi_0)}
\end{flalign}
consisting of all objects $\big(\pi:\AAA_\star\to \mathrm{B}(H_0)\big)
\in \SSS_{(\AAA,\pi_0)}$ which share the same Hilbert
space with $\pi_0: \AAA_\star\to \mathrm{B}(H_0)$ and satisfy the strict localization condition
\begin{flalign}
\pi_{U^\prime}\,=\, \pi_{0 U^\prime}\quad,
\end{flalign}
\end{subequations}
for all orthogonal cospans $(U^\prime \to V)\perp (U\to V)$ relative
to the fixed object $U\in\CC$.
\end{defi}

\begin{lem}\label{lem:SSSproperties}
\begin{itemize}
\item[(i)] For every morphism $f:U\to \widetilde{U}$ in $\CC$, there is
a sequence of full $C^\ast$-subcategory inclusions
\begin{flalign}
\SSS_{(\AAA,\pi_0)}(U)\,\subseteq\,\SSS_{(\AAA,\pi_0)}(\widetilde{U})\,\subseteq\, \SSS_{(\AAA,\pi_0)}\quad.
\end{flalign}

\item[(ii)] For every $U\in\CC$, the inclusion 
$\SSS_{(\AAA,\pi_0)}(U)\subseteq\SSS_{(\AAA,\pi_0)}$
from Definition \ref{def:SSSlocal}
is a unitary equivalence of $C^\ast$-categories.
\end{itemize}
\end{lem}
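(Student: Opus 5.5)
For part (i), the second inclusion $\SSS_{(\AAA,\pi_0)}(\widetilde{U})\subseteq\SSS_{(\AAA,\pi_0)}$ is immediate from Definition \ref{def:SSSlocal}. For the first inclusion, I take $\pi\in\SSS_{(\AAA,\pi_0)}(U)$, so $\pi$ acts on $H_0$ and is a superselection sector. It then suffices to check the strict localization condition relative to $\widetilde{U}$. Let $(g':U'\to V)\perp(g:\widetilde{U}\to V)$ be an orthogonal cospan. Since $\perp$ is closed under pre-composition, $g'\perp g f$, which is an orthogonal cospan $(U'\to V)\perp(U\to V)$ relative to $U$. Strict $U$-localization therefore gives $\pi_{U'}=\pi_{0U'}$. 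Fullness of both inclusions holds because all three categories are full subcategories of $\Rep_\AAA$.

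For part (ii), the inclusion is fully faithful since it is the inclusion of a full $C^\ast$-subcategory. It remains to show that it is essentially surjective up to \emph{unitary} isomorphism. Given $\pi\in\SSS_{(\AAA,\pi_0)}$ on a Hilbert space $H$, Definition \ref{def:SSSglobal} applied to the fixed $U$ provides a unitary $v:H\to H_0$ with $v\,\pi_{U'}(-)=\pi_{0U'}(-)\,v$ for all orthogonal cospans $(U'\to V)\perp(U\to V)$. I would define the transported representation
\[
\rho\,:=\,\Ad_v\circ\pi \,=\, v\,\pi(-)\,v^\ast\,:\,\AAA_\star\longrightarrow \mathrm{B}(H_0)\quad.
\]
This is a $\ast$-homomorphism, and $v:\pi\to\rho$ is a unitary intertwiner. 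Its local restrictions are $\rho_{U'}=v\,\pi_{U'}(-)\,v^\ast$, so the defining property of $v$ gives $\rho_{U'}=\pi_{0U'}$ for every $U'$ orthogonal to $U$. This is exactly the strict localization condition.

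The remaining point is that $\rho$ is again a superselection sector, i.e.\ it satisfies localizability for every $W\in\CC$. For each $W$, pick the unitary $w:H\to H_0$ that localizes $\pi$ in $W$. Then $w v^\ast:H_0\to H_0$ is a unitary that intertwines $\rho_{W'}$ with $\pi_{0W'}$ for all $W'$ orthogonal to $W$. Hence $\rho\in\SSS_{(\AAA,\pi_0)}(U)$, and $\pi\cong\rho$ unitarily.

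I expect no serious obstacle. The only point needing care is to note that localizability is invariant under unitary equivalence, as shown in the previous paragraph, so that the transported representation $\rho$ stays inside $\SSS_{(\AAA,\pi_0)}$.
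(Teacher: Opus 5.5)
Your proposal is correct and follows essentially the same argument as the paper. For (i) you use closure of $\perp$ under pre-composition, and for (ii) you conjugate $\pi$ by the localizing unitary $v$ to obtain a strictly $U$-localized representation that is unitarily isomorphic to $\pi$. Your additional check that $\Ad_v\circ\pi$ still satisfies localizability (via the unitaries $w v^\ast$) is a detail the paper leaves implicit, and it is correct.
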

\begin{proof}
Item (i): Given any $\pi\in \SSS_{(\AAA,\pi_0)}(U)$, we have to show
that $\pi_{U^\prime}= \pi_{0U^\prime}$, for all
$(U^\prime \to V)\perp (\widetilde{U}\to V)$ in $\CC^\perp$. Since
orthogonality relations are by definition closed under pre-compositions,
we have that $(U^\prime \to V)\perp (U\stackrel{f}{\to}\widetilde{U}\to V)$,
hence the claim follows from the hypothesis that $\pi\in \SSS_{(\AAA,\pi_0)}(U)$.
\sk

Item (ii): Given any $\pi\in \SSS_{(\AAA,\pi_0)}$, there exists 
by Definition \ref{def:SSSglobal} a unitary isomorphism $v : H\to H_0$
such that $v\circ \pi_{U^\prime}(-)=\pi_{0 U^\prime}(-)\circ v$,
for all $(U^\prime \to V)\perp (U\to V)$ in $\CC^\perp$. Then
$\pi^v:= v\circ \pi(-)\circ v^\ast : \AAA_\star\to \mathrm{B}(H_0)$ satisfies
by construction the strict $U$-localization condition
$\pi^v_{U^\prime} = \pi_{0U^\prime}$, for all $(U^\prime \to V)\perp (U\to V)$ in $\CC^\perp$,
hence $\pi^v\in \SSS_{(\AAA,\pi_0)}(U)$. Our claim then follows
from the fact that $v : \pi\to \pi^v$ defines a unitary isomorphism in $\SSS_{(\AAA,\pi_0)}$.
\end{proof} 

Using the $2$-out-of-$3$ property for weak equivalences, it follows from Lemma \ref{lem:SSSproperties}
that the inclusion $\SSS_{(\AAA,\pi_0)}(U)\subseteq\SSS_{(\AAA,\pi_0)}(\widetilde{U})$
is a unitary equivalence too, for all morphisms $f:U\to \widetilde{U}$ in $\CC$.
This yields the following intermediate result.
\begin{propo}\label{prop:SSSfunctor}
The $C^\ast$-categories of strictly localized superselection sectors
from Definition \ref{def:SSSlocal} assemble into a functor
\begin{flalign}
\SSS_{(\AAA,\pi_0)}\,:\, \CC~\longrightarrow~\CastCat
\end{flalign}
which is locally constant in the sense that it assigns
to every morphism $f:U\to \widetilde{U}$ in $\CC$
a unitary equivalence of $C^\ast$-categories.
\end{propo}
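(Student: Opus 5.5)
The plan is to define the functor directly from the subcategory inclusions of Lemma \ref{lem:SSSproperties} and then read off local constancy. On objects, set $U\mapsto \SSS_{(\AAA,\pi_0)}(U)\in\CastCat$ as in Definition \ref{def:SSSlocal}. For a morphism $f:U\to\widetilde{U}$ in $\CC$, let $\SSS_{(\AAA,\pi_0)}(f):\SSS_{(\AAA,\pi_0)}(U)\to\SSS_{(\AAA,\pi_0)}(\widetilde{U})$ be the full $C^\ast$-subcategory inclusion from Lemma \ref{lem:SSSproperties}~(i). This is a $\ast$-functor: it is the identity on objects and on morphism spaces, which are bounded intertwiners inside $\Rep_\AAA$, so it preserves composition, involution and norm.

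Functoriality should be immediate. Each $\SSS_{(\AAA,\pi_0)}(U)$ is a full subcategory of the fixed $C^\ast$-category $\SSS_{(\AAA,\pi_0)}$, and every $\SSS_{(\AAA,\pi_0)}(f)$ is a literal inclusion of subcategories. Hence $\SSS_{(\AAA,\pi_0)}(\id_U)$ is the identity, and the composite $\SSS_{(\AAA,\pi_0)}(g)\circ\SSS_{(\AAA,\pi_0)}(f)$ is again the inclusion $\SSS_{(\AAA,\pi_0)}(U)\subseteq\SSS_{(\AAA,\pi_0)}(\widehat{U})$, which equals $\SSS_{(\AAA,\pi_0)}(gf)$. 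These identities hold strictly, so no coherence data is needed. One small point is that the inclusion does not depend on $f$ itself, only on the existence of some morphism $U\to\widetilde{U}$. This is harmless, since different parallel morphisms are sent to the same functor.

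For local constancy, consider the commuting triangle of inclusions $\SSS_{(\AAA,\pi_0)}(U)\subseteq\SSS_{(\AAA,\pi_0)}(\widetilde{U})\subseteq\SSS_{(\AAA,\pi_0)}$. By Lemma \ref{lem:SSSproperties}~(ii), both inclusions into $\SSS_{(\AAA,\pi_0)}$ are unitary equivalences. A fully faithful $\ast$-functor is a unitary equivalence exactly when it is unitarily essentially surjective. So the remaining task is to show that every $\pi\in\SSS_{(\AAA,\pi_0)}(\widetilde{U})$ is unitarily isomorphic to some object of $\SSS_{(\AAA,\pi_0)}(U)$. This follows by applying (ii) for $U$ to $\pi$, viewed in $\SSS_{(\AAA,\pi_0)}$: it yields a unitary $v:\pi\to\pi^v$ with $\pi^v\in\SSS_{(\AAA,\pi_0)}(U)$.

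This is the $2$-out-of-$3$ property, made explicit for unitary equivalences. It is also the only step needing care, namely that the unitary isomorphisms provided by (ii) are morphisms in the full subcategories themselves. That holds by fullness, so I expect no real obstacle.
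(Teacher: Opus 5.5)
Your proposal is correct and follows the paper's own argument: you take the functor to be the full subcategory inclusions of Lemma~\ref{lem:SSSproperties}~(i), and you obtain local constancy from Lemma~\ref{lem:SSSproperties}~(ii) by the $2$-out-of-$3$ property. The only difference is that you unpack that step explicitly, as full faithfulness together with unitary essential surjectivity.
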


\paragraph{Monoidal structures:} To endow the functor
$\SSS_{(\AAA,\pi_0)} : \CC\to\CastCat$ from Proposition
\ref{prop:SSSfunctor} with monoidal 
structures, we have to demand additional assumptions
on the orthogonal category $\CC^\perp$,
the AQFT $\AAA\in \AQFT\big(\CC^\perp\big)$ and 
the reference representation $\pi_0\in \Rep_\AAA$. 
\begin{assu}\label{assu:monoidal}
\begin{itemize}
\item[(1)] There exists an orthogonal cospan
$(U^\prime \to V)\perp (U\to V)$ in $\CC^\perp$,
for every $U\in\CC$.

\item[(2)] The local representations $\pi_{0U} : \AAA(U)\to \mathrm{B}(H_0)$
are faithful, for all $U\in\CC$.

\item[(3)] The AQFT $\AAA\in \AQFT\big(\CC^\perp\big)$ satisfies Haag duality
in the reference representation $\pi_0\in\Rep_\AAA$. Explicitly, for every $U\in\CC$,
we have an equality
\begin{flalign}
\pi_{0U}\big(\AAA(U)\big)^{\prime\prime}\,=\, \bigcap_{\scalebox{0.75}{\text{$\perpline{U^\prime}{V}{U}$}}} 
\pi_{0U^\prime}\big(\AAA(U^\prime)\big)^{\prime}
\end{flalign}
of subsets of $\mathrm{B}(H_0)$, where $S^\prime\subseteq \mathrm{B}(H_0)$ denotes the commutant 
of a subset $S\subseteq \mathrm{B}(H_0)$.
\end{itemize}
\end{assu}

In what follows we denote by
\begin{flalign}
\pi_0(\AAA)^{\prime\prime}\,\in\,\AQFT\big(\CC^\perp\big)
\end{flalign}
the AQFT which assigns to $U\in\CC$ the bicommutant 
$\pi_0(\AAA)^{\prime\prime}(U):= \pi_{0U}\big(\AAA(U)\big)^{\prime\prime}\subseteq \mathrm{B}(H_0)$
and to a morphism $f:U\to \widetilde{U}$ in $\CC$ the inclusion $\pi_{0U}\big(\AAA(U)\big)^{\prime\prime}
\subseteq \pi_{0\widetilde{U}}\big(\AAA(\widetilde{U})\big)^{\prime\prime}$.\footnote{Note that
this defines a $\perp$-commutative functor $\pi_0(\AAA)^{\prime\prime} : \CC\to \CastAlg$,
which is equivalent to an AQFT in the sense of Definition \ref{def:AQFT}, see
Example \ref{ex:PFA} (2).}
The representation $\pi_0$ defines an $\AQFT\big(\CC^\perp\big)$-morphism 
\begin{subequations}
\begin{flalign}
\pi_0 \,:\, \AAA~\longrightarrow~ \pi_0(\AAA)^{\prime\prime}
\end{flalign} 
and its associated $\CastAlg$-morphism
\begin{flalign}
\pi_{0} \,:\, \AAA_\star~\longrightarrow~ \pi_0(\AAA)^{\prime\prime}_\star
\end{flalign}
\end{subequations}
between the universal $C^\ast$-algebras, all denoted
with a slight abuse of notation by the same symbol.

\begin{lem}\label{lem:toolsmonoidal}
Under Assumption \ref{assu:monoidal}, the following statements hold true:
\begin{itemize}
\item[(i)] Let $T : \pi\to \widetilde{\pi}$ be a morphism
in the $C^\ast$-category $\SSS_{(\AAA,\pi_0)}$ from Definition \ref{def:SSSglobal}
with $\pi \in \SSS_{(\AAA,\pi_0)}(U)$ and $\widetilde{\pi} \in \SSS_{(\AAA,\pi_0)}(\widetilde{U})$,
for some $U,\widetilde{U}\in\CC$. For each
(not necessarily orthogonal) cospan $U\to V\leftarrow \widetilde{U}$
in $\CC$, one has that
\begin{flalign}
T\,\in \,\pi_{0V}\big(\AAA(V)\big)^{\prime\prime}\,\subseteq \, \mathrm{B}(H_0)\quad.
\end{flalign}
In particular, since $\CC$ is assumed to be filtered, one always has that
$T\in \pi_0(\AAA)_\star^{\prime\prime}\subseteq \mathrm{B}(H_0)$.

\item[(ii)] For every $U\in \CC$ and $\pi\in\SSS_{(\AAA,\pi_0)}(U)$, there exists
a factorization
\begin{equation}
\begin{tikzcd}
\AAA_\star \ar[rr, "\pi"] \ar[dr, "\pi"'] && \mathrm{B}(H_0)\\
&\pi_0(\AAA)^{\prime\prime}_\star \ar[ru, hookrightarrow] &
\end{tikzcd}
\quad.
\end{equation}

\item[(iii)] For every $U\in \CC$ and $\pi\in\SSS_{(\AAA,\pi_0)}(U)$, there 
exists a unique extension
\begin{equation}
\begin{tikzcd}
\pi_0(\AAA)^{\prime\prime}_\star \ar[rr,dashed,"\rho"] && \pi_0(\AAA)^{\prime\prime}_\star\\
& \ar[ul,"\pi_0"] \AAA_\star \ar[ur, "\pi"'] &
\end{tikzcd}
\end{equation}
which is locally weakly continuous, i.e.\ $\rho_{\widetilde{U}} := \rho \,\iota_{\widetilde{U}} : 
\pi_{0\widetilde{U}}\big(\AAA(\widetilde{U})\big)^{\prime\prime} \to \pi_0(\AAA)^{\prime\prime}_\star$
is weakly continuous, for all $\widetilde{U}\in\CC$. 
\end{itemize}
\end{lem}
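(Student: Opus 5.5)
The plan is to prove the three items in order, using Haag duality (Assumption \ref{assu:monoidal} (3)) as the main tool throughout. Item (ii) will be deduced from item (i), and item (iii) from (ii) together with local faithfulness (Assumption \ref{assu:monoidal} (2)).

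\emph{Item (i).} Fix a cospan $f:U\to V\leftarrow \widetilde{U}:\widetilde{f}$. By Haag duality for $V$, it suffices to show that $T$ commutes with $\pi_{0V'}(a)$ for every orthogonal cospan $(V'\to W)\perp(V\to W)$ and every $a\in\AAA(V')$.
\begin{itemize}
\item Since $\perp$ is closed under precomposition, we have $(V'\to W)\perp(U\to V\to W)$ and $(V'\to W)\perp(\widetilde{U}\to V\to W)$.
\item Strict localization of $\pi$ in $U$ and of $\widetilde{\pi}$ in $\widetilde{U}$ then gives $\pi_{V'}=\pi_{0V'}=\widetilde{\pi}_{V'}$.
\item Because $T$ intertwines $\pi$ and $\widetilde{\pi}$, we get $T\pi_{0V'}(a)=T\pi_{V'}(a)=\widetilde{\pi}_{V'}(a)T=\pi_{0V'}(a)T$.
\end{itemize}
Hence $T\in\bigcap\pi_{0V'}(\AAA(V'))'=\pi_{0V}(\AAA(V))''$. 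For the final claim, filteredness of $\CC$ supplies some cospan $U\to V\leftarrow\widetilde{U}$. The image of $\pi_{0V}(\AAA(V))''$ in $\mathrm{B}(H_0)$ sits inside the image of $\pi_0(\AAA)''_\star$. Here I will use that the filtered colimit of the inclusions of bicommutants is their norm-closed union in $\mathrm{B}(H_0)$: all transition maps are injective inclusions, so the colimit in $\CastAlg$ is the norm closure of the union, and the canonical map to $\mathrm{B}(H_0)$ is injective.

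\emph{Item (ii).} It suffices to show $\pi_{\widetilde{U}}(\AAA(\widetilde{U}))\subseteq\pi_{0V}(\AAA(V))''$ for some $V$ receiving $\widetilde{U}$, compatibly in $\widetilde{U}$. Take $V$ with $U\to V\leftarrow\widetilde{U}$. For $a\in\AAA(\widetilde{U})$ and $(V'\to W)\perp(V\to W)$, the strict localization $\pi_{V'}=\pi_{0V'}$ holds. The $\perp$-commutativity of $\AAA$ (Example \ref{ex:PFA} (2)), transported through $\pi$, shows that $\pi_W$ of the images of $a$ and $b\in\AAA(V')$ commute. So $\pi_{\widetilde{U}}(a)$ commutes with $\pi_{0V'}(\AAA(V'))$, and Haag duality puts it in $\pi_{0V}(\AAA(V))''$. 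Compatibility with the colimit is automatic because everything lives inside $\mathrm{B}(H_0)$ and the colimit is the norm-closed union. This gives the factorization through $\pi_0(\AAA)''_\star$.

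\emph{Item (iii).} Uniqueness is straightforward. By (2), $\pi_{0\widetilde{U}}$ is an isomorphism onto its weakly dense image $\pi_{0\widetilde{U}}(\AAA(\widetilde{U}))\subseteq\pi_{0\widetilde{U}}(\AAA(\widetilde{U}))''$ (von Neumann bicommutant theorem). So a weakly continuous $\rho_{\widetilde{U}}$ is determined by $\rho_{\widetilde{U}}\pi_{0\widetilde{U}}=\pi_{\widetilde{U}}$, and the colimit property gives uniqueness of $\rho$.

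Existence is the main obstacle: one must show that $\pi_{\widetilde{U}}\circ\pi_{0\widetilde{U}}^{-1}$ extends weakly continuously to the bicommutant. My plan is to use localization at the level of Hilbert spaces. Choose $\widetilde{U}\to V\leftarrow U$ with $V$ containing both. For every $\widehat{U}$ with $\pi$ strictly localized, $\pi$ agrees with $\pi_0$ on the complement. Transporting $\pi$ to a $\widehat{U}$-localized representation via a unitary $v$ (Lemma \ref{lem:SSSproperties} (ii)), with $\widehat{U}$ orthogonal to $\widetilde{U}$ (possible by Assumption \ref{assu:monoidal} (1), after enlarging via filteredness), gives $v\pi_{\widetilde{U}}(-)v^\ast=\pi_{0\widetilde{U}}(-)$. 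Hence $\pi_{\widetilde{U}}(a)=v^\ast\pi_{0\widetilde{U}}(a)v$, and I will define $\rho_{\widetilde{U}}(x):=v^\ast xv$ for $x\in\pi_{0\widetilde{U}}(\AAA(\widetilde{U}))''$. This map is manifestly weakly continuous, and by (ii) together with weak closure of $\pi_{0V}(\AAA(V))''$ it lands in $\pi_0(\AAA)''_\star$. Independence of the choice of $v$ follows from the uniqueness argument above, and this independence yields compatibility of the $\rho_{\widetilde{U}}$ along morphisms. The colimit's universal property then assembles the $\rho_{\widetilde{U}}$ into $\rho$. The delicate point I expect to have to check carefully is the existence of the required $\widehat{U}$ orthogonal to a given $\widetilde{U}$, so that the intertwining relation holds on all of $\AAA(\widetilde{U})$.
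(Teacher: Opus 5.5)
Your proof is correct. Items (i) and (iii) essentially follow the paper. In (i) you use the same precomposition-plus-Haag-duality argument. In (iii) you use, as the paper does, a transport unitary $v$ to write $\pi_{\widetilde U}=\Ad_{v^\ast}\,\pi_{0\widetilde U}$ and extend by $\Ad_{v^\ast}$. The point you flag as delicate is immediate: Assumption~\ref{assu:monoidal}~(1), transposition-closure of $\perp$ and Definition~\ref{def:SSSglobal} give $v$ directly, with no enlargement via filteredness needed.

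The genuine difference is in (ii). The paper derives the factorization from transportability. It picks $v_{\widetilde U}$ moving $\pi$ to a sector localized in some $\widetilde U'\perp\widetilde U$. It then applies (i) to the intertwiner $v_{\widetilde U}$ to place it in a local von Neumann algebra $\pi_{0W}(\AAA(W))''$. Finally it reads off $\pi_{\widetilde U}=v_{\widetilde U}^\ast\,\pi_{0\widetilde U}(-)\,v_{\widetilde U}\in\pi_{0W}(\AAA(W))''$.

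You instead give the classical DHR argument. Take $(V'\to W)\perp(V\to W)$ with $V$ receiving both $U$ and $\widetilde U$. Then $\perp$-commutativity of $\AAA$ pushed through $\pi_W$, together with $\pi_{V'}=\pi_{0V'}$, gives $\pi_{\widetilde U}(a)\in\pi_{0V'}(\AAA(V'))'$, and Haag duality concludes. This makes (ii) independent of Assumption~\ref{assu:monoidal}~(1) and of localizability; it uses only strict localization in $U$.

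The price is paid in (iii). There you must produce $v$ separately, and you must locate the image of the bicommutant under $\Ad_{v^\ast}$ via weak density plus weak closedness of $\pi_{0V}(\AAA(V))''$. The paper gets this range for free because its $v_{\widetilde U}$ already lies in $\pi_{0W}(\AAA(W))''$.

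A further benefit of your write-up is the explicit uniqueness argument, via weak density of $\pi_{0\widetilde U}(\AAA(\widetilde U))$ in its bicommutant. It supplies both independence of the choice of $v$ and compatibility of the $\rho_{\widetilde U}$ along morphisms. The paper leaves both implicit in ``passing to the colimit''.
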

\begin{proof}
Item (i): Since $T : \pi\to \widetilde{\pi}$ is a $\SSS_{(\AAA,\pi_0)}$-morphism,
it satisfies by definition $T\circ \pi_{W}(-) = \widetilde{\pi}_{W}(-)\circ T$,
for all $W\in \CC$. Given any $(V^\prime \to W)\perp(V\to W)$ in $\CC^\perp$,
it follows that $(V^\prime\to W)\perp(U\to V\to W)$
and $(V^\prime\to W)\perp (\widetilde{U}\to V\to W)$
since orthogonality relations are closed under pre-compositions. 
Using the strict localization properties of $\pi \in \SSS_{(\AAA,\pi_0)}(U)$ 
and $\widetilde{\pi} \in \SSS_{(\AAA,\pi_0)}(\widetilde{U})$, we compute
\begin{flalign}
T\circ \pi_{0V^\prime}(-)\,=\,T\circ \pi_{V^\prime}(-)\,=\, \widetilde{\pi}_{V^\prime}(-)\circ T
\,=\, \pi_{0V^\prime}(-)\circ T\quad,
\end{flalign}
hence
\begin{flalign}
T\,\in \,\bigcap_{\scalebox{0.75}{\text{$\perpline{V^\prime}{W}{V}$}}} 
\pi_{0V^\prime}\big(\AAA(V^\prime)\big)^{\prime}\,=\, \pi_{0V}\big(\AAA(V)\big)^{\prime\prime}\quad,
\end{flalign}
where the last step uses Haag duality from Assumption \ref{assu:monoidal} (3).
\sk

Item (ii): For each $\widetilde{U}\in \CC$, 
there exists by Assumption \ref{assu:monoidal} (1) an orthogonal cospan 
$(\widetilde{U}^\prime\to V)\perp (\widetilde{U}\to V)$ in $\CC^\perp$.
By Definition \ref{def:SSSglobal}, there exists a unitary isomorphism $v_{\widetilde{U}} : H_0\to H_0$
such that $v_{\widetilde{U}} : \pi\to \pi^{v_{\widetilde{U}}} := v_{\widetilde{U}}\circ \pi(-)\circ 
v_{\widetilde{U}}^\ast$ with $\pi^{v_{\widetilde{U}}}\in \SSS_{(\AAA,\pi_0)}(\widetilde{U}^\prime)$.
Since $\CC$ is assumed to be filtered, there exists a (not necessarily orthogonal) cospan
$U\to W\leftarrow V$ in $\CC$, hence 
$v_{\widetilde{U}} \in \pi_{0W}\big(\AAA(W)\big)^{\prime\prime}
\subseteq \pi_0(\AAA)^{\prime\prime}_\star\subseteq \mathrm{B}(H_0)$ by item (i) of this lemma.
Using also the strict $\widetilde{U}^\prime$-localization property 
$\pi^{v_{\widetilde{U}}}_{\widetilde{U}}= \pi_{0\widetilde{U}}$, it follows that 
\begin{flalign}\label{eqn:tmpfactor}
\pi_{\widetilde{U}}\,=\, v^\ast_{\widetilde{U}}\circ \pi_{0\widetilde{U}}(-)\circ v_{\widetilde{U}}\,:\, 
\AAA(\widetilde{U})~\longrightarrow~\pi_{0W}\big(\AAA(W)\big)^{\prime\prime}\, 
\subseteq\, \pi_0(\AAA)^{\prime\prime}_\star \,\subseteq \, \mathrm{B}(H_0)\quad.
\end{flalign}
Our claim then follows by passing to the colimit $\AAA_\star$.
\sk

Item (iii): We define, for each $\widetilde{U}\in \CC$, the $\ast$-homomorphism
\begin{equation}
\begin{tikzcd}
\pi_{0\widetilde{U}}\big(\AAA(\widetilde{U})\big) \ar[rr,dashed, "\rho_{\widetilde{U}}"] &&
\pi_0(\AAA)_\star^{\prime\prime}\\
&\ar[ul,"\pi_{0\widetilde{U}}","\simeq" sloped ] \AAA(\widetilde{U})\ar[ru,"\pi_{\widetilde{U}}"'] &
\end{tikzcd}
\quad,
\end{equation}
where the upward-left pointing arrow is an isomorphism as a consequence of our faithfulness
Assumption \ref{assu:monoidal} (2). Using \eqref{eqn:tmpfactor}, one finds
that $\rho_{\widetilde{U}} = v^\ast_{\widetilde{U}}\circ \iota_{\widetilde{U}}(-) \circ v_{\widetilde{U}}$
is given by the adjoint action of a unitary isomorphism $v_{\widetilde{U}}\in 
\pi_{0W}\big(\AAA(W)\big)^{\prime\prime}\subseteq \pi_0(\AAA)_\star^{\prime\prime}$, 
which is weakly continuous and hence admits a unique weakly continuous extension $\rho_{\widetilde{U}} : 
\pi_{0\widetilde{U}}\big(\AAA(\widetilde{U})\big)^{\prime\prime}\to 
\pi_{0W}\big(\AAA(W)\big)^{\prime\prime} \subseteq \pi_0(\AAA)_\star^{\prime\prime}$
to the bicommutant. Our claim then follows
by passing to the colimit $\pi_0(\AAA)_\star^{\prime\prime}$.
\end{proof}

\begin{cor}\label{cor:endo}
Suppose that Assumption \ref{assu:monoidal} is satisfied. Then 
Lemma \ref{lem:toolsmonoidal} provides, for every $U\in\CC$, the following equivalent model for the 
$C^\ast$-category $\SSS_{(\AAA,\pi_0)}(U)$ from Definition \ref{def:SSSlocal}:
\begin{itemize}
\item[$\mathsf{Obj}$:] $\ast$-endomorphisms $\rho : \pi_0(\AAA)_\star^{\prime\prime}\to \pi_0(\AAA)_\star^{\prime\prime}$
satisfying the following properties:
\begin{itemize}
\item[(1)] For each $\widetilde{U}\in\CC$, there exists a unitary isomorphism
$v : H_0\to H_0$ such that
\begin{flalign}
v\circ \rho_{\widetilde{U}^\prime}(-)\,=\, \iota_{\widetilde{U}^\prime}(-)\circ v\quad,
\end{flalign}
for all orthogonal cospans $(\widetilde{U}^\prime\to V)\perp (\widetilde{U}\to V)$ in $\CC^\perp$,
where $\iota_{\widetilde{U}^\prime} : 
\pi_{0\widetilde{U}^\prime}\big(\AAA(\widetilde{U}^\prime)\big)^{\prime\prime}
\to  \pi_0(\AAA)_\star^{\prime\prime}$ denotes the canonical inclusion morphism into the colimit.

\item[(2)] $\rho_{U^\prime} = \iota_{U^\prime}$, for all orthogonal cospans 
$(U^\prime\to V)\perp (U\to V)$ relative to the fixed object $U\in\CC$.
\end{itemize}

\item[$\mathsf{Mor}$:] $T : \rho\to \rho^\prime$  are elements $T\in \pi_{0U}\big(\AAA(U)\big)^{\prime\prime}\subseteq \pi_0(\AAA)_\star^{\prime\prime}\subseteq \mathrm{B}(H_0)$
such that $T\circ\rho(-) = \rho^\prime(-)\circ T$. 
\end{itemize}
\end{cor}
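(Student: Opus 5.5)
We construct an explicit $C^\ast$-functor $\Phi_U$ from $\SSS_{(\AAA,\pi_0)}(U)$ to the category described in the statement and prove that it is an isomorphism of $C^\ast$-categories. This is stronger than an equivalence.

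\emph{Objects.} Given $\pi\in\SSS_{(\AAA,\pi_0)}(U)$, Lemma \ref{lem:toolsmonoidal} (ii) and (iii) provide the unique locally weakly continuous $\ast$-endomorphism $\rho$ of $\pi_0(\AAA)^{\prime\prime}_\star$ with $\rho\,\pi_0=\pi$. We set $\Phi_U(\pi):=\rho$ and check the two conditions.
\begin{itemize}
\item[(1)] Take the unitary $v$ from Definition \ref{def:SSSglobal}, applied to $\widetilde{U}$. Then $v\circ\pi_{\widetilde{U}^\prime}(-)=\pi_{0\widetilde{U}^\prime}(-)\circ v$. Rewritten via $\pi_{\widetilde{U}^\prime}=\rho_{\widetilde{U}^\prime}\circ\pi_{0\widetilde{U}^\prime}$, this is the intertwining relation on the weakly dense subalgebra $\pi_{0\widetilde{U}^\prime}(\AAA(\widetilde{U}^\prime))$. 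Both sides are weakly continuous: $\rho_{\widetilde{U}^\prime}$ by Lemma \ref{lem:toolsmonoidal} (iii), and $\iota_{\widetilde{U}^\prime}$ as an inclusion of operators. So the relation extends to the bicommutant.
\item[(2)] Strict localization gives $\pi_{U^\prime}=\pi_{0U^\prime}$, that is, $\rho_{U^\prime}=\iota_{U^\prime}$ on $\pi_{0U^\prime}(\AAA(U^\prime))$. It extends to the bicommutant by the same density and continuity argument.
\end{itemize}
One point needs care here: the extension in Lemma \ref{lem:toolsmonoidal} (iii) is defined via the factorization through $\pi_{0W}(\AAA(W))^{\prime\prime}$. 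Weak continuity should therefore be read inside $\mathrm{B}(H_0)$, which is legitimate because all of these algebras are concretely represented on $H_0$.

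\emph{Morphisms.} A morphism $T:\pi\to\pi^\prime$ in $\SSS_{(\AAA,\pi_0)}(U)$ is an operator on $H_0$ intertwining $\pi$ and $\pi^\prime$. Lemma \ref{lem:toolsmonoidal} (i), applied to the cospan $U\xrightarrow{\id}U\xleftarrow{\id}U$, gives $T\in\pi_{0U}(\AAA(U))^{\prime\prime}$. The relation $T\pi(-)=\pi^\prime(-)T$ says that $T\rho(-)=\rho^\prime(-)T$ holds on each $\pi_{0\widetilde{U}}(\AAA(\widetilde{U}))$. By local weak continuity and multiplication being separately weakly continuous, it holds on each bicommutant $\pi_{0\widetilde{U}}(\AAA(\widetilde{U}))^{\prime\prime}$. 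These bicommutants are norm-dense in the colimit $\pi_0(\AAA)^{\prime\prime}_\star$ (filtered colimit), and $\ast$-homomorphisms are norm continuous, so the relation holds on all of $\pi_0(\AAA)^{\prime\prime}_\star$. Hence $\Phi_U(T):=T$ is well defined. Composition, identities, adjoints and norms are preserved trivially, since $T$ is the same operator.

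\emph{Inverse.} Given $\rho$ in the model, set $\pi:=\rho\circ\pi_0:\AAA_\star\to\pi_0(\AAA)^{\prime\prime}_\star\subseteq\mathrm{B}(H_0)$.
\begin{itemize}
\item Condition (2) restricted along $\pi_0$ gives $\pi_{U^\prime}=\pi_{0U^\prime}$.
\item Condition (1) gives localizability in the sense of Definition \ref{def:SSSglobal}, so $\pi\in\SSS_{(\AAA,\pi_0)}(U)$.
\item On morphisms, an intertwiner $T$ of $\rho,\rho^\prime$ clearly intertwines $\rho\pi_0$ and $\rho^\prime\pi_0$.
\end{itemize}
Composing with $\Phi_U$: $\pi\mapsto\rho\mapsto\rho\pi_0=\pi$ is immediate.

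\emph{Main obstacle.} The other composite, $\rho\mapsto\rho\pi_0\mapsto\rho$, needs uniqueness of the extension in Lemma \ref{lem:toolsmonoidal} (iii). This requires every $\rho$ in the model to be locally weakly continuous, which condition (1) does not state explicitly. I would try to derive it from (1). For $\widetilde{U}$, choose by Assumption \ref{assu:monoidal} (1) an orthogonal partner $\widetilde{U}^\prime$ of $\widetilde{U}$, and apply condition (1) to $\widetilde{U}^\prime$ in place of $\widetilde{U}$. Then $\widetilde{U}$ is an orthogonal partner of $\widetilde{U}^\prime$, so $\rho_{\widetilde{U}}=v^\ast\iota_{\widetilde{U}}(-)v$ is unitarily implemented and hence weakly continuous. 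Uniqueness of weakly continuous extensions from the dense subalgebras then gives equality.

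If this derivation of local weak continuity fails, the fallback is to add local weak continuity to the object description, as implicitly intended by Lemma \ref{lem:toolsmonoidal} (iii).
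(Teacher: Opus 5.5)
Your proposal is correct and takes the same approach as the paper, which obtains the corollary directly from Lemma \ref{lem:toolsmonoidal}. In both, objects correspond via the unique locally weakly continuous extension $\rho$ with $\rho\,\pi_0=\pi$, and conditions (1)--(2) are the localizability and strict localization conditions rewritten for $\rho$. Morphisms land in $\pi_{0U}(\AAA(U))^{\prime\prime}$ by item (i) of that lemma.

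Your derivation of local weak continuity for an arbitrary $\rho$ in the model is valid: you apply condition (1) to an orthogonal partner $\widetilde{U}^\prime$ of $\widetilde{U}$ to get $\rho_{\widetilde{U}}=v^\ast\,\iota_{\widetilde{U}}(-)\,v$. This is exactly the unitary-implementation trick used in the proof of Lemma \ref{lem:toolsmonoidal} (ii)--(iii), so the fallback is unnecessary.
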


This corollary is the key ingredient to recognize the following standard
monoidal structure on the $C^\ast$-categories of strictly localized superselection sectors, 
see also \cite{DHR,Roberts2,SSS}.
\begin{propo}\label{prop:monoidal}
Suppose that Assumption \ref{assu:monoidal} is satisfied. Then,
for every $U\in\CC$, the $C^\ast$-category $\SSS_{(\AAA,\pi_0)}(U)$ 
from Corollary \ref{cor:endo} can be endowed with a strict monoidal structure
given by the monoidal product $\ast$-functor
\begin{flalign}\label{eqn:monoidalproduct}
\cdiamond\,:\, \SSS_{(\AAA,\pi_0)}(U)\otimes \SSS_{(\AAA,\pi_0)}(U) ~&\longrightarrow~\SSS_{(\AAA,\pi_0)}(U)\quad,\\
\nn \rho\otimes \dot{\rho} ~&\longmapsto~ \rho\cdiamond \dot{\rho} := \rho\,\dot{\rho}\quad,\\
\nn (T:\rho\to \rho^\prime)\otimes(\dot{T} : \dot{\rho}\to \dot{\rho}^\prime)~&\longmapsto~
\big(T\cdiamond\dot{T} := T\circ \rho(\dot{T}) :\rho\cdiamond \dot{\rho}\to
\rho^\prime\cdiamond \dot{\rho}^\prime \big)\quad,
\end{flalign}
and the monoidal unit $\oone:= \id \in \SSS_{(\AAA,\pi_0)}(U)$. These object-wise monoidal structures
are compatible with the functor structure from Proposition \ref{prop:SSSfunctor} and thereby
define a locally constant functor
\begin{flalign}
\SSS_{(\AAA,\pi_0)}\,:\, \CC~\longrightarrow~\Alg_{\mathsf{uAs}}\big(\CastCat\big)
\end{flalign}
to the category of strict monoidal $C^\ast$-categories.
\end{propo}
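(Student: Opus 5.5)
The plan is to verify the claims directly in the endomorphism model of Corollary \ref{cor:endo}, in four steps: closure under composition, the morphism-level product, the strict monoidal axioms, and compatibility with the functor of Proposition \ref{prop:SSSfunctor}. Throughout, $\pi_0(\AAA)^{\prime\prime}(W)$ is identified with its image in $\pi_0(\AAA)^{\prime\prime}_\star\subseteq\mathrm{B}(H_0)$.

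\emph{Objects.} Let $\rho,\dot\rho\in\SSS_{(\AAA,\pi_0)}(U)$. The composite $\rho\dot\rho$ is again a $\ast$-endomorphism of $\pi_0(\AAA)^{\prime\prime}_\star$.
\begin{itemize}
\item[(a)] Strict localization, condition (2), is immediate. For $(U^\prime\to V)\perp(U\to V)$ we have $\rho\dot\rho\,\iota_{U^\prime}=\rho\,\iota_{U^\prime}=\iota_{U^\prime}$.
\item[(b)] For transportability, condition (1), fix $\widetilde U$ and pick unitaries $v,\dot v$ for $\rho,\dot\rho$. Then $\dot v\,\dot\rho(-)\dot v^\ast$ is strictly localized at $\widetilde U$, i.e.\ its restrictions to all $\widetilde U^\prime$ orthogonal to $\widetilde U$ are the canonical inclusions.
\item[(c)] I would like to apply $\rho$ to $\dot v$, so I first need $\dot v\in\pi_0(\AAA)^{\prime\prime}_\star$. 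Under the equivalence of Corollary \ref{cor:endo}, $\dot v$ is an $\SSS_{(\AAA,\pi_0)}$-morphism between a sector strictly localized in $U$ and one strictly localized in $\widetilde U$. Lemma \ref{lem:toolsmonoidal}(i), together with filteredness of $\CC$, then gives $\dot v\in\pi_0(\AAA)^{\prime\prime}_\star$.
\item[(d)] For $a$ in the $\widetilde U^\prime$-algebra, $\rho\dot\rho(a)=\rho(\dot v)^\ast\,\rho(a)\,\rho(\dot v)=\rho(\dot v)^\ast v^\ast a\, v\,\rho(\dot v)$. Hence $w:=v\,\rho(\dot v)$ is the required unitary.
\end{itemize}
The unit $\id$ clearly satisfies both conditions.

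\emph{Morphisms.} For $T:\rho\to\rho^\prime$ and $\dot T:\dot\rho\to\dot\rho^\prime$, I must show that $T\cdiamond\dot T=T\rho(\dot T)$ lies in $\pi_0(\AAA)^{\prime\prime}(U)$ and intertwines $\rho\dot\rho$ with $\rho^\prime\dot\rho^\prime$.
\begin{itemize}
\item[(e)] I claim $\rho$ maps $\pi_0(\AAA)^{\prime\prime}(U)$ into itself. Let $x\in\pi_0(\AAA)^{\prime\prime}(U)$ and let $b$ lie in the algebra of some $U^\prime$ orthogonal to $U$. By $\perp$-commutativity of $\pi_0(\AAA)^{\prime\prime}$ and $\rho\,\iota_{U^\prime}=\iota_{U^\prime}$, we get $\rho(x)b=\rho(xb)=\rho(bx)=b\rho(x)$. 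Haag duality, Assumption \ref{assu:monoidal} (3), then gives $\rho(x)\in\pi_0(\AAA)^{\prime\prime}(U)$.
\item[(f)] Intertwining is the usual computation $T\rho(\dot T)\rho\dot\rho(a)=T\rho(\dot\rho^\prime(a)\dot T)=\rho^\prime\dot\rho^\prime(a)\,T\rho(\dot T)$.
\item[(g)] The same relation $T\rho(x)=\rho^\prime(x)T$ yields the interchange law $(T_2T_1)\cdiamond(\dot T_2\dot T_1)=(T_2\cdiamond\dot T_2)(T_1\cdiamond\dot T_1)$, and preservation of identities.
\item[(h)] It also gives $\ast$-compatibility: $(T\cdiamond\dot T)^\ast=\rho(\dot T^\ast)T^\ast=T^\ast\rho^\prime(\dot T^\ast)=T^\ast\cdiamond\dot T^\ast$.
\end{itemize}

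\emph{Strict monoidality.} Strict associativity and unitality hold on objects because composition of endomorphisms is associative. On morphisms, $T\cdiamond(\dot T\cdiamond\ddot T)=T\rho(\dot T)\,\rho\dot\rho(\ddot T)=(T\cdiamond\dot T)\cdiamond\ddot T$. The step I expect to need the most care is that $\cdiamond$ is a genuine $\ast$-functor out of the \emph{maximal} tensor product $\SSS_{(\AAA,\pi_0)}(U)\otimes\SSS_{(\AAA,\pi_0)}(U)$. I would argue as follows.
\begin{itemize}
\item[(i)] $\cdiamond$ is bilinear on hom-spaces, hence defines a $\ast$-functor on the algebraic tensor product.
\item[(ii)] It is automatically contractive, $\|T\rho(\dot T)\|\leq\|T\|\|\dot T\|$, since $\ast$-homomorphisms are contractive. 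More to the point, it lands in a $C^\ast$-category of bounded operators.
\item[(iii)] The universal property of the maximal tensor product of $C^\ast$-categories \cite{CastCat} therefore gives the unique continuous extension.
\end{itemize}

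\emph{Compatibility with Proposition \ref{prop:SSSfunctor}.} For $f:U\to\widetilde U$, the inclusion $\SSS_{(\AAA,\pi_0)}(U)\subseteq\SSS_{(\AAA,\pi_0)}(\widetilde U)$ becomes, in the model of Corollary \ref{cor:endo}, the literal inclusion of endomorphisms. On morphism spaces it is $\pi_0(\AAA)^{\prime\prime}(U)\subseteq\pi_0(\AAA)^{\prime\prime}(\widetilde U)$, by Lemma \ref{lem:toolsmonoidal}(i). Since neither the formulas for $\cdiamond$ nor the unit depend on $U$, this inclusion is strictly monoidal, and functoriality is inherited from Proposition \ref{prop:SSSfunctor}. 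Finally, a strict monoidal $\ast$-functor whose underlying $\ast$-functor is a unitary equivalence is an equivalence in $\Alg_{\mathsf{uAs}}(\CastCat)$, with the monoidal structure on the inverse obtained by transport. Hence local constancy carries over from Proposition \ref{prop:SSSfunctor}.
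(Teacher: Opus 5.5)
Your proposal is correct and follows essentially the same route as the paper. It uses the same transporting unitary $v\,\rho(\dot v)$, with $\dot v\in\pi_0(\AAA)^{\prime\prime}_\star$ obtained from Lemma \ref{lem:toolsmonoidal}(i) and filteredness, the same check of strict localization, and the same interchange and $\ast$-computations, while you also spell out the intertwining, associativity, unitality, maximal-tensor-product and compatibility checks that the paper calls straightforward. Two small remarks: your Haag-duality argument in (e) is valid but not needed, since once $T\cdiamond\dot T$ is known to intertwine $\rho\dot\rho$ and $\rho^\prime\dot\rho^\prime$, Lemma \ref{lem:toolsmonoidal}(i) applied to the cospan $U\to U\leftarrow U$ already places it in $\pi_{0U}\big(\AAA(U)\big)^{\prime\prime}$; and the quasi-inverse of the inclusion is in general only strong, not strict, monoidal, so local constancy should be read as in the paper (a strict monoidal inclusion whose underlying $\ast$-functor is a unitary equivalence), not as an equivalence with strict monoidal inverse in $\Alg_{\mathsf{uAs}}\big(\CastCat\big)$.
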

\begin{proof}
Let us start with showing that \eqref{eqn:monoidalproduct} is well-defined,
i.e.\ $\rho\cdiamond \dot{\rho}$ satisfies the two properties (1) and (2) from Corollary \ref{cor:endo}.
Regarding (1), given any $\widetilde{U}\in\CC$, we have to find a unitary isomorphism $v^{\cdiamond} : H_0\to H_0$
such that $v^{\cdiamond}\circ (\rho\cdiamond \dot{\rho})_{\widetilde{U}^\prime}(-) =\iota_{\widetilde{U}^\prime}(-)
\circ v^{\cdiamond}$, for all orthogonal cospans $(\widetilde{U}^\prime \to V)\perp (\widetilde{U}\to V)$ in $\CC^\perp$.
Using that $\rho,\dot{\rho}\in \SSS_{(\AAA,\pi_0)}(U)$, there exist unitaries
$v,\dot{v}: H_0\to H_0$ such that $v\circ \rho_{\widetilde{U}^\prime}(-)=\iota_{\widetilde{U}^\prime}(-)\circ v$
and $\dot{v}\circ \dot{\rho}_{\widetilde{U}^\prime}(-)=\iota_{\widetilde{U}^\prime}(-)\circ \dot{v}$.
Using Lemma \ref{lem:toolsmonoidal} (i), it follows that $v,\dot{v} \in \pi_0(\AAA)_\star^{\prime\prime}$
and we can define the unitary isomorphism $v^{\cdiamond} := v\circ \rho(\dot{v}) : H_0\to H_0$,
which satisfies the desired property
\begin{flalign}
\nn v^{\cdiamond} \circ (\rho\cdiamond \dot{\rho})_{\widetilde{U}^\prime}(-)\,&=\,
v\circ \rho(\dot{v})\circ \rho\big(\dot{\rho}_{\widetilde{U}^\prime}(-)\big)\,=\,
v\circ \rho\big(\dot{v}\circ \dot{\rho}_{\widetilde{U}^\prime}(-)\big)\,=\,
v\circ \rho\big(\iota_{\widetilde{U}^\prime}(-)\circ \dot{v}\big)\\
\,&=\,v\circ \rho_{\widetilde{U}^\prime}(-)\circ \rho(\dot{v})\,=\,
\iota_{\widetilde{U}^\prime}(-)\circ v\circ \rho(\dot{v})\,=\, \iota_{\widetilde{U}^\prime}(-)\circ v^{\cdiamond}\quad.
\end{flalign}
To show property (2) from Corollary \ref{cor:endo}, note that given any orthogonal
cospan $(U^\prime\to V)\perp (U\to V)$ relative to the fixed object $U\in\CC$, we have
$(\rho\cdiamond \dot{\rho})_{U^\prime} = \rho\,\dot{\rho}_{U^\prime} = 
\rho\,\iota_{U^\prime} = \rho_{U^\prime} = \iota_{U^\prime}$, as required.
\sk

The assignment \eqref{eqn:monoidalproduct} is evidently a $\ast$-functor:
It preserves identity morphisms $\id_{\rho}\cdiamond\id_{\dot{\rho}} = 
\id_{\rho\diamond \dot{\rho}}$,
compositions, for $T:\rho\to\rho^\prime$, $T^\prime: \rho^\prime\to\rho^{\prime\prime}$,
$\dot{T}:\dot{\rho}\to\dot{\rho}^\prime$ and $\dot{T}^\prime: \dot{\rho}^\prime\to\dot{\rho}^{\prime\prime}$,
\begin{flalign}
\nn \big(T^\prime\circ T\big)\cdiamond\big(\dot{T}^\prime\circ \dot{T}\big)\,&=\,
T^\prime\circ T\circ \rho(\dot{T}^\prime)\circ \rho(\dot{T})\\
\,&=\,
T^\prime\circ \rho^\prime(\dot{T}^\prime)\circ T \circ \rho(\dot{T})\,=\,
\big(T^\prime\cdiamond \dot{T}^\prime\big)\circ\big(T\cdiamond \dot{T}\big)\quad,
\end{flalign}
and $\ast$-involutions $(T\cdiamond \dot{T})^\ast = 
\big(T\circ \rho(\dot{T})\big)^\ast = \big(\rho^\prime(\dot{T})\circ T\big)^\ast
= T^\ast\circ \rho^\prime(\dot{T}^\ast) = T^\ast\cdiamond \dot{T}^\ast$.
Strict associativity and unitality of the monoidal structures, as well as their compatibility
with the functor structure from Proposition \ref{prop:SSSfunctor}, are straightforward checks. 
\end{proof}

\paragraph{Prefactorization algebra structure:} 
The key observation which will lead to a 
prefactorization algebra structure on superselection sectors is the following: 
Under a simple geometric
assumption on the orthogonal category $\CC^\perp$, to be stated
in Assumption \ref{assu:PFA} below, the functor
$\SSS_{(\AAA,\pi_0)}:\CC\to \Alg_{\mathsf{uAs}}\big(\CastCat\big)$ from
Proposition \ref{prop:monoidal}, which assigns the strict monoidal $C^\ast$-categories
of strictly localized superselection sectors, satisfies a $\perp$-commutativity
property. In other words, it defines a categorified AQFT over
$\CC^\perp$ assigning strict monoidal $C^\ast$-categories, 
see also Example \ref{ex:PFA} (2) to compare with the more familiar concept 
of an AQFT over $\CC^\perp$ assigning $C^\ast$-algebras.
\begin{assu}\label{assu:PFA}
For every orthogonal cospan $(U_1\to \widetilde{U})\perp (U_2\to \widetilde{U})$ in $\CC^\perp$, 
there exists a $\CC$-morphism $\widetilde{U}\to W$ and orthogonal cospans
\begin{flalign}
\begin{gathered}
\xymatrix@C=0.6em@R=0.2em{
&&&W&&&\\
&&&\perp &&&\\
&&&&&&\\
&\ar[rruuu] V_1&&\widetilde{U} &&\ar[lluuu]V_2&\\
&\perp &&\perp &&\perp &\\
\ar[ruu]U_1^\prime && \ar[luu]U_1 \ar[ruu]&& \ar[luu]U_2 \ar[ruu]&& \ar[luu]U_2^\prime
}
\end{gathered}
\end{flalign}
in $\CC^\perp$, such that $(U_1^\prime\to V_1\to W)\perp (\widetilde{U}\to W)$
and $(U_2^\prime\to V_2\to W)\perp (\widetilde{U}\to W)$.
\end{assu}

\begin{propo}\label{prop:perpcommutativity}
Suppose that Assumptions \ref{assu:monoidal} and \ref{assu:PFA} are satisfied.
Then the functor $\SSS_{(\AAA,\pi_0)}:\CC\to \Alg_{\mathsf{uAs}}\big(\CastCat\big)$ from
Proposition \ref{prop:monoidal} satisfies the following $\perp$-commutativity property:
For every orthogonal cospan $(U_1\to V)\perp (U_2\to V)$ in $\CC^\perp$, the diagram
\begin{equation}\label{eqn:perpcommutativity}
\begin{tikzcd}
\SSS_{(\AAA,\pi_0)}(U_1)\otimes \SSS_{(\AAA,\pi_0)}(U_2)\ar[r]\ar[d] & 
\SSS_{(\AAA,\pi_0)}(V)\otimes \SSS_{(\AAA,\pi_0)}(V) \ar[d,"\diamond^\op"]\\[2mm]
\SSS_{(\AAA,\pi_0)}(V)\otimes \SSS_{(\AAA,\pi_0)}(V) \ar[r, "\diamond"'] & \SSS_{(\AAA,\pi_0)}(V)
\end{tikzcd}
\end{equation}
in $\CastCat$ commutes, where the unlabeled arrows are obtained from the functor structure of 
$\SSS_{(\AAA,\pi_0)}$ and $\cdiamond^{(\op)}$ denotes the (opposite) monoidal product.
\end{propo}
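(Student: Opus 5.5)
Because the functor structure of Proposition \ref{prop:SSSfunctor} consists of full subcategory inclusions, the diagram \eqref{eqn:perpcommutativity} commutes once we show two things for $\rho_1\in\SSS_{(\AAA,\pi_0)}(U_1)$ and $\rho_2\in\SSS_{(\AAA,\pi_0)}(U_2)$ with $(U_1\to V)\perp(U_2\to V)$. On objects we need $\rho_1\rho_2=\rho_2\rho_1$. On morphisms we need $T_1\circ\rho_1(T_2)=T_2\circ\rho_2(T_1)$ for all $T_1:\rho_1\to\rho_1^\prime$ and $T_2:\rho_2\to\rho_2^\prime$ in the respective localized categories, where $\rho_1^\prime,\rho_2^\prime$ are localized in the same regions. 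The whole argument uses the model of Corollary \ref{cor:endo}.

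First we establish two locality facts. (a) If $\rho\in\SSS_{(\AAA,\pi_0)}(U_1)$ and $(U_1\to W)\perp (U_2\to W)$, then $\rho$ acts as the identity on $\pi_{0U_2}(\AAA(U_2))^{\prime\prime}$. On $\pi_{0U_2}(\AAA(U_2))$ this is exactly condition (2) of Corollary \ref{cor:endo}, and it extends to the bicommutant by the local weak continuity of Lemma \ref{lem:toolsmonoidal} (iii). (b) Morphisms $T:\rho\to\rho^\prime$ between sectors localized in $U_1$ lie in $\pi_{0U_1}(\AAA(U_1))^{\prime\prime}$ by Corollary \ref{cor:endo}. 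Since $\perp$ is closed under post-composition, (a) and (b) remain valid after pushing everything along the morphism $V\to W$ supplied by Assumption \ref{assu:PFA}; the relevant orthogonal cospans are then formed in $W$.

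The morphism identity follows directly from these facts. By (a) and (b), $\rho_1(T_2)=T_2$ and $\rho_2(T_1)=T_1$. By $\perp$-commutativity of $\pi_0(\AAA)^{\prime\prime}$ (see the footnote after Assumption \ref{assu:monoidal}), $T_1$ and $T_2$ commute. Hence $T_1\circ\rho_1(T_2)=T_1T_2=T_2T_1=T_2\circ\rho_2(T_1)$.

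The object identity $\rho_1\rho_2=\rho_2\rho_1$ is the main step, and it is here that Assumption \ref{assu:PFA} is used. Take $\widetilde U=V$ and the data $V\to W$, $V_i$, $U_i^\prime$ from that assumption. Because $(U_i^\prime\to V_i)\perp(U_i\to V_i)$, condition (1) of Corollary \ref{cor:endo}, transported by $v_i$, gives an equivalent sector $\sigma_i:=\Ad_{v_i}\circ\rho_i$ localized in $U_i^\prime$. The unitary $v_i:\rho_i\to\sigma_i$ is a charge transporter, and by Lemma \ref{lem:toolsmonoidal} (i) it lies in $\pi_{0V_i}(\AAA(V_i))^{\prime\prime}$. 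Fix $x\in\pi_{0Z}(\AAA(Z))^{\prime\prime}$ for an arbitrary $Z$. By filteredness, enlarge $W$ so that it receives $Z$; the orthogonality relations from Assumption \ref{assu:PFA} are preserved under this post-composition. Now write $\rho_1=\Ad_{v_1^\ast}\circ\sigma_1$. Since $v_1\in\pi_{0V_1}(\AAA(V_1))^{\prime\prime}$ and $V_1\perp U_2$ is not given, we instead exploit $(U_1^\prime\to W)\perp(V\to W)$. By (a), $\sigma_1$ fixes $\pi_{0V}(\AAA(V))^{\prime\prime}$, and this algebra contains $v_2$, $\rho_2$-images of $V$-local elements, and $\rho_2(v_1)$-type terms.

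The standard DHR computation then runs as follows. We show $\rho_1\rho_2(x)=\rho_2\rho_1(x)$ first for $x$ localized in $U_1$, in $U_2$, and in the causal complement of $V$; for these, each side reduces to the identity, to $\rho_1(x)$, or to $\rho_2(x)$ by (a) and $\perp$-commutativity. For general $x$, use Haag duality together with the fact that $\rho_1\rho_2$ and $\rho_2\rho_1$ are both intertwined to the identity off $V$. Concretely, compare them through the unitary $\epsilon:=\rho_2(v_1^\ast)\,v_2^\ast\,v_1\,\rho_1(v_2)$, which intertwines $\rho_1\rho_2\to\rho_2\rho_1$. Then show $\epsilon=\oone$: since $\sigma_1$ and $\sigma_2$ are localized in $U_1^\prime$ and $U_2^\prime$, both orthogonal to $V$ inside $W$, the factors in $\epsilon$ cancel pairwise by (a). The step I expect to be delicate is this bookkeeping: verifying $\epsilon=\oone$, and checking that every operator to which (a) is applied indeed lies in the bicommutant of a region orthogonal to the relevant localization region inside $W$. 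This is exactly what the two extra orthogonality conditions $(U_i^\prime\to V_i\to W)\perp(V\to W)$ in Assumption \ref{assu:PFA} are designed to guarantee.
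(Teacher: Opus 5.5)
Your morphism argument is correct and matches the paper's. The object identity, however, has a genuine gap, and it comes from the order in which you make your choices. You apply Assumption \ref{assu:PFA} with $\widetilde U=V$. Then, for a test element $x\in\pi_{0Z}\big(\AAA(Z)\big)^{\prime\prime}$, you enlarge $W$ so that it receives $Z$. This preserves the orthogonality relations in the assumption, but it does not produce $(U_i^\prime\to W)\perp(Z\to W)$. So nothing tells you that $\sigma_i=\Ad_{v_i}\,\rho_i$ fixes $x$, i.e.\ that $\rho_i(x)=v_i^\ast\, x\, v_i$.

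Your operator $\epsilon$ does not get around this. A direct computation shows that $\epsilon\,\rho_1\rho_2(x)=\rho_2\rho_1(x)\,\epsilon$ holds if and only if $\sigma_1\sigma_2(x)=\sigma_2\sigma_1(x)$. That is the original commutativity problem, now for sectors localized in $U_1^\prime$ and $U_2^\prime$. Hence the claim that $\epsilon$ intertwines $\rho_1\rho_2\to\rho_2\rho_1$ is circular unless $x$ is localized orthogonally to both $U_i^\prime$.

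The special cases ($x$ localized in $U_1$, in $U_2$, or orthogonally to $V$) cannot be extended to general $x$ either. No additivity is assumed, and two endomorphisms that are both trivial off $V$ need not coincide. Adding Haag duality does not close this.

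The missing step is the paper's reduction. Since $\pi_0(\AAA)^{\prime\prime}_\star$ is a colimit over the filtered category $\CC$, it suffices to check $(\rho_1\rho_2)\,\iota_{\widetilde U}=(\rho_2\rho_1)\,\iota_{\widetilde U}$ for those $\widetilde U$ admitting a morphism $V\to\widetilde U$. Assumption \ref{assu:PFA} must then be applied to the cospan $(U_1\to V\to\widetilde U)\perp(U_2\to V\to\widetilde U)$ for this large $\widetilde U$. With that choice, $(U_i^\prime\to V_i\to W)\perp(\widetilde U\to W)$ and Corollary \ref{cor:endo}(1) give $\rho_{i\widetilde U}=v_i^\ast\,\iota_{\widetilde U}(-)\,v_i$.

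You have also assigned the wrong role to some of the orthogonalities.

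\begin{itemize}
\item You say $V_1\perp U_2$ is not given. In fact $(V_1\to W)\perp(U_2\to V_2\to W)$ follows by precomposing $(V_1\to W)\perp(V_2\to W)$ with $U_2\to V_2$, and symmetrically $V_2\perp U_1$. By Corollary \ref{cor:endo}(2), this is exactly what yields $\rho_2(v_1)=v_1$ and $\rho_1(v_2)=v_2$.
\item Conversely, $\pi_{0V}\big(\AAA(V)\big)^{\prime\prime}$ need not contain $v_1$ or $v_2$. Lemma \ref{lem:toolsmonoidal}(i) only places $v_i$ in $\pi_{0V_i}\big(\AAA(V_i)\big)^{\prime\prime}$. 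For double cones, $V_i$ contains $U_i^\prime$, which is causally disjoint from $V$. So using $(U_1^\prime\to W)\perp(V\to W)$ to make $\sigma_1$ fix $v_2$ does not work.
\end{itemize}

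With the correct orthogonalities no braiding operator is needed. One has $(\rho_1\rho_2)\,\iota_{\widetilde U}=\rho_1(v_2)^\ast v_1^\ast\,\iota_{\widetilde U}(-)\,v_1\rho_1(v_2)=v_2^\ast v_1^\ast\,\iota_{\widetilde U}(-)\,v_1 v_2$. Symmetrically, $(\rho_2\rho_1)\,\iota_{\widetilde U}=v_1^\ast v_2^\ast\,\iota_{\widetilde U}(-)\,v_2 v_1$. These agree because $v_1v_2=v_2v_1$ by $(V_1\to W)\perp(V_2\to W)$. Your $\epsilon$ does equal $v_1^\ast v_2^\ast v_1 v_2=\oone$, but for this reason, not because of the localization of the $\sigma_i$.
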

\begin{proof}
We have to verify commutativity of the diagram \eqref{eqn:perpcommutativity} at the level of objects
and morphisms. For two objects $\rho_1\in\SSS_{(\AAA,\pi_0)}(U_1)$
and $\rho_2\in\SSS_{(\AAA,\pi_0)}(U_2)$, we have to show that
\begin{flalign}
\rho_1\cdiamond \rho_2 \,=\, \rho_2\cdiamond \rho_1\quad\Longleftrightarrow\quad
(\rho_1\cdiamond \rho_2)\,\iota_{\widetilde{U}} \,=\, 
(\rho_2\cdiamond \rho_1)\,\iota_{\widetilde{U}}\quad \forall\, \widetilde{U}\in\CC\quad,
\end{flalign}
where the equivalence uses that the $\rho$'s are $\ast$-endomorphisms
on the universal $C^\ast$-algebra  $\pi_0(\AAA)_\star^{\prime\prime}$, which is defined as a colimit over $\CC$.
Using our assumption that $\CC$ is filtered, it suffices to consider
only those $\widetilde{U}\in\CC$ for which there exists a morphism $V\to \widetilde{U}$ in $\CC$.
From closedness of orthogonality relations under post-compositions,
it follows that $(U_1\to V\to \widetilde{U})\perp (U_2\to V\to \widetilde{U})$. Using
Assumption \ref{assu:PFA} for this orthogonal cospan and employing property (1) from Corollary \ref{cor:endo},
there exist unitaries $v_1\in \pi_{0V_1}\big(\AAA(V_1)\big)^{\prime\prime}$
and $v_2\in \pi_{0V_2}\big(\AAA(V_2)\big)^{\prime\prime}$ such that
\begin{flalign}
\rho_{1\widetilde{U}}\,=\,v_1^\ast\circ \iota_{\widetilde{U}}(-)\circ v_1\quad,\qquad
\rho_{2\widetilde{U}}\,=\,v_2^\ast\circ \iota_{\widetilde{U}}(-)\circ v_2\quad.
\end{flalign}
From this we compute
\begin{subequations}\label{eqn:tmpcommutativity}
\begin{flalign}
(\rho_1\cdiamond \rho_2)\,\iota_{\widetilde{U}}\,&=\,\rho_1\big(\rho_{2\widetilde{U}}(-)\big)\,=\,
\rho_1(v_2)^\ast\circ v_1^\ast \circ \iota_{\widetilde{U}}(-)\circ v_1\circ \rho_1(v_2)\quad,\\
(\rho_2\cdiamond \rho_1)\,\iota_{\widetilde{U}}\,&=\,\rho_2\big(\rho_{1\widetilde{U}}(-)\big)\,=\,
\rho_2(v_1)^\ast\circ v_2^\ast \circ \iota_{\widetilde{U}}(-)\circ v_2\circ \rho_2(v_1)\quad.
\end{flalign}
\end{subequations}
Using that $(V_2\to W)\perp (U_1\to V_1\to W)$ and $(V_1\to W)\perp (U_2\to V_2\to W)$,
property (2) from Corollary \ref{cor:endo} implies that $\rho_1(v_2)=v_2$ and $\rho_2(v_1)=v_1$,
hence the two expressions in \eqref{eqn:tmpcommutativity} are equal since $v_1\circ v_2=v_2\circ v_1$
as a consequence of $(V_1\to W)\perp (V_2\to W)$ in $\CC^\perp$.
\sk

It remains to prove that the diagram \eqref{eqn:perpcommutativity} commutes at the level of morphisms.
Given a morphism $T_1 :\rho_1\to \rho_1^\prime$ in $\SSS_{(\AAA,\pi_0)}(U_1)$ and
a morphism $T_2: \rho_2\to \rho_2^\prime$ in $\SSS_{(\AAA,\pi_0)}(U_2)$, 
it follows from Lemma \ref{lem:toolsmonoidal} (i) that $T_1\in \pi_{0U_1}\big(\AAA(U_1)\big)^{\prime\prime}$
and $T_2\in \pi_{0U_2}\big(\AAA(U_2)\big)^{\prime\prime}$, hence
$T_1\circ T_2=T_2\circ T_1$ since $(U_1\to V)\perp (U_2\to V)$ in $\CC^\perp$. 
From this we compute
\begin{flalign}
T_1\cdiamond T_2 \,=\, T_1\circ \rho_1(T_2) \,=\, T_1\circ T_2 \,=\, T_2\circ T_1\,=\, 
T_2\circ \rho_2(T_1)\,=\, T_2\cdiamond T_1\quad,
\end{flalign}
where steps two and four use property (2) from Corollary \ref{cor:endo}.
\end{proof}

Using this result, we can canonically extend the functor
$\SSS_{(\AAA,\pi_0)}:\CC\to \Alg_{\mathsf{uAs}}\big(\CastCat\big)$ 
from Proposition \ref{prop:monoidal} to a prefactorization algebra over $\CC^\perp$
whose factorization products, for all arity $n> 1$ operations $\und{f} : \und{U}\to V$ in $\P_{\CC^\perp}$, 
are defined by the $\ast$-functors
\begin{subequations}\label{eqn:PFAstructure}
\begin{equation}
\begin{tikzcd}
\bigotimes\limits_{i=1}^n \SSS_{(\AAA,\pi_0)}(U_i)\ar[rr,"\SSS_{(\AAA,\pi_0)}(\und{f})"]  
\ar[dr, "\Motimes_i \SSS_{(\AAA,\pi_0)}(f_i)"'] && \SSS_{(\AAA,\pi_0)}(V)\\
& \SSS_{(\AAA,\pi_0)}(V)^{\otimes n} \ar[ru,"\diamond^n"'] &
\end{tikzcd}
\quad,
\end{equation}
where $\cdiamond^n$ denotes the $n$-fold monoidal product. The factorization products
for the arity zero operations $()\to V$ are given by the monoidal units 
\begin{flalign}
\mathsf{B}\bbC~\longrightarrow~\SSS_{(\AAA,\pi_0)}(V)\quad.
\end{flalign}
\end{subequations}
This leads us to the main result of this section.
\begin{theo}\label{theo:PFAstructure}
Suppose that Assumptions \ref{assu:monoidal} and \ref{assu:PFA} are satisfied.
Then the $\ast$-functors in \eqref{eqn:PFAstructure} are strictly monoidal and they 
endow the functor $\SSS_{(\AAA,\pi_0)}:\CC\to \Alg_{\mathsf{uAs}}\big(\CastCat\big)$ 
from Proposition \ref{prop:monoidal} with the structure of a prefactorization algebra
\begin{flalign}
\SSS_{(\AAA,\pi_0)}\,\in\,\Alg_{\P_{\CC^\perp}}^{}\!\Big(\Alg_{\mathsf{uAs}}\big(\CastCat\big)\Big)
\end{flalign}
taking values in the category of strict monoidal $C^\ast$-categories. This prefactorization
algebra is locally constant in the sense that it assigns
to every $1$-ary operation $f:U\to \widetilde{U}$ in $\P_{\CC^\perp}$
a unitary equivalence of strict monoidal $C^\ast$-categories.
\end{theo}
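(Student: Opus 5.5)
The plan is to verify three things about the $\ast$-functors $\SSS_{(\AAA,\pi_0)}(\und{f})$ of \eqref{eqn:PFAstructure}: that they are strict monoidal, that they satisfy the operad-algebra axioms of Remark \ref{rem:PFA}, and that the $1$-ary structure maps are unitary monoidal equivalences. The last point is immediate from Propositions \ref{prop:SSSfunctor} and \ref{prop:monoidal}. For $1$-ary operations the factorization product is just the functor $\SSS_{(\AAA,\pi_0)}(f)$, which is a full subcategory inclusion, strictly monoidal, and a unitary equivalence.

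\textbf{Strict monoidality.} Here I would use the categorified Eckmann--Hilton argument in the form of Proposition \ref{prop:perpcommutativity}. The tensor product $\bigotimes_i \SSS_{(\AAA,\pi_0)}(U_i)$ carries the componentwise monoidal structure. Strict monoidality of $\diamond^n\circ \Motimes_i\SSS_{(\AAA,\pi_0)}(f_i)$ then amounts to the identity
\begin{flalign}
\nn (\rho_1\cdiamond\dot\rho_1)\cdiamond\cdots\cdiamond(\rho_n\cdiamond\dot\rho_n)\,=\,(\rho_1\cdiamond\cdots\cdiamond\rho_n)\cdiamond(\dot\rho_1\cdiamond\cdots\cdiamond\dot\rho_n)\quad,
\end{flalign}
together with its analogue on morphisms. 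By strict associativity, this reduces to moving each $\dot\rho_i$ past the $\rho_j$ with $j>i$. Each such $\rho_j\in\SSS_{(\AAA,\pi_0)}(U_j)$ and $\dot\rho_i\in\SSS_{(\AAA,\pi_0)}(U_i)$ with $f_i\perp f_j$, so Proposition \ref{prop:perpcommutativity} gives $\dot\rho_i\cdiamond\rho_j=\rho_j\cdiamond\dot\rho_i$ on the nose, on objects and on morphisms. Preservation of units is immediate since $\oone=\id$.

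\textbf{Operad-algebra axioms.} Since each $\SSS_{(\AAA,\pi_0)}(f)$ is an inclusion, every structure map is simply ``multiply in $\SSS_{(\AAA,\pi_0)}(V)$''. Compatibility with identities is trivial. Compatibility with composition $\und{f}\,\und{\und{g}}$ follows from strict associativity of $\cdiamond$ and the fact that the inclusions are strict monoidal (Proposition \ref{prop:monoidal}). The arity-zero operations are handled by unitality: composing with $\pt$ inserts $\id$.

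\textbf{Equivariance.} For $\sigma\in\Sigma_n$ one must check
\begin{flalign}
\nn \rho_1\cdiamond\cdots\cdiamond\rho_n\,=\,\rho_{\sigma(1)}\cdiamond\cdots\cdiamond\rho_{\sigma(n)}\quad,
\end{flalign}
and likewise for morphisms. This follows by decomposing $\sigma$ into adjacent transpositions and applying Proposition \ref{prop:perpcommutativity} to the pairwise orthogonal $f_i,f_{i+1}$. Since $\perp$ is closed under transposition, the swapped pair is again orthogonal.

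The main obstacle is making sure Proposition \ref{prop:perpcommutativity} is applied with the correct targets. It is stated for a cospan into $V$ and compares elements already included into $\SSS_{(\AAA,\pi_0)}(V)$, which is exactly our situation. For morphisms one needs the identity $T\cdiamond T'=T'\cdiamond T$ for morphisms localized in orthogonal regions, not only the identity on objects; this was also established there. With these commutations in hand, the rest is bookkeeping with strict associativity.
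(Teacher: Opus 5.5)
Your proposal is correct and follows essentially the same route as the paper. Local constancy is inherited from Proposition \ref{prop:monoidal}, and the identity and composition axioms follow from strict associativity and unitality of $\cdiamond$ together with the inclusions being strict monoidal. Both permutation equivariance and strict monoidality of the factorization products come from applying the $\perp$-commutativity of Proposition \ref{prop:perpcommutativity}, on objects and on morphisms, to the pairwise orthogonal $f_i\perp f_j$. Your version only spells out a few steps the paper leaves as routine: the decomposition into transpositions, unit preservation, and the morphism-level interchange.
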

\begin{proof}
To show that the $\ast$-functors \eqref{eqn:PFAstructure} define 
a $\P_{\CC^\perp}$-algebra structure, one has to verify that they 
preserve operadic compositions, identity operations and permutation actions,
see also Remark \ref{rem:PFA}.
The first two properties are evident from the definition of the structure maps
in \eqref{eqn:PFAstructure}, while permutation equivariance follows 
by applying Proposition \ref{prop:perpcommutativity} to the orthogonal 
cospans $(f_i : U_i\to V)\perp (f_j : U_j\to V)$, for all $i\neq j$.
\sk

Since local constancy has already been observed in Proposition \ref{prop:monoidal},
it remains to prove that the $\ast$-functors \eqref{eqn:PFAstructure} are strictly monoidal,
which requires checks at the level of objects and at the level of morphisms.
Since the functor structure $\SSS_{(\AAA,\pi_0)}(f_i):\SSS_{(\AAA,\pi_0)}(U_i)\to\SSS_{(\AAA,\pi_0)}(V)$
is simply given by a full $C^\ast$-subcategory inclusion $\SSS_{(\AAA,\pi_0)}(U_i)\subseteq \SSS_{(\AAA,\pi_0)}(V)$,
we can and will suppress the symbols $\SSS_{(\AAA,\pi_0)}(f_i)$ from these checks.
Given any objects $\rho_i,\dot{\rho}_i\in \SSS_{(\AAA,\pi_0)}(U_i)$, for all $i=1,\dots,n$,
we have to show that
\begin{subequations}
\begin{flalign}
\SSS_{(\AAA,\pi_0)}(\und{f})\big((\rho_1\cdiamond\dot{\rho}_1)\otimes \cdots \otimes(\rho_n\cdiamond\dot{\rho}_n)\big)
\,=\, \rho_1\cdiamond\dot{\rho}_1\cdiamond\cdots\cdiamond \rho_n\cdiamond\dot{\rho}_n
\end{flalign}
is equal to
\begin{flalign}
\SSS_{(\AAA,\pi_0)}(\und{f})\big(\rho_1\otimes\cdots\otimes \rho_n\big)\cdiamond
\SSS_{(\AAA,\pi_0)}(\und{f})\big(\dot{\rho}_1\otimes\cdots\otimes \dot{\rho}_n\big)\,=\, 
\rho_1\cdiamond\cdots\cdiamond \rho_n\cdiamond \dot{\rho}_1\cdiamond \cdots \cdiamond \dot{\rho}_n\quad.
\end{flalign}
\end{subequations}
This follows by applying the $\perp$-commutativity property (for objects) from
Proposition \ref{prop:perpcommutativity} to the orthogonal 
cospans $(f_i : U_i\to V)\perp (f_j : U_j\to V)$, for all $i\neq j$.
A similar check can be performed at the level of morphisms. 
\end{proof}


\section{\label{sec:dEqvSSS}Equivariance under discrete group actions}
The aim of this section is to extend the results from Section \ref{sec:SSS}
to a $G$-equivariant context, for $G$ a discrete group. 
Throughout this section we fix an orthogonal category with $G$-action
$\CC^\perp : \mathsf{B}G\to \Cat^\perp$ and assume as before
that its underlying category $\CC$ is filtered.
The starting point of our construction is a $G$-equivariant AQFT $\big(\AAA,\Psi\big)\in \AQFT^G\big(\CC^\perp\big)$ 
in the sense of Definition \ref{def:eqvAQFT}, which explicitly consists of
an AQFT $\AAA\in\AQFT\big(\CC^\perp\big)$ over the underlying orthogonal category 
and a family of $\AQFT\big(\CC^\perp\big)$-isomorphisms 
\begin{flalign}
\Psi\,=\,\Big\{\Psi_g \,:\, \AAA\,\stackrel{\cong}{\longrightarrow}\, \alpha_g^\ast(\AAA)\Big\}_{g\in G}
\end{flalign}
satisfying the properties spelled out in Remark \ref{rem:eqvPFA}.
Our goal is to induce this equivariance structure from the AQFT $\AAA$ 
to the prefactorization algebra of superselection sectors 
from Theorem \ref{theo:PFAstructure}, thereby endowing it with the structure 
of a $G$-equivariant prefactorization algebra.
\sk

As a first step, let us observe that the $G$-equivariance structure $\Psi$ 
on the AQFT $\AAA$ induces a $G$-action $\big\{\Psi_{\star g}:\AAA_\star\to\AAA_\star\big\}_{g\in G}$
on its universal $C^\ast$-algebra \eqref{eqn:universal}, which is given 
explicitly by
\begin{equation}
\begin{tikzcd}
\AAA_\star \ar[r,dashed,"\Psi_{\star g}"] & \AAA_\star\\[2mm]
\ar[u,"\iota_U"] \AAA(U) \ar[r, "(\Psi_g)_U^{}"'] & \AAA(\alpha_g(U))\ar[u,"\iota_{\alpha_g(U)}"']
\end{tikzcd}
\quad,
\end{equation}
for all $U\in\CC$ and all $g\in G$. From this one obtains via pre-composition a (right) $G$-action 
\begin{flalign}\label{eqn:GactionRep}
\Psi_{\star g}^\ast \,:\, \Rep_{\AAA}~&\longrightarrow~\Rep_{\AAA}\quad,\\
\nn \big(\pi: \AAA_\star \to \mathrm{B}(H)\big)~&\longmapsto~\big(\pi\,\Psi_{\star g} : \AAA_\star \to \mathrm{B}(H)\big)\quad,\\
\nn \big(T: \pi\to \pi^\prime\big)~&\longmapsto~\big(T: \pi\,\Psi_{\star g}\to \pi^\prime\,\Psi_{\star g} \big)\quad,
\end{flalign}
for all $g\in G$, on the representation $C^\ast$-category $\Rep_{\AAA}$.
To restrict this $G$-action to the full $C^\ast$-subcategory $\SSS_{(\AAA,\pi_0)}\subseteq \Rep_\AAA$ 
of superselection sectors from Definition \ref{def:SSSglobal} we impose the following assumption.
\begin{assu}\label{assu:Grep}
There exists a projective unitary representation $u : G\to \mathrm{PU}(H_0)\,,\,g\mapsto u_g$ on the Hilbert space
of the reference representation $\pi_0 : \AAA_\star\to \mathrm{B}(H_0)$ such that
\begin{flalign}
\pi_0\,\Psi_{\star g} \,=\, \Ad_{u_g}\,\pi_0\quad,
\end{flalign}
for all $g\in G$, where $\Ad_{u_g}:\mathrm{B}(H_0)\to \mathrm{B}(H_0)\,,~T\mapsto u_g\circ T\circ u_g^\ast$ denotes the adjoint action.
\end{assu}

\begin{lem}\label{lem:GactionSSS}
Under Assumption \ref{assu:Grep}, the (right) $G$-action \eqref{eqn:GactionRep} on the 
representation $C^\ast$-category $\Rep_{\AAA}$ restricts to a (right) $G$-action 
on the full $C^\ast$-subcategory $\SSS_{(\AAA,\pi_0)}\subseteq \Rep_{\AAA}$ of superselection sectors
from Definition \ref{def:SSSglobal}.
\end{lem}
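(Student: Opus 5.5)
Since $\SSS_{(\AAA,\pi_0)}\subseteq\Rep_\AAA$ is a full $C^\ast$-subcategory and \eqref{eqn:GactionRep} acts as the identity on morphisms, it suffices to check objects. Given $\pi\in\SSS_{(\AAA,\pi_0)}$ and $g\in G$, we must show that $\pi\,\Psi_{\star g}$ again satisfies the localizability condition of Definition \ref{def:SSSglobal}. The group action axioms ($\Psi^\ast_{\star e}=\id$ and the composition law) are inherited from $\Rep_\AAA$ and need no further argument.

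First we unpack the local representations of $\pi\,\Psi_{\star g}$. By the defining diagram of $\Psi_{\star g}$, for every $U^\prime\in\CC$ we have
\begin{flalign*}
(\pi\,\Psi_{\star g})_{U^\prime}\,=\,\pi\,\Psi_{\star g}\,\iota_{U^\prime}\,=\,\pi\,\iota_{\alpha_g(U^\prime)}\,(\Psi_g)_{U^\prime}\,=\,\pi_{\alpha_g(U^\prime)}\,(\Psi_g)_{U^\prime}\quad.
\end{flalign*}
Now fix $U\in\CC$ and consider the object $\alpha_g(U)$. Localizability of $\pi$ provides a unitary $v:H\to H_0$ with $v\circ\pi_{W^\prime}(-)=\pi_{0W^\prime}(-)\circ v$ for all $(W^\prime\to V)\perp(\alpha_g(U)\to V)$. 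Since $\alpha_g$ is an orthogonal functor, any $(U^\prime\to V)\perp(U\to V)$ is sent to $(\alpha_g(U^\prime)\to\alpha_g(V))\perp(\alpha_g(U)\to\alpha_g(V))$, so the relation holds for $W^\prime=\alpha_g(U^\prime)$. Precomposing with $(\Psi_g)_{U^\prime}$ gives
\begin{flalign*}
v\circ(\pi\,\Psi_{\star g})_{U^\prime}(-)\,=\,\pi_{0\alpha_g(U^\prime)}\big((\Psi_g)_{U^\prime}(-)\big)\circ v\,=\,(\pi_0\,\Psi_{\star g})_{U^\prime}(-)\circ v\quad.
\end{flalign*}

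Next we use Assumption \ref{assu:Grep}: $\pi_0\,\Psi_{\star g}=\Ad_{u_g}\pi_0$. The projective ambiguity is harmless, since $\Ad_{u_g}$ depends only on the class of $u_g$; we pick any unitary lift $\hat u_g$. The previous display then becomes $v\circ(\pi\Psi_{\star g})_{U^\prime}(-)=\hat u_g\,\pi_{0U^\prime}(-)\,\hat u_g^\ast\circ v$. Hence the unitary $\hat u_g^\ast\circ v:H\to H_0$ satisfies
\begin{flalign*}
(\hat u_g^\ast v)\circ(\pi\,\Psi_{\star g})_{U^\prime}(-)\,=\,\pi_{0U^\prime}(-)\circ(\hat u_g^\ast v)
\end{flalign*}
for all $(U^\prime\to V)\perp(U\to V)$, which is exactly the localizability condition at $U$. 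As $U$ was arbitrary, $\pi\,\Psi_{\star g}\in\SSS_{(\AAA,\pi_0)}$.

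The only delicate points are the reindexing through $\alpha_g$, which requires orthogonal functors to preserve orthogonal cospans, and the choice of a lift of the projective representation. Neither is a genuine obstacle, so the argument is essentially bookkeeping.
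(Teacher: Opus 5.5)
Your proposal is correct and follows essentially the same route as the paper: rewrite $(\pi\,\Psi_{\star g})_{U^\prime}=\pi_{\alpha_g(U^\prime)}(\Psi_g)_{U^\prime}$, use localizability of $\pi$ at $\alpha_g(U)$, rewrite $\pi_0\,\Psi_{\star g}=\Ad_{u_g}\pi_0$ via Assumption \ref{assu:Grep}, and take $u_g^\ast\circ\tilde v$ as the witnessing unitary. You also make explicit two points the paper leaves implicit: that $\alpha_g$ preserves orthogonal cospans, and that the choice of unitary lift of $u_g$ is irrelevant.
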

\begin{proof}
We have to show that, for every $g\in G$ and $\big(\pi:\AAA_\star\to \mathrm{B}(H)\big)\in \SSS_{(\AAA,\pi_0)}$, the representation
$\Psi_{\star g}^\ast(\pi) = \pi \,\Psi_{\star g}\in\Rep_{\AAA}$ satisfies the localizability conditions from 
Definition \ref{def:SSSglobal}. In more detail, given any $U\in \CC$, we have to construct a unitary isomorphism
$v : H\to H_0$ such that
\begin{flalign}\label{eqn:tmpG0}
v\circ \Psi_{\star g}^\ast(\pi)_{U^\prime}(-)\,=\,\pi_{0 U^\prime}(-)\circ v\quad,
\end{flalign}
for all orthogonal cospans $(U^\prime\to V)\perp (U\to V)$ in $\CC^\perp$. 
Using the definitions above, one can express 
the term $\Psi_{\star g}^\ast(\pi)_{U^\prime}$ on the left-hand side as
\begin{subequations}\label{eqn:tmpG1}
\begin{flalign}
\Psi_{\star g}^\ast(\pi)_{U^\prime} \,=\, \pi\,\Psi_{\star g}\,\iota_{U^\prime}\,=\,
\pi\, \iota_{\alpha_g(U^\prime)}\, (\Psi_{g})_{U^\prime}^{}\,=\, \pi_{\alpha_g(U^\prime)}\,(\Psi_{g})_{U^\prime}^{}\quad.
\end{flalign}
Using also Assumption \ref{assu:Grep}, we further have that
\begin{flalign}
\Psi_{\star g}^\ast(\pi_0)_{U^\prime}\,=\,\pi_{0 \alpha_g(U^\prime)}\,(\Psi_{g})_{U^\prime}^{}\,=\,
\Ad_{u_g}\,\pi_{0 U^\prime}\quad\Longleftrightarrow\quad
\pi_{0 U^\prime}\,=\, \Ad_{u_{g}^\ast}\,\pi_{0 \alpha_g(U^\prime)}\,(\Psi_{g})_{U^\prime}^{}\quad.
\end{flalign}
\end{subequations}
Inserting \eqref{eqn:tmpG1} into \eqref{eqn:tmpG0}, one arrives at the equivalent requirement that
\begin{flalign}
u_{g}\circ v\circ \pi_{\alpha_g(U^\prime)}\big((\Psi_{g})_{U^\prime}^{}(-)\big)\,=\,
\pi_{0 \alpha_g(U^\prime)}\big((\Psi_{g})_{U^\prime}^{}(-)\big)\circ u_g\circ v\quad, 
\end{flalign}
for all orthogonal cospans $(U^\prime\to V)\perp (U\to V)$ in $\CC^\perp$. 
Picking any unitary isomorphism $\widetilde{v} : H\to H_0$ witnessing the 
localizability condition for $\pi\in \SSS_{(\AAA,\pi_0)}$ in $\alpha_g(U)\in \CC$, 
we can set $v := u_{g}^\ast\circ \widetilde{v}$ to obtain a unitary isomorphism with the desired properties,
hence $\Psi_{\star g}^\ast(\pi) \in\SSS_{(\AAA,\pi_0)}$.
\end{proof}

The $G$-action on superselection sectors established in this lemma 
has the deficit that it \textit{does not} preserve the distinguished object 
$\pi_0\in \SSS_{(\AAA,\pi_0)}$ and also \textit{does not} interplay well 
with the strict localization conditions from Definition \ref{def:SSSlocal}.
Observing that Assumption \ref{assu:Grep} can be rewritten
as the statement that the reference representation $\pi_0$ is invariant
under the combined (left) $G$-action $\Ad_{u_g}\,\pi_0\,\Psi_{\star g^{-1}} = \pi_0$,
for all $g\in G$, leads us to the following improvement.
\begin{propo}\label{prop:GactionSSSimproved}
Suppose that Assumption \ref{assu:Grep} is satisfied and 
choose a projective unitary representation $u:G\to \mathrm{PU}(H_0)$.
Denote by $\SSS_{(\AAA,\pi_0)}\vert_{H_0}\subseteq \SSS_{(\AAA,\pi_0)}$ the unitarily equivalent 
full $C^\ast$-subcategory consisting of all superselection sectors from Definition \ref{def:SSSglobal} which
share the same Hilbert space $H_0$ with the reference representation $\pi_0$.
Then the family of $\ast$-functors
\begin{flalign}\label{eqn:GactionSSSimproved}
g\triangleright(-)\,:\,\SSS_{(\AAA,\pi_0)}\vert_{H_0}~&\longrightarrow~\SSS_{(\AAA,\pi_0)}\vert_{H_0}\quad,\\
\nn \pi~&\longmapsto~\Ad_{u_g}\,\pi\,\Psi_{\star g^{-1}}\quad,\\
\nn \big(T:\pi\to \pi^\prime\big)~&\longmapsto~\big(\Ad_{u_g}(T):\Ad_{u_g}\,\pi\,\Psi_{\star g^{-1}}\to 
\Ad_{u_g}\,\pi^\prime\,\Psi_{\star g^{-1}}\big)\quad,
\end{flalign}
for all $g\in G$, defines a (left) $G$-action.
This $G$-action leaves invariant the distinguished object, i.e.\ $g\triangleright \pi_0 = \pi_0$
for all $g\in G$, and it is compatible with the strict localization conditions from 
Definition \ref{def:SSSlocal} in the sense that
\begin{flalign}\label{eqn:GactionSSSimprovedlocal}
g\triangleright(-)\,:\,\SSS_{(\AAA,\pi_0)}(U)~\longrightarrow~\SSS_{(\AAA,\pi_0)}(\alpha_g(U))\quad,
\end{flalign}
for all $g\in G$ and $U\in \CC$.
\end{propo}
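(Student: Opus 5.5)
The proof will be a sequence of direct verifications, each reduced to Lemma \ref{lem:GactionSSS} and Assumption \ref{assu:Grep}. First, I note that $u_g$ is only defined up to a phase, but $\Ad_{u_g}$ is independent of that phase. Hence $g\triangleright(-)$ is well defined on objects and on morphisms, even though $u$ is merely projective.

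\emph{The functors land in $\SSS_{(\AAA,\pi_0)}\vert_{H_0}$ and form an action.} By Lemma \ref{lem:GactionSSS}, $\pi\,\Psi_{\star g^{-1}}=\Psi_{\star g^{-1}}^\ast(\pi)$ is a superselection sector on $H_0$. Conjugating by the unitary $u_g$ preserves the localizability condition of Definition \ref{def:SSSglobal}: replace a witness $v$ by $v\circ u_g^\ast$. The image of a morphism $T$ is $\Ad_{u_g}(T)=u_gTu_g^\ast$. It intertwines the image objects because $T$ intertwines $\pi\,\Psi_{\star g^{-1}}$ and $\pi^\prime\,\Psi_{\star g^{-1}}$, and conjugation by $u_g$ preserves this. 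Preservation of identities, composition and $\ast$ is immediate, since $\Ad_{u_g}$ is a $\ast$-automorphism of $\mathrm{B}(H_0)$.

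For the action property, I use $\Psi_{\star g}\Psi_{\star g^\prime}=\Psi_{\star gg^\prime}$. This follows from $\alpha_g^\ast(\Psi_{g^\prime})\circ\Psi_g=\Psi_{g^\prime g}$ in Remark \ref{rem:eqvPFA} together with the universal property of the colimit \eqref{eqn:universal}. I also use $\Ad_{u_g}\Ad_{u_{g^\prime}}=\Ad_{u_{gg^\prime}}$, which holds because $u$ is projective. Together these give $g\triangleright(g^\prime\triangleright\pi)=\Ad_{u_{gg^\prime}}\,\pi\,\Psi_{\star g^\prime{}^{-1}}\Psi_{\star g^{-1}}=(gg^\prime)\triangleright\pi$. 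The same identities give the statement on morphisms, and $e\triangleright(-)=\id$. Invariance of $\pi_0$ is Assumption \ref{assu:Grep} applied to $g^{-1}$: $\pi_0\Psi_{\star g^{-1}}=\Ad_{u_{g^{-1}}}\pi_0$, so $g\triangleright\pi_0=\Ad_{u_g}\Ad_{u_{g^{-1}}}\pi_0=\pi_0$.

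\emph{Compatibility with strict localization \eqref{eqn:GactionSSSimprovedlocal}.} This is the step I expect to need the most care, because of the bookkeeping with $\alpha_g$. Let $\pi\in\SSS_{(\AAA,\pi_0)}(U)$ and take an orthogonal cospan $(W^\prime\to V)\perp(\alpha_g(U)\to V)$. Applying the orthogonal functor $\alpha_{g^{-1}}$ yields $(\alpha_{g^{-1}}(W^\prime)\to\alpha_{g^{-1}}(V))\perp(U\to\alpha_{g^{-1}}(V))$, using $\alpha_{g^{-1}}\alpha_g=\id$. Therefore $\pi_{\alpha_{g^{-1}}(W^\prime)}=\pi_{0\,\alpha_{g^{-1}}(W^\prime)}$. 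As in \eqref{eqn:tmpG1}, I compute
\begin{flalign*}
(g\triangleright\pi)_{W^\prime}\,=\,\Ad_{u_g}\,\pi_{\alpha_{g^{-1}}(W^\prime)}\,(\Psi_{g^{-1}})_{W^\prime}\,=\,\Ad_{u_g}\,\pi_{0\,\alpha_{g^{-1}}(W^\prime)}\,(\Psi_{g^{-1}})_{W^\prime}\,=\,(g\triangleright\pi_0)_{W^\prime}\,=\,\pi_{0W^\prime}\quad.
\end{flalign*}
This shows $g\triangleright\pi\in\SSS_{(\AAA,\pi_0)}(\alpha_g(U))$. Since these subcategories are full, the corresponding statement for morphisms follows automatically.
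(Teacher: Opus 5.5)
Your proposal is correct and follows essentially the same route as the paper: Lemma~\ref{lem:GactionSSS} plus conjugation by $u_g$ for well-definedness, Assumption~\ref{assu:Grep} for invariance of $\pi_0$, and transporting orthogonal cospans along $\alpha_{g^{-1}}$ to get strict $\alpha_g(U)$-localization. You also verify the action axioms explicitly, via $\Psi_{\star g}\Psi_{\star g^\prime}=\Psi_{\star gg^\prime}$ and the phase-independence of $\Ad_{u_g}$, which the paper leaves implicit.
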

\begin{proof}
Let us start with observing that the $\ast$-functors \eqref{eqn:GactionSSSimproved} 
are indeed well-defined: We have shown in Lemma \ref{lem:GactionSSS} that 
$\pi\,\Psi_{\star g^{-1}}\in \SSS_{(\AAA,\pi_0)}\vert_{H_0}$,
for all $\pi\in \SSS_{(\AAA,\pi_0)}\vert_{H_0}$, and post-composition with $\Ad_{u_g}$ clearly
preserves the localizability conditions for superselection sectors from Definition \ref{def:SSSglobal}.
Invariance of the distinguished object $\pi_0\in \SSS_{(\AAA,\pi_0)}\vert_{H_0}$ is a direct consequence
of Assumption \ref{assu:Grep}. It remains to prove the claim about compatibility with the strict localization conditions.
Given any $U\in \CC$ and $\pi\in \SSS_{(\AAA,\pi_0)}(U)\subseteq \SSS_{(\AAA,\pi_0)}\vert_{H_0}$, 
we use that for every orthogonal cospan  $(U^\prime\to V)\perp (\alpha_g(U)\to V)$ in $\CC^\perp$
the cospan $(\alpha_{g^{-1}}(U^\prime) \to \alpha_{g^{-1}}(V))\perp (U\to \alpha_{g^{-1}}(V))$ is orthogonal too, 
which allows us to compute
\begin{flalign}
\nn (g\triangleright \pi)_{U^\prime}(-)\,&=\,
u_g\circ \pi_{\alpha_{g^{-1}}(U^\prime)}\big((\Psi_{g^{-1}})_{U^\prime}^{}(-)\big)\circ u_g^\ast\\
\nn \,&=\,u_g\circ \pi_{0\alpha_{g^{-1}}(U^\prime)}\big((\Psi_{g^{-1}})_{U^\prime}^{}(-)\big)\circ u_g^\ast\\
\,&=\,(g\triangleright \pi_0)_{U^\prime}(-)\,=\,\pi_{0U^\prime}(-)\quad,
\end{flalign}
where in the second step we used the strict $U$-localization condition for $\pi\in \SSS_{(\AAA,\pi_0)}(U)$
and in the last step we used that $\pi_0$ is invariant.
This implies that  $g\triangleright \pi \in \SSS_{(\AAA,\pi_0)}(\alpha_g(U))$ is strictly localized in $\alpha_g(U)\in \CC$.
\end{proof}

Since \eqref{eqn:GactionSSSimprovedlocal} is defined as the 
restriction and corestriction of a $G$-action on the full $C^\ast$-subcategory 
$\SSS_{(\AAA,\pi_0)}\vert_{H_0}\subseteq \SSS_{(\AAA,\pi_0)}$
of all superselection sectors with Hilbert space $H_0$, it follows directly that the diagram of $\ast$-functors
\begin{equation}\label{eqn:Gactionfunctor}
\begin{tikzcd}
\SSS_{(\AAA,\pi_0)}(U) \ar[d, "g\,\triangleright(-)"'] \ar[r, hookrightarrow] & 
\SSS_{(\AAA,\pi_0)}(\widetilde{U})\ar[d, "g\,\triangleright(-)"]\\
\SSS_{(\AAA,\pi_0)}(\alpha_g(U)) \ar[r, hookrightarrow] &  \SSS_{(\AAA,\pi_0)}(\alpha_g(\widetilde{U}))
\end{tikzcd}
\end{equation}
commutes, for all $g\in G$ and all morphisms $f:U\to \widetilde{U}$ in $\CC$.
This implies that $\triangleright$ endows the locally constant functor $\SSS_{(\AAA,\pi_0)}:\CC\to \CastCat$
from Proposition \ref{prop:SSSfunctor} with a $G$-equivariance structure.
To understand the interplay between this $G$-equivariance structure and the monoidal
structures from Proposition \ref{prop:monoidal}, we have to pass via Corollary \ref{cor:endo} 
to the equivalent models for the $C^\ast$-categories $\SSS_{(\AAA,\pi_0)}(U)$ of 
strictly $U$-localized superselection sectors which are given in terms of $\ast$-endomorphisms. 
In these models the $G$-equivariance structure from Proposition \ref{prop:GactionSSSimproved}
reads as
\begin{flalign}\label{eqn:GactionSSSendo}
g\triangleright(-)\,:\,\SSS_{(\AAA,\pi_0)}(U)~&\longrightarrow~\SSS_{(\AAA,\pi_0)}(\alpha_g(U))\quad,\\
\nn \big(\rho: \pi_0(\AAA)_\star^{\prime\prime}\to \pi_0(\AAA)_\star^{\prime\prime}\big)~&
\longmapsto~\big(\Ad_{u_g}\,\rho\,\Ad_{u_g^\ast}: \pi_0(\AAA)_\star^{\prime\prime}\to \pi_0(\AAA)_\star^{\prime\prime}\big)\quad,\\
\nn \big(T:\rho\to \rho^\prime\big)~&\longmapsto~\big(\Ad_{u_g}(T):\Ad_{u_g}\,\rho\,\Ad_{u_g^\ast}\to 
\Ad_{u_g}\,\rho^\prime\,\Ad_{u_g^\ast}\big)\quad,
\end{flalign}
for all $g\in G$ and $U\in\CC$.
\begin{propo}\label{prop:Gactionmonoidal}
Suppose that Assumptions \ref{assu:monoidal} and \ref{assu:Grep} are satisfied.
Then the $G$-equivariance structure $\triangleright$ from \eqref{eqn:GactionSSSendo}
is compatible with the monoidal structures $\cdiamond$ from Proposition \ref{prop:monoidal}
in the sense that the diagram of $\ast$-functors
\begin{equation}\label{eqn:Gactionmonoidal}
\begin{tikzcd}
\SSS_{(\AAA,\pi_0)}(U)\otimes\SSS_{(\AAA,\pi_0)}(U) \ar[r, "\cdiamond"] 
\ar[d,"g\,\triangleright(-)\otimes g\,\triangleright(-)"'] &[10mm] \SSS_{(\AAA,\pi_0)}(U)\ar[d,"g\,\triangleright(-)"]\\
\SSS_{(\AAA,\pi_0)}(\alpha_g(U))\otimes \SSS_{(\AAA,\pi_0)}(\alpha_g(U)) \ar[r, "\diamond"'] & \SSS_{(\AAA,\pi_0)}(\alpha_g(U))
\end{tikzcd}
\end{equation}
commutes, for all $g\in G$ and $U\in \CC$.
This implies that $\triangleright$ endows the locally constant functor 
$\SSS_{(\AAA,\pi_0)}:\CC\to \Alg_{\mathsf{uAs}}\big(\CastCat\big)$
from Proposition \ref{prop:monoidal} with a $G$-equivariance structure.
\end{propo}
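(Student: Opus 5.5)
The plan is to check commutativity of \eqref{eqn:Gactionmonoidal} directly in the endomorphism model of Corollary \ref{cor:endo}, where $\triangleright$ acts by the formula \eqref{eqn:GactionSSSendo}. First, a preliminary point: \eqref{eqn:GactionSSSendo} only makes sense if $\Ad_{u_g}$ restricts to an automorphism of $\pi_0(\AAA)_\star^{\prime\prime}$. I would verify this from Assumption \ref{assu:Grep}. The relation $\pi_0\Psi_{\star g}=\Ad_{u_g}\pi_0$ gives
\begin{flalign*}
\Ad_{u_g}\big(\pi_{0U}(\AAA(U))\big)=\pi_{0\alpha_g(U)}\big(\AAA(\alpha_g(U))\big)\quad.
\end{flalign*}
Since $\Ad_{u_g}$ is a spatial automorphism of $\mathrm{B}(H_0)$, it commutes with taking commutants, hence maps bicommutants to bicommutants. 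It therefore induces, via the colimit, an automorphism of $\pi_0(\AAA)_\star^{\prime\prime}$ with inverse $\Ad_{u_g^\ast}$. This step is independent of the phase ambiguity of the projective representation, because $\Ad_{u_g}$ is insensitive to phases.

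Next, the object-level check. For $\rho,\dot\rho\in\SSS_{(\AAA,\pi_0)}(U)$ I compute
\begin{flalign*}
(g\triangleright\rho)\cdiamond(g\triangleright\dot\rho)=\Ad_{u_g}\,\rho\,\Ad_{u_g^\ast}\Ad_{u_g}\,\dot\rho\,\Ad_{u_g^\ast}=\Ad_{u_g}\,\rho\,\dot\rho\,\Ad_{u_g^\ast}=g\triangleright(\rho\cdiamond\dot\rho)\quad.
\end{flalign*}
The monoidal unit is preserved since $g\triangleright\id=\id$.

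Then the morphism-level check. For $T:\rho\to\rho^\prime$ and $\dot T:\dot\rho\to\dot\rho^\prime$, I use that $\Ad_{u_g}$ is multiplicative on $\mathrm{B}(H_0)$:
\begin{flalign*}
\Ad_{u_g}(T)\circ(g\triangleright\rho)\big(\Ad_{u_g}(\dot T)\big)=\Ad_{u_g}(T)\circ\Ad_{u_g}\big(\rho(\dot T)\big)=\Ad_{u_g}\big(T\circ\rho(\dot T)\big)\quad.
\end{flalign*}
Here $\rho(\dot T)$ is defined because $\dot T\in\pi_0(\AAA)_\star^{\prime\prime}$ by Lemma \ref{lem:toolsmonoidal} (i). This is exactly $g\triangleright(T\cdiamond\dot T)$, so the square commutes strictly.

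Finally, the $G$-equivariance structure on $\SSS_{(\AAA,\pi_0)}:\CC\to\Alg_{\mathsf{uAs}}(\CastCat)$ follows by combining three facts. The squares \eqref{eqn:Gactionfunctor} are already known to commute. The action axioms $e\triangleright=\id$ and $g^\prime\triangleright(g\triangleright-)=(g^\prime g)\triangleright-$ hold because $\Ad_{u_{g^\prime}}\Ad_{u_g}=\Ad_{u_{g^\prime g}}$; the projective cocycle phases cancel in $\Ad$. The computation above makes each $g\triangleright(-)$ strict monoidal. The only delicate step I foresee is the first one: that the endomorphism-model formula \eqref{eqn:GactionSSSendo} indeed agrees with \eqref{eqn:GactionSSSimproved} under the unique locally weakly continuous extension of Lemma \ref{lem:toolsmonoidal} (iii). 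To settle this, I would note that $\Ad_{u_g}\rho\Ad_{u_g^\ast}$ is locally weakly continuous. On $\pi_0(\AAA)$ it restricts to
\begin{flalign*}
\Ad_{u_g}\,\pi\,\Psi_{\star g^{-1}}\quad,
\end{flalign*}
using $\Ad_{u_g^\ast}\pi_0=\pi_0\Psi_{\star g^{-1}}$ from Assumption \ref{assu:Grep}. Uniqueness of the extension then identifies the two formulas.
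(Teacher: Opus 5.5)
Your proof is correct and follows the paper's argument. The object-level identity $\Ad_{u_g}\rho\,\Ad_{u_g^\ast}\Ad_{u_g}\dot\rho\,\Ad_{u_g^\ast}=\Ad_{u_g}\rho\dot\rho\,\Ad_{u_g^\ast}$ and the morphism-level identity obtained by inserting $g\triangleright\rho=\Ad_{u_g}\rho\,\Ad_{u_g^\ast}$ are exactly the two checks the paper spells out. Your extra verifications, which the paper leaves implicit, are also correct: that $\Ad_{u_g}$ preserves $\pi_0(\AAA)_\star^{\prime\prime}$, that the endomorphism formula matches the representation-model action by uniqueness of the locally weakly continuous extension, and that projective phases cancel in the action axioms.
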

\begin{proof}
Verifying commutativity of the diagram \eqref{eqn:Gactionmonoidal} is an elementary check,
which for completeness we shall briefly spell out. At the
level of objects $\rho,\dot{\rho}\in \SSS_{(\AAA,\pi_0)}(U)$, commutativity 
is shown by
\begin{flalign}
(g\triangleright \rho)\cdiamond (g\triangleright\dot{\rho}) \,=\, 
\Ad_{u_g}\,\rho\,\Ad_{u_g^\ast} \,\Ad_{u_g}\,\dot{\rho}\,\Ad_{u_g^\ast}\,=\,
\Ad_{u_g}\,\rho\,\dot{\rho}\,\Ad_{u_g^\ast} \,=\,g\triangleright(\rho\cdiamond\dot{\rho})\quad.
\end{flalign}
At the level of morphisms $T:\rho\to \rho^\prime$ and $\dot{T}:\dot{\rho}\to \dot{\rho}^\prime$ in $\SSS_{(\AAA,\pi_0)}(U)$,
commutativity is shown by
\begin{flalign}
(g\triangleright T)\cdiamond (g\triangleright\dot{T})\,=\,
u_g\circ T\circ u_g^{\ast}\circ (g\triangleright\rho)\big(u_g\circ \dot{T}\circ u_g^{\ast}\big)
\,=\,u_g\circ T\circ \rho(\dot{T})\circ u_g^{\ast} \,=\,g\triangleright(T\cdiamond\dot{T})\quad,
\end{flalign}
where the second step follows by inserting $g\triangleright\rho = \Ad_{u_g}\,\rho\,\Ad_{u_g^\ast}$.
\end{proof}

Combining the results above leads us to one of the main results of this section.
\begin{theo}\label{theo:SSSeqvPFA}
Suppose that Assumptions \ref{assu:monoidal}, \ref{assu:PFA} and \ref{assu:Grep} are satisfied.
Then $\triangleright$ given in \eqref{eqn:GactionSSSendo} endows
the locally constant prefactorization algebra  
from Theorem \ref{theo:PFAstructure} with the structure of a $G$-equivariant locally constant
prefactorization algebra $\SSS_{(\AAA,\pi_0)}\in\Alg^G_{\P_{\CC^\perp}}\!\big(\Alg_{\mathsf{uAs}}\big(\CastCat\big)\big)$.
\end{theo}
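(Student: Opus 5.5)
We unpack the claim using Remark \ref{rem:eqvPFA}. Take the prefactorization algebra $\SSS_{(\AAA,\pi_0)}$ from Theorem \ref{theo:PFAstructure}. For each $g\in G$ we need an $\Alg_{\P_{\CC^\perp}}\!\big(\Alg_{\mathsf{uAs}}(\CastCat)\big)$-isomorphism
$\Psi^{\SSS}_g:\SSS_{(\AAA,\pi_0)}\to\alpha_g^\ast(\SSS_{(\AAA,\pi_0)})$. Its components are the $\ast$-functors
$(\Psi^{\SSS}_g)_U := g\triangleright(-):\SSS_{(\AAA,\pi_0)}(U)\to\SSS_{(\AAA,\pi_0)}(\alpha_g(U))$ from \eqref{eqn:GactionSSSendo}. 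We then verify four things: each component is a strict monoidal $\ast$-functor, the family is compatible with all operations of $\P_{\CC^\perp}$, each $\Psi^{\SSS}_g$ is invertible, and the cocycle conditions $\Psi^{\SSS}_e=\id$ and $\alpha_g^\ast(\Psi^{\SSS}_{g'})\,\Psi^{\SSS}_g=\Psi^{\SSS}_{g'g}$ hold.

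Well-definedness of the components is Proposition \ref{prop:GactionSSSimproved}, transported to the endomorphism model of Corollary \ref{cor:endo}. Strict monoidality follows from Proposition \ref{prop:Gactionmonoidal} together with preservation of the unit, $g\triangleright\id=\Ad_{u_g}\Ad_{u_g^\ast}=\id$. Because $u$ is only projective, $u_g$ is determined only up to a phase. This does not matter: $\Ad_{u_g}$ is insensitive to phases, so $g\triangleright(-)$ is well defined and the action laws $g'\triangleright(g\triangleright(-))=(g'g)\triangleright(-)$ and $e\triangleright(-)=\id$ hold strictly, as already asserted in Proposition \ref{prop:GactionSSSimproved}. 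These action laws give both the cocycle conditions and invertibility, with $(\Psi^{\SSS}_g)^{-1}=\alpha_g^\ast(\Psi^{\SSS}_{g^{-1}})$.

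The remaining step is compatibility with the factorization products. Arity-one operations are handled by the commuting square \eqref{eqn:Gactionfunctor}. For an arity-zero operation $()\to V$ we need $g\triangleright\oone=\oone$, which was checked above. For an operation $\und f:\und U\to V$ of arity $n>1$, the structure map is the composite of the inclusions $\SSS(f_i)$ with $\cdiamond^n$, as in \eqref{eqn:PFAstructure}. The image operation $\alpha_g(\und f)$ is again orthogonal because $\alpha_g$ is an orthogonal functor. So for objects $\rho_i\in\SSS_{(\AAA,\pi_0)}(U_i)$ we must show
\begin{flalign*}
g\triangleright(\rho_1\cdiamond\cdots\cdiamond\rho_n)\,=\,(g\triangleright\rho_1)\cdiamond\cdots\cdiamond(g\triangleright\rho_n)\quad,
\end{flalign*}
together with the analogous identity on morphisms. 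This follows by iterating Proposition \ref{prop:Gactionmonoidal} in the ambient category $\SSS_{(\AAA,\pi_0)}(V)$ and using \eqref{eqn:Gactionfunctor} to commute $g\triangleright$ with the inclusions $\SSS(f_i)$. Local constancy is a property of the underlying prefactorization algebra and is therefore inherited from Theorem \ref{theo:PFAstructure}.

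The only step needing any care is the consistency between the two models. The action in Proposition \ref{prop:GactionSSSimproved} is defined on representations, whereas the monoidal and prefactorization structures live on $\ast$-endomorphisms of $\pi_0(\AAA)''_\star$. Formula \eqref{eqn:GactionSSSendo} requires that $\Ad_{u_g}$ preserves $\pi_0(\AAA)''_\star$ and that the unique weakly continuous extension of Lemma \ref{lem:toolsmonoidal}(iii) intertwines correctly with conjugation. This is already asserted in the text preceding Proposition \ref{prop:Gactionmonoidal}, and we take it as given. Everything else is a routine check of the equivariance axioms.
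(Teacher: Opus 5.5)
Your proposal is correct and follows the same route as the paper, whose proof simply observes that compatibility with the factorization products \eqref{eqn:PFAstructure} follows from the commuting square \eqref{eqn:Gactionfunctor}, the monoidality square \eqref{eqn:Gactionmonoidal}, and $g\triangleright\oone=\oone$. Your additional checks spell out what the paper leaves implicit: the cocycle and invertibility conditions from the strict action law, the harmlessness of projective phases under $\Ad_{u_g}$, and the explicit arity-zero and arity-$n$ cases.
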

\begin{proof}
Compatibility of $\triangleright$ with the prefactorization algebra structure
\eqref{eqn:PFAstructure} follows directly from \eqref{eqn:Gactionfunctor},
\eqref{eqn:Gactionmonoidal} and the fact that $g\triangleright\oone= \oone$ 
leaves invariant the monoidal units.
\end{proof}

In the literature discussing group actions on superselection sectors, one often
finds the following notion of covariant superselection sectors, see e.g.\ 
\cite[Section 4.1.3]{Roberts1}, \cite[Section 8.4]{Halvorson} and \cite[Definition 3.7]{CRV}.
\begin{defi}\label{def:covariantSSS}
The $C^\ast$-category of \textit{covariant superselection sectors} 
strictly localized in $U\in \CC$ is defined as the full $C^\ast$-subcategory
\begin{flalign}
\SSS^{\mathrm{cov}}_{(\AAA,\pi_0)}(U)\,\subseteq\,\SSS_{(\AAA,\pi_0)}(U)
\end{flalign}
consisting of all objects $\rho\in \SSS_{(\AAA,\pi_0)}(U)$ for which there
exists a projective unitary representation $u^\rho : G\to \mathrm{PU}(H_0)\,,\,g\mapsto u^\rho_g$
such that $\rho \,\Ad_{u_g} \,=\, \Ad_{u^\rho_g}\,\rho$, for all $g\in G$. 
\end{defi}

\begin{rem}\label{rem:covSSS}
Given any $U\in\CC$ and $\rho\in \SSS^\mathrm{cov}_{(\AAA,\pi_0)}(U)$,
one can write
\begin{flalign}
g\triangleright\rho\,=\, \Ad_{u_g}\,\rho\,\Ad_{u_g^\ast} \,=\, \Ad_{u_g\circ u^{\rho\ast}_g}\,\rho\quad,
\end{flalign}
for all $g\in G$. From Lemma \ref{lem:toolsmonoidal} (i) together with
$\rho\in \SSS_{(\AAA,\pi_0)}(U)$ and $g\triangleright \rho\in \SSS_{(\AAA,\pi_0)}(\alpha_g(U))$,
it follows that
\begin{flalign}
u_g \circ u^{\rho\ast}_g\,\in\, \pi_0(\AAA)^{\prime\prime}_\star\big/(\bbC^\times \oone)
\end{flalign}
defines an element of the universal algebra modulo non-zero scalar multiples of the unit, 
for all $g\in G$. These elements will be important for our constructions below.
\end{rem}

Our goal in the remaining part of this section is to show that 
the $G$-equivariant locally constant prefactorization algebra 
structure $\SSS_{(\AAA,\pi_0)}\in \Alg^G_{\P_{\CC^\perp}}\!\big(\Alg_{\mathsf{uAs}}\big(\CastCat\big)\big)$ 
from Theorem \ref{theo:SSSeqvPFA} restricts to the covariant superselection sectors
from Definition \ref{def:covariantSSS}. Let us start with discussing 
the $G$-action \eqref{eqn:GactionSSSendo}. 
Given any $g\in G$, $U\in\CC$ and $\rho \in \SSS^{\mathrm{cov}}_{(\AAA,\pi_0)}(U)$,
we choose any $u^{\rho}: G\to \mathrm{PU}(H_0)$ as in Definition \ref{def:covariantSSS} 
and define the projective unitary representation
\begin{flalign}
u^{g\,\triangleright \rho}\,:\,G\, \longrightarrow~\mathrm{PU}(H_0)\,,~
g^\prime\,\longmapsto\,u^{g\,\triangleright \rho}_{g^\prime}\,:=\,
\Ad_{u_g\circ u^{\rho\ast}_g}(u^{\rho}_{g^\prime})\,=\,
(u_g \circ u^{\rho\ast}_g)\circ u^{\rho}_{g^\prime}\circ( u^\rho_{g} \circ u_g^\ast)\quad.
\end{flalign}
One immediately verifies that $(g\triangleright\rho)\,\Ad_{u_{g^\prime}} =
\Ad_{u^{g\,\triangleright \rho}_{g^\prime}}\,(g\triangleright\rho)$, for all $g^\prime\in G$,
hence we obtain a restriction
\begin{flalign}\label{eqn:GactionSSScov}
g\triangleright(-)\,:\,\SSS^\mathrm{cov}_{(\AAA,\pi_0)}(U)~\longrightarrow~\SSS^\mathrm{cov}_{(\AAA,\pi_0)}(\alpha_g(U))
\end{flalign}
of the $G$-action \eqref{eqn:GactionSSSendo} to covariant superselection sectors.
Let us discuss now the monoidal structures $\cdiamond$ from Proposition \ref{prop:monoidal}.
Given any $U\in\CC$ and $\rho,\dot{\rho}\in \SSS^{\mathrm{cov}}_{(\AAA,\pi_0)}(U)$,
we choose any $u^{\rho},u^{\dot{\rho}}: G\to \mathrm{PU}(H_0)$ as in Definition \ref{def:covariantSSS} 
and define the projective unitary representation
\begin{flalign}
u^{\rho\cdiamond \dot{\rho}}\,:\, G\,\longrightarrow\, \mathrm{PU}(H_0)\,,~g\,\longmapsto\, 
u^{\rho\cdiamond \dot{\rho}}_g\,:=\,u^\rho_g\circ \rho(u_g^\ast\circ u^{\dot{\rho}}_g) \,=\,
\rho(u^{\dot{\rho}}_g \circ u_g^\ast)\circ  u^\rho_g
\quad,
\end{flalign}
where the term involving $\rho$ is well-defined 
since $u^{\dot{\rho}}_g \circ u_g^\ast\in \pi_0(\AAA)^{\prime\prime}_\star/(\bbC^\times\oone)$
by Remark \ref{rem:covSSS}. 
One immediately verifies that 
\begin{flalign}
\nn (\rho\cdiamond \dot{\rho})\,\Ad_{u_g}\,&=\,
\rho\,\dot{\rho}\,\Ad_{u_g}\,=\,\rho\,\Ad_{u^{\dot{\rho}}_g}\,\dot{\rho}\,=\,
\rho\,\Ad_{u_g}\,\Ad_{u_g^\ast \circ u^{\dot{\rho}}_g}\,\dot{\rho}\\
\,&=\,
\Ad_{u^{\rho}_g}\, \rho\,\Ad_{u_g^\ast \circ u^{\dot{\rho}}_g}\,\dot{\rho}\,=\,
\Ad_{u^{\rho}_g \circ \rho(u_g^\ast\circ u^{\dot{\rho}}_g)}\,\rho\,\dot{\rho} \,=\, 
\Ad_{u_g^{\rho\cdiamond \dot{\rho}}}\,(\rho\cdiamond\dot{\rho})\quad,
\end{flalign}
for all $g\in G$, hence we obtain a restriction
\begin{flalign}\label{eqn:monoidalSSScov}
\cdiamond\,:\, \SSS^{\mathrm{cov}}_{(\AAA,\pi_0)}(U)\otimes \SSS^{\mathrm{cov}}_{(\AAA,\pi_0)}(U) 
~\longrightarrow~\SSS^{\mathrm{cov}}_{(\AAA,\pi_0)}(U)
\end{flalign}
of the monoidal structures $\cdiamond$ from Proposition \ref{prop:monoidal}
to covariant superselection sectors. Note that the monoidal unit objects
$\oone=\id\in \SSS^{\mathrm{cov}}_{(\AAA,\pi_0)}(U)$ are covariant with respect
to the reference projective unitary representation $u^{\oone}=u: G\to \mathrm{PU}(H_0)$
from Assumption \ref{assu:Grep}. This leads us to the following result.
\begin{theo}\label{theo:covSSSeqvPFA}
Suppose that Assumptions \ref{assu:monoidal}, \ref{assu:PFA} and \ref{assu:Grep} are satisfied.
Then the $G$-equivariant locally constant prefactorization algebra structure 
from Theorem \ref{theo:SSSeqvPFA} restricts to a $G$-equivariant locally constant prefactorization algebra structure
$\SSS_{(\AAA,\pi_0)}^{\mathrm{cov}}\in\Alg^G_{\P_{\CC^\perp}}\!\big(\Alg_{\mathsf{uAs}}\big(\CastCat\big)\big)$
on the covariant superselection sectors from Definition \ref{def:covariantSSS}.
\end{theo}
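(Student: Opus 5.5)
The plan is to show that the sub-assignment $U\mapsto \SSS^{\mathrm{cov}}_{(\AAA,\pi_0)}(U)$ is closed under every piece of structure of the $G$-equivariant prefactorization algebra from Theorem \ref{theo:SSSeqvPFA}. All the axioms (operadic composition, identities, permutation equivariance, strict monoidality of the factorization products, the equivariance coherences) are equations between $\ast$-functors. Hence they hold automatically after restricting to full $C^\ast$-subcategories, and only closure needs to be checked. Concretely, I will verify closure under:
\begin{itemize}
\item[(a)] the functor structure (the inclusions for $f:U\to\widetilde{U}$);
\item[(b)] the monoidal products and units in each $\SSS^{\mathrm{cov}}_{(\AAA,\pi_0)}(U)$;
\item[(c)] the factorization products \eqref{eqn:PFAstructure};
\item[(d)] the $G$-action \eqref{eqn:GactionSSSendo}.
\end{itemize}
Local constancy is a separate claim and is treated last.

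Step (a) is immediate: the covariance condition in Definition \ref{def:covariantSSS} does not refer to $U$. Therefore $\SSS^{\mathrm{cov}}_{(\AAA,\pi_0)}(U)=\SSS^{\mathrm{cov}}_{(\AAA,\pi_0)}(\widetilde U)\cap\SSS_{(\AAA,\pi_0)}(U)$, and the inclusion $\SSS_{(\AAA,\pi_0)}(U)\subseteq\SSS_{(\AAA,\pi_0)}(\widetilde U)$ from Lemma \ref{lem:SSSproperties} restricts. Step (b) is the computation preceding the theorem, which yields \eqref{eqn:monoidalSSScov}, together with the observation that $\oone$ is covariant with $u^{\oone}=u$. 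One point deserves a brief check there: $u^{\rho\cdiamond\dot\rho}$ should again be a projective representation, i.e.\ multiplicative modulo scalars. This follows since $u_g\mapsto u^\rho_g$ intertwines via $\rho$ and the cocycle $\rho(u_g^\ast u^{\dot\rho}_g)$ composes correctly. Strictly, Definition \ref{def:covariantSSS} only needs existence of some $u^\rho$, and one could note that any such family satisfies the covariance relation. Step (c) follows from (a) and (b), because the factorization products are composites of inclusions with iterated $\cdiamond$ and units. Step (d) is \eqref{eqn:GactionSSScov}. The equivariance coherences ($\Psi_e=\id$, composition) are inherited from Theorem \ref{theo:SSSeqvPFA}.

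For local constancy, I must show that the inclusion $\SSS^{\mathrm{cov}}_{(\AAA,\pi_0)}(U)\subseteq \SSS^{\mathrm{cov}}_{(\AAA,\pi_0)}(\widetilde U)$ is an equivalence. Full faithfulness is clear, so essential surjectivity is the main obstacle. Given a covariant $\rho$ strictly localized in $\widetilde U$, I will mimic Lemma \ref{lem:SSSproperties}(ii). Lemma \ref{lem:SSSproperties}(ii) together with property (1) of Corollary \ref{cor:endo} gives a unitary $v\in\pi_0(\AAA)^{\prime\prime}_\star$ such that $\Ad_v\rho$ is strictly localized in $U$. Then $\Ad_v\rho\,\Ad_{u_g}=\Ad_{v u^\rho_g}\rho=\Ad_{vu^\rho_g v^\ast}\Ad_v\rho$, so $\Ad_v\rho$ is covariant with representation $g\mapsto v u^\rho_g v^\ast$, which is again projective. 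Moreover $v:\rho\to\Ad_v\rho$ is a unitary morphism. By Lemma \ref{lem:toolsmonoidal}(i) it lies in the required von Neumann algebra, so it is an isomorphism in $\SSS^{\mathrm{cov}}_{(\AAA,\pi_0)}(\widetilde U)$. This gives essential surjectivity.

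Finally, the inclusion is strictly monoidal, as a restriction of a strict monoidal functor. A fully faithful, essentially surjective strict monoidal $\ast$-functor is a unitary monoidal equivalence, which completes local constancy.
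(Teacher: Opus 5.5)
Your proposal is correct and follows the paper's route: the paper likewise only checks that the inclusions, the products $\cdiamond$ with unit $\oone$ (covariant for $u^{\oone}=u$), and the action $g\triangleright(-)$ preserve covariance, with every axiom then inherited from Theorem \ref{theo:SSSeqvPFA}. Two of your steps fill in points the paper leaves implicit, and the first is genuinely needed because local constancy does not automatically pass to full subcategories: the essential-surjectivity argument for local constancy, where $\Ad_v\rho$ is covariant for $g\mapsto v\,u^\rho_g\,v^\ast$, and the multiplicativity of $u^{\rho\cdiamond\dot\rho}$ modulo scalars, which follows from $\rho(x)\,u^\rho_h=u^\rho_h\,\rho(u_h^\ast x\, u_h)$.
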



\section{\label{sec:examples}Example: Double cones in Minkowski spacetime}
In this section we apply our framework to one 
of the prime examples of superselection theory,
given by AQFTs which are defined on double cones in the $(n\geq 2)$-dimensional
Minkowski spacetime $\bbM^n := \bbR^1\times\Sigma^{n-1}:= \bbR^1\times \bbR^{n-1}$, with metric
convention $g= -\dd t^2 + \dd\mathbf{x}^2$. The main result of this section
is a conceptually clean and elegant proof that, in this
particularly simple geometric context, our locally constant 
prefactorization algebras of superselection sectors 
from Theorem \ref{theo:PFAstructure} admit an equivalent
yet simpler description in terms of $\bbE_n$-monoidal $C^\ast$-categories.
Our proof unravels and explains that this well-known $\bbE_n$-monoidal structure
emerges, through Dunn-Lurie additivity $\bbE_n\simeq \bbE_1\otimes \bbE_{n-1}$,
from an $\bbE_1$-monoidal structure of analytical origin in Haag duality
and an $\bbE_{n-1}$-monoidal structure of geometric origin 
in the $(n-1)$-dimensional spatial factor $\Sigma^{n-1}$.
We conclude with a brief observation about $G$-equivariant
generalizations of this result.
\begin{defi}\label{def:doublecone}
A \textit{double cone} in the $(n\geq 2)$-dimensional Minkowski spacetime $\bbM^n$
is a non-empty open subset of the form 
\begin{flalign}
U \,=\, I^-(\{p^+\})\cap I^+(\{p^-\})\,\subseteq\, \bbM^n\quad,
\end{flalign}
where $p^\pm\in\bbM^n$ are two points and $I^\pm(\{p^\mp\})\subseteq \bbM^n$ 
denote their chronological future/past.\footnote{Note that $U\neq \emptyset$
if and only if $p^+ \in I^{+}(\{p^-\})$. Geometrically, $p^+$ is the future 
tip of the double cone and $p^-$ is its past tip.}
We denote by $\mathbf{DCone}_{\bbM^n}^\perp$ the orthogonal category
whose objects are all double cones $U\subseteq \bbM^n$, morphisms are all
subset inclusions $U\subseteq U^\prime$, and orthogonality relation is given by causal disjointness,
i.e.\ $(U_1\subseteq V)\perp(U_2\subseteq V)$ if and only if $J(U_1)\cap U_2=\varnothing$,
where $J(U_1):= J^+(U_1)\cup J^-(U_1)\subseteq \bbM^n$ denotes the union of
the causal future and past of $U_1\subseteq \bbM^n$.
\end{defi}

\begin{rem}\label{rem:doublecone}
To simplify geometric arguments about double cones, we will frequently make use of the canonical
affine space structure $\bbM^n\times \bbM^n\to \bbR^n\,,~(p,p^\prime)\mapsto p-p^\prime$ 
on the Minkowski spacetime $\bbM^n$, with corresponding free and transitive
translation action denoted by $\bbM^n\times \bbR^n\to \bbM^n\,,~(p,v)\mapsto p+v$. With respect to this
affine space structure, every double cone $U = I^-(\{p^+\})\cap I^+(\{p^-\})\subseteq \bbM^n$
is a convex open subset, i.e.\ $p^\prime + s(p-p^\prime)\in U$ for all $p,p^\prime\in U$ and $s\in[0,1]$,
with distinguished center point $p^\circ := p^- + \frac{1}{2}(p^+-p^-)\in U$.
\end{rem}

\begin{lem}\label{lem:doublecone}
The orthogonal category $\mathbf{DCone}_{\bbM^n}^\perp$ of double cones from Definition \ref{def:doublecone}
is filtered and satisfies the geometric properties from Assumptions \ref{assu:monoidal} (1) and \ref{assu:PFA}.
\end{lem}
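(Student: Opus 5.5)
We verify the three properties separately, using the affine structure from Remark \ref{rem:doublecone} throughout. Since $\mathbf{DCone}_{\bbM^n}$ is a poset, filteredness amounts to non-emptiness and the existence of an upper bound for any two double cones $U_1,U_2$. Non-emptiness is clear. For the upper bound, the closures $\overline{U_1}\cup\overline{U_2}$ form a compact set $K$. Take the center $p^\circ$ of $U_1$ and the unit timelike vector $e_0=\partial_t$. For $T>0$ large enough, the double cone with tips $p^\circ\pm T e_0$ contains $K$, because these double cones exhaust $\bbM^n$ as $T\to\infty$: every point lies in $I^-(p^\circ+Te_0)\cap I^+(p^\circ-Te_0)$ once $T$ exceeds $|t-t^\circ|+|\mathbf{x}-\mathbf{x}^\circ|$.

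For Assumption \ref{assu:monoidal} (1), given $U$ with tips $p^\pm$, we take $U'$ to be a small double cone centered at a point far away in a spacelike direction, say $p^\circ + L e_1$ with $L$ large. Then $J(U')\cap U=\varnothing$, since the causal shadow of a small region around $p^\circ+Le_1$ misses the bounded set $U$ whenever $L$ exceeds its spatial size plus time extent. We then choose $V$ as any common upper bound of $U$ and $U'$, which exists by filteredness.

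For Assumption \ref{assu:PFA}, we start from $U_1,U_2\subseteq\widetilde U$ with $J(U_1)\cap U_2=\varnothing$. The plan is to ``pull $U_i$ out'' to spacelike infinity along disjoint directions.
\begin{itemize}
\item First, $U_1$ and $U_2$ are causally disjoint double cones, and their closures are causally disjoint as well, perhaps after noting that double cones are determined by their tips. This yields a spacelike vector separating them. Concretely, there is a spacelike hyperplane-type splitting: choose a unit spatial direction $e$ in a suitable Lorentz frame such that $U_1$ lies on the ``left'' of $U_2$.
\item Then we take $U_i'$ to be translates $U_i\mp L e$ for large $L$, and $V_i$ to be a double cone containing $U_i$ and $U_i'$. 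A natural choice is the double cone with tips $p_i^\pm$ slightly extended, whose base contains the segment between $U_i$ and $U_i'$; it extends to the left for $i=1$ and to the right for $i=2$.
\item The required relations are: $U_i'\perp U_i$ inside $V_i$, which holds for $L$ large relative to the size of $U_i$; $V_1\perp V_2$; $U_i'\perp\widetilde U$; and a cone $W$ containing everything.
\end{itemize}

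The main obstacle is $V_1\perp V_2$. A double cone containing both $U_1$ and a far translate $U_1-Le$ is huge (it has time extent about $L$), so it would fail to be causally disjoint from $U_2$. We therefore must not translate purely spatially. Instead we should move the $U_i'$ along a spacelike direction lying within the causal complement, so that the $V_i$ stay inside the two disjoint ``wedges'' determined by $U_1$ and $U_2$. We first try Lorentz-boosting to a frame in which $U_1$ and $U_2$ lie in opposite wedges $\{x^1>|t|+c\}$ and $\{x^1<-|t|-c\}$ after translation; such a frame exists since causally disjoint double cones can be separated by a pair of opposite wedges with a common edge (two tips mutually spacelike suffices). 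Wedges are causally complete and causally disjoint from each other. We then choose $U_i'$ and $V_i$ within the respective wedges: $V_i$ is a double cone inside the wedge containing $U_i$ and a point far out along $\pm e_1$, and $U_i'$ is a tiny double cone near the far end of $V_i$, lying outside $J(\widetilde U)$ and outside $J(U_i)$. This is possible because far out in a wedge, a small double cone near the spacelike ray from the edge has causal shadow missing bounded sets. Finally $W$ is any upper bound, by filteredness. The remaining work is checking these wedge-containment inequalities explicitly.
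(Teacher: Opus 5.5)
Your final wedge argument is correct in outline, and for Assumption~\ref{assu:PFA} it takes a genuinely different route from the paper. For filteredness and Assumption~\ref{assu:monoidal}~(1) you just spell out what the paper calls obvious.

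\textbf{The paper's route.} It enlarges $U_1$ and $U_2$ independently. It intersects the segment joining the centres $p_1^\circ,p_2^\circ$ with the light cones of the tips, slides each $p_i^\pm$ along the resulting null half-line until it leaves $\mathrm{cl}(\widetilde U)$, and asserts that $V_1,V_2$ are causally disjoint ``by construction''.

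\textbf{Your route.} You first fix one pair of opposite wedges $\mathcal W_1\supseteq U_1$, $\mathcal W_2\supseteq U_2$ and only then enlarge inside them. So $V_1\perp V_2$ holds however large the $V_i$ must be to escape $J(\widetilde U)$. Concretely, take $V_i$ of the paper's shape, but slide the tips of $U_i$ along the null vectors $k,k'$ that define the wedges. This keeps $V_i\subseteq\mathcal W_i$ because $\langle k,k\rangle=\langle k',k'\rangle=0$.

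\textbf{What your route buys.} For $n\geq3$, suppose the time axes of $U_1,U_2$ are not coplanar with the connecting segment. Then, as the tips are pushed out, the paper's $V_1,V_2$ exhaust two wedges with non-parallel edges, and such wedges are never causally disjoint. Since the tips must leave $\mathrm{cl}(\widetilde U)$, the claim $V_1\perp V_2$ can fail for large $\widetilde U$. For example, in coordinates $(t,x^1,x^2)$, take $U_1$ with tips $(\pm1,0,0)$, $U_2$ with tips $(0,10,0)\pm\tfrac13(5,0,4)$, and $\widetilde U$ with tips $(\pm1000,5,0)$. I find that every admissible choice of the $l_i^\pm$ gives $(0,0,31.5)\in V_1$ and $(25.2,10,31.5)\in V_2$, which are timelike separated. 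Your common-wedge step is exactly what avoids this; the paper's recipe is fine for $n=2$ or coplanar axes.

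\textbf{What you still owe: the separation lemma.} You only assert it, but it has a short proof. For $k$ future null, a double cone with tips $p^\pm$ lies in $\{\langle x,k\rangle>\alpha\}$ iff $\langle p^+,k\rangle\geq\alpha$. It lies in $\{\langle x,k\rangle<\alpha\}$ iff $\langle p^-,k\rangle\leq\alpha$. For past null $k'$ the roles of $p^\pm$ are swapped. Hence suitable $\alpha,\alpha'$ with $U_1\subseteq\{\langle x,k\rangle>\alpha,\ \langle x,k'\rangle>\alpha'\}$ and $U_2\subseteq\{\langle x,k\rangle<\alpha,\ \langle x,k'\rangle<\alpha'\}$ exist as soon as $\langle p_1^+-p_2^-,k\rangle\geq0$ and $\langle p_2^+-p_1^-,-k'\rangle\geq0$. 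Causal disjointness of $U_1,U_2$ says exactly that neither $p_1^+-p_2^-$ nor $p_2^+-p_1^-$ is future timelike, so such future null $k$ and $-k'$ exist. They are automatically non-parallel, so the two regions really are opposite wedges: $(p_1^+-p_1^-)+(p_2^+-p_2^-)$ is future timelike, hence pairs negatively with every future null vector.

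\textbf{One false claim to drop.} The closures of $U_1,U_2$ need not be causally disjoint. This fails when tips are lightlike related; for example, adjacent diamonds in $\bbM^2$ share a boundary point. That is why you really need the wedges with a common edge ($c=0$), as your parenthetical allows.
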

\begin{proof}
Filteredness and the geometric property from Assumption \ref{assu:monoidal} (1) are obvious,
while the one from Assumption \ref{assu:PFA} requires a brief argument.
Let $(U_1\subseteq \widetilde{U})\perp (U_2\subseteq\widetilde{U})$ be any orthogonal cospan
in $\mathbf{DCone}_{\bbM^n}^\perp$. The required orthogonal cospan
$(V_1\subseteq W)\perp (V_2\subseteq W)$ from Assumption \ref{assu:PFA} 
can be obtained from the following geometric construction: Use that $U_i =  I^-(\{p_i^+\})\cap I^+(\{p_i^-\})$,
for some points $p^\pm_i\in \bbM^n$, and consider the spacelike line segment 
$L:=\big\{p_1^\circ+s(p^\circ_2 - p^\circ_1)\,:\, s\in [0,1] \big\}\subseteq\bbM^n$
between the two center points $p_i^\circ\in U_i$. This line segment
intersects the light cone $J^\mp(\{p_i^\pm\})\setminus I^\mp(\{p_i^\pm\})\subseteq \bbM^n$ 
precisely once and we denote by $q_i^\pm\in \bbM^n$
the intersection point. By construction, $p_i^\pm-q_i^\pm\in\bbR^n$ is a future/past pointing
null vector and the half-line $H^\pm_i :=\big\{q_i^\pm + s(p_i^\pm-q_i^\pm)\,:\, s\in[0,\infty)\big\}\subseteq \bbM^n$
passes through $p_i^\pm$ and eventually enters the complement $\bbM^n\setminus \mathrm{cl}(\widetilde{U})$
of the closure of the double cone $\widetilde{U}\subseteq \bbM^n$.
Picking any points $l^\pm_i\in H^\pm_i \cap \big(\bbM^n\setminus \mathrm{cl}(\widetilde{U})\big)$ in the complement, we define 
the double cones
\begin{flalign}
V_i \,:=\,  I^-(\{l_i^+\})\cap I^+(\{l_i^-\})\,\subseteq\,\bbM^n\quad,
\end{flalign}
which by construction are causally disjoint and satisfy $U_i\subseteq V_i$, for $i=1,2$.
By filteredness of $\mathbf{DCone}_{\bbM^n}^\perp$ there exists a 
double cone $W\subseteq \bbM^n$ such that $\widetilde{U}\subseteq W$ and $(V_1\subseteq W)\perp (V_2\subseteq W)$. 
Furthermore, the intersections $V_i\cap \big(\bbM^n\setminus J\big(\mathrm{cl}(\widetilde{U})\big)\big)\neq \varnothing$
of these $V_i$ with the causal complement of $\mathrm{cl}(\widetilde{U})\subseteq \bbM^n$ are by construction
non-empty and open, so there exist double cones $U_i^\prime 
\subseteq V_i\cap \big(\bbM^n\setminus J\big(\mathrm{cl}(\widetilde{U})\big)\big)$
such that $(U_1^\prime\subseteq W)\perp (\widetilde{U}\subseteq W)$
and $(U_2^\prime\subseteq W)\perp (\widetilde{U}\subseteq W)$. This 
provides all the required data in Assumption \ref{assu:PFA}.
\end{proof}

Given any $\AAA\in\AQFT\big(\mathbf{DCone}_{\bbM^n}^\perp\big)$ that satisfies
Haag duality in a locally faithful vacuum representation $\pi_0 : \AAA_\star\to \mathrm{B}(H_0)$,
the hypotheses of Theorem \ref{theo:PFAstructure} are met, so we obtain a locally constant
prefactorization algebra on double cones in $\bbM^n$ 
\begin{flalign}\label{eqn:SSSDcone}
\SSS_{(\AAA,\pi_0)}\,\in\,\Alg_{\P_{\mathbf{DCone}_{\bbM^n}^\perp}}^{}\!\Big(\Alg_{\mathsf{uAs}}\big(\CastCat\big)\Big)
\end{flalign}
which describes its superselection sector theory. We will now show that
this algebraic structure on the Minkowski spacetime $\bbM^n = \bbR^1\times\Sigma^{n-1}$ admits,
through the affine projection map $\pi : \bbM^n\to \Sigma^{n-1}$, an equivalent but simpler
description on the spatial factor $\Sigma^{n-1}=\bbR^{n-1}$, which
is governed by the following orthogonal category.
\begin{defi}\label{def:Convast}
We denote by $\mathbf{Conv}^\perp_{\Sigma^{n-1}}$ the orthogonal category whose objects are
all non-empty convex open subsets $S\subseteq \Sigma^{n-1}$, each endowed
with a distinguished point $q^\circ\in S$, morphisms are all subset inclusions $S\subseteq S^\prime$
(not necessarily preserving the distinguished points), and orthogonality relation is given by disjointness, i.e.\
$(S_1\subseteq T)\perp (S_2\subseteq T)$ if and only if $S_1\cap S_2=\emptyset$.
\end{defi}

\begin{lem}\label{lem:doublecone2conv}
The affine projection map $\pi : \bbM^n\to \Sigma^{n-1}$ defines an orthogonal functor
\begin{flalign}\label{eqn:doublecone2conv}
\pi\,:\, \mathbf{DCone}_{\bbM^n}^\perp~\longrightarrow~\mathbf{Conv}_{\Sigma^{n-1}}^{\perp}
\end{flalign}
which assigns to each double cone $U = I^-(\{p^+\})\cap I^+(\{p^-\})\subseteq \bbM^n$
its image $\pi(U)\subseteq \Sigma^{n-1}$,
endowed with the distinguished point $\pi(p^\circ)\in \pi(U)$ which is obtained
from the center point $p^\circ := p^- + \frac{1}{2}(p^+-p^-)\in U$ of the double cone.
\end{lem}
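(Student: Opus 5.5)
To prove Lemma \ref{lem:doublecone2conv}, I would verify three things in turn: that $\pi(U)$ together with $\pi(p^\circ)$ is an object of $\mathbf{Conv}^\perp_{\Sigma^{n-1}}$, that $\pi$ is a functor, and that $\pi$ preserves the orthogonality relations.

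For the first point, $\pi:\bbM^n=\bbR^1\times\Sigma^{n-1}\to\Sigma^{n-1}$ is an affine surjection, and it is moreover an open map, being a linear projection. By Remark \ref{rem:doublecone}, a double cone $U$ is non-empty, open and convex. Its image $\pi(U)$ is therefore non-empty and open. It is also convex: given $\pi(p),\pi(p^\prime)$ with $p,p^\prime\in U$, affineness gives $\pi(p^\prime)+s(\pi(p)-\pi(p^\prime))=\pi\big(p^\prime+s(p-p^\prime)\big)\in\pi(U)$. Since $p^\circ\in U$, the distinguished point $\pi(p^\circ)$ lies in $\pi(U)$. (Concretely, $\pi(U)$ is an open ball, but this is not needed.)

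Functoriality is immediate. Morphisms in both categories are subset inclusions, and $U\subseteq U^\prime$ implies $\pi(U)\subseteq\pi(U^\prime)$. Morphisms in $\mathbf{Conv}^\perp_{\Sigma^{n-1}}$ are not required to preserve distinguished points, so no compatibility of centers needs checking. Identities and compositions are preserved automatically because hom-sets are subsingletons.

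The main step is orthogonality. Suppose $(U_1\subseteq V)\perp(U_2\subseteq V)$, i.e.\ $J(U_1)\cap U_2=\varnothing$. We must show $\pi(U_1)\cap\pi(U_2)=\varnothing$. Argue by contraposition. Suppose $\mathbf{x}\in\pi(U_1)\cap\pi(U_2)$. Then there exist $p_1=(t_1,\mathbf{x})\in U_1$ and $p_2=(t_2,\mathbf{x})\in U_2$. The difference $p_2-p_1=(t_2-t_1,\mathbf{0})$ is timelike or zero, so $p_2\in J^+(\{p_1\})\cup J^-(\{p_1\})\subseteq J(U_1)$. Hence $p_2\in J(U_1)\cap U_2$, contradicting causal disjointness. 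Transposition- and composition-closure are then automatic, since orthogonality is checked on each cospan individually. This completes the verification that \eqref{eqn:doublecone2conv} is an orthogonal functor.
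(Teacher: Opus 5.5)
Your proposal is correct and follows the paper's proof: well-definedness comes from affine projection preserving non-empty convex open sets, and orthogonality from the fact that points with equal spatial projection are causally related. The paper phrases the latter step via the inequality $d^2_{\bbM^n}(p,p^\prime)\leq d^2_{\Sigma^{n-1}}(\pi(p),\pi(p^\prime))$. Your vertical-pair contrapositive is just that inequality specialized to $\pi(p)=\pi(p^\prime)$.
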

\begin{proof}
The underlying functor is well-defined because double cones are in particular 
non-empty convex open subsets, and these properties are 
preserved under the affine projection map $\pi : \bbM^n\to \Sigma^{n-1}$.
It therefore remains to show that this functor preserves the orthogonality relations.
Given any orthogonal cospan $(U_1\subseteq V)\perp (U_2\subseteq V)$ in 
$\mathbf{DCone}_{\bbM^n}^\perp$, one has by definition  
that $U_1$ and $U_2$ are causally disjoint subsets of  $\bbM^n$. This implies that in our
signature $(-+\cdots+)$ the squared Minkowski distance $d_{\bbM^n}^2(p_1,p_2)>0$ is positive, 
for all $p_1\in U_1$ and $p_2\in U_2$. Using further that
the squared Euclidean distance on $\Sigma^{n-1}$ satisfies
$d_{\bbM^n}^2(p,p^\prime)\leq d_{\Sigma^{n-1}}^2(\pi(p),\pi(p^\prime))$, for
all $p,p^\prime\in \bbM^n$, we conclude that $\pi(U_1)\cap \pi(U_2)=\varnothing$
and hence $(\pi(U_1)\subseteq \pi(V))\perp (\pi(U_2)\subseteq \pi(V))$ in 
$\mathbf{Conv}_{\Sigma^{n-1}}^\perp$.
\end{proof}

The orthogonal functor from Lemma \ref{lem:doublecone2conv} allows
us to compare prefactorization algebras on the Minkowski spacetime $\bbM^n=\bbR^1\times \Sigma^{n-1}$
with those on the spatial factor $\Sigma^{n-1}$. To facilitate
a comparison which also takes into account the crucial local constancy property
from Theorem \ref{theo:PFAstructure}, it will be convenient to embed 
our particular problem into Lurie's general theory of $\infty$-categories and $\infty$-operads \cite{LurieHA}.
As explained in more detail in \cite[Section 4.1]{BCNSsectors}, the orthogonal
functor from Lemma \ref{lem:doublecone2conv} defines a pullback $\infty$-functor
\begin{flalign}\label{eqn:pullback}
\pi^\ast\,:\, \AAlg_{\P_{\mathbf{Conv}_{\Sigma^{n-1}}^\perp}}^{\mathrm{l.c.}}\!\Big(\AAlg_{\bbE_1}\big(\CCastCat\big)\Big)~\longrightarrow~\AAlg_{\P_{\mathbf{DCone}_{\bbM^n}^\perp}}^{\mathrm{l.c.}}\!\Big(\AAlg_{\bbE_1}\big(\CCastCat\big)\Big)
\end{flalign}
between the corresponding $\infty$-categories of locally constant prefactorization algebras,
where $\CCastCat$ denotes the symmetric monoidal $\infty$-category
obtained from the combinatorial simplicial symmetric monoidal model structure on $\CastCat$ from \cite{CastCat}
and $\AAlg_{\bbE_1}\big(\CCastCat\big)$ denotes the symmetric monoidal $\infty$-category of $\bbE_1$-algebras in $\CCastCat$.
(The latter provides the $\infty$-categorical enhancement of the symmetric monoidal category
$\Alg_{\mathsf{uAs}}\big(\CastCat\big)$ of strict monoidal $C^\ast$-categories.)
Our locally constant prefactorization algebra of superselection sectors from Theorem \ref{theo:PFAstructure},
see also \eqref{eqn:SSSDcone}, defines an object 
\begin{flalign}\label{eqn:SSSDconeinfty}
\SSS_{(\AAA,\pi_0)}\,\in\,
\AAlg_{\P_{\mathbf{DCone}_{\bbM^n}^\perp}}^{\mathrm{l.c.}}\!\Big(\AAlg_{\bbE_1}\big(\CCastCat\big)\Big)
\end{flalign}
in the codomain of this $\infty$-functor. 
\begin{theo}\label{theo:DCone}
The pullback $\infty$-functor \eqref{eqn:pullback} is an equivalence of $\infty$-categories.
This provides an equivalent description of the locally constant prefactorization algebra
of superselection sectors \eqref{eqn:SSSDconeinfty} in terms of a locally constant prefactorization algebra 
\begin{flalign}\label{eqn:SSSConvinfty}
\SSS^{\Sigma^{n-1}}_{(\AAA,\pi_0)}\,\in\,
\AAlg_{\P_{\mathbf{Conv}_{\Sigma^{n-1}}^\perp}}^{\mathrm{l.c.}}\!\Big(\AAlg_{\bbE_1}\big(\CCastCat\big)\Big)
\end{flalign}
on the spatial factor $\Sigma^{n-1}$ of the Minkowski spacetime $\bbM^n=\bbR^1\times\Sigma^{n-1}$.
\end{theo}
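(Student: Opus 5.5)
The plan is to use the standard criterion for pullbacks along orthogonal functors to induce equivalences on locally constant prefactorization algebras, presumably the one developed in \cite[Section 4.1]{BCNSsectors} and Appendix \ref{app:technical}. Locally constant $\P_{\CC^\perp}$-algebras with values in a symmetric monoidal $\infty$-category are algebras over the $\infty$-operadic localization $\P_{\CC^\perp}[W^{-1}]$, where $W$ denotes all $1$-ary operations. Hence it suffices to show that the orthogonal functor $\pi$ from Lemma \ref{lem:doublecone2conv} induces an equivalence of $\infty$-operads $\P_{\mathbf{DCone}_{\bbM^n}^\perp}[W^{-1}]\xrightarrow{\sim}\P_{\mathbf{Conv}_{\Sigma^{n-1}}^\perp}[W^{-1}]$. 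Restricting along an equivalence of $\infty$-operads then gives the equivalence \eqref{eqn:pullback}, with values in $\AAlg_{\bbE_1}(\CCastCat)$ or in any other target.

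To identify these localizations, I would use the standard fact (as in Lurie's proof of \cite[Theorem 5.4.5.9]{LurieHA}) that, for orthogonal posets whose objects are contractible and whose multimorphism sets are sufficiently nice, the localization is computed by configuration spaces. Concretely, the $k$-ary operations of $\P[W^{-1}]$ are modelled by the classifying spaces of the posets of $k$-tuples of pairwise orthogonal objects (together with a target), or equivalently by the configuration spaces obtained by shrinking regions to points.
\begin{itemize}
\item On the spatial side, shrinking convex sets to their distinguished points identifies this with $\mathrm{Conf}_k(\bbR^{n-1})$, which yields the $\bbE_{n-1}$-operad. The distinguished points in Definition \ref{def:Convast} are included precisely to make this comparison strict.
\item On the Lorentzian side, shrinking double cones to their center points yields the space $\mathrm{Conf}^{\perp}_k(\bbM^n)$ of $k$ pairwise spacelike-separated points.
\end{itemize}

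The key geometric step is then to show that the projection $\mathrm{Conf}^{\perp}_k(\bbM^n)\to\mathrm{Conf}_k(\Sigma^{n-1})$ is a homotopy equivalence. This map is well defined by the argument of Lemma \ref{lem:doublecone2conv}. I would prove the equivalence by exhibiting a deformation retraction onto the time-zero slice, via the straight-line homotopy $(t_i,\mathbf{x}_i)\mapsto(s\,t_i,\mathbf{x}_i)$ for $s\in[0,1]$. Scaling time differences down preserves spacelike separation, since $-(s\Delta t)^2+|\Delta\mathbf{x}|^2\geq -\Delta t^2+|\Delta\mathbf{x}|^2>0$. The time-zero slice of $\mathrm{Conf}^{\perp}_k(\bbM^n)$ is exactly $\mathrm{Conf}_k(\Sigma^{n-1})$, so the projection is homotopy inverse to the inclusion.

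The same argument must also be made compatible with the operadic compositions, and must work at the level of the poset models rather than only the point models. At the poset level, time-rescaling of double cones about $t=0$ preserves causal disjointness and commutes with inclusions. This gives natural zigzags showing that the induced functor on the relevant categories of multimorphisms has contractible fibres, so a Quillen Theorem A type argument should apply.

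I expect the main obstacle to be the rigorous identification of $\P_{\CC^\perp}[W^{-1}]$ with configuration-space operads for these specific orthogonal posets, that is, verifying Lurie's hypotheses. The delicate point is controlling the classifying spaces of the posets of disjoint tuples, which requires a cofinality or hypercover argument. I would try first to reduce everything to the criterion that $\pi$ induces, for each $k$ and each target, a weak equivalence on the nerves of the categories of $k$-ary operations localized at $W$. Each such equivalence would then be checked via the contractibility of the fibres, using convexity and the time-rescaling homotopy.
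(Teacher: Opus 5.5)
Your overall strategy is the paper's: identify locally constant algebras with algebras over the localization at unary operations, shrink regions to distinguished points, and compare Lorentzian with Euclidean configuration spaces. The gap is that the configuration-space comparison you prove is not the one the argument needs, and your homotopy cannot give the needed one.

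\textbf{The comparison must be made per target.} The usable criterion here is weak approximations (\cite[Theorem 2.3.3.23]{LurieHA}, \cite[Section 4.2]{Harpaz}), and your own final reduction (``for each $k$ and each target'') is of this kind. Its condition is local in the target. For every double cone $V$ one must show that $\hocolim_{\und{U}\in\mathbf{A}^m_V}\prod_i\pi(U_i)\to\mathrm{Conf}_m(\pi(V))$ is a weak equivalence, where $\mathbf{A}^m_V$ is the poset of $m$-tuples of pairwise causally disjoint double cones in $V$. Lurie's Theorem A.3.1 applies because double cones form a basis and the tuples with $p_i\in U_i$ form a cofiltered poset; it gives $\hocolim_{\mathbf{A}^m_V}\prod_i U_i\simeq\mathrm{cConf}_m(V)$. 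Contractibility of the $U_i$ and $\pi(U_i)$ then identifies the two homotopy colimits. So what is needed is that $\pi:\mathrm{cConf}_m(V)\to\mathrm{Conf}_m(\pi(V))$ is an equivalence for \emph{every} double cone $V$, not for $\bbM^n$, which is not an object of $\mathbf{DCone}_{\bbM^n}^\perp$. Your heuristic that the $k$-ary operations of the localization are classifying spaces of posets of tuples ``together with a target'' would let you pass to the global configuration space. It is not a general fact about operadic localization.

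\textbf{Your rescaling does not stay inside $V$.} The map $(t,\mathbf{x})\mapsto(st,\mathbf{x})$ preserves $V$ only when $V$ is centered on $\{t=0\}$ with $p^+-p^-$ parallel to the time axis. Take a boosted cone centered at the origin with $p^\pm=\pm(a,\mathbf{b})$ and $\mathbf{b}\neq 0$. Then $\pi(V)$ is the ellipsoid with foci $\pm\mathbf{b}$ and semi-major axis $a$, while $\pi(V\cap\{t=0\})$ is the intersection of the balls of radius $a$ around $\pm\mathbf{b}$, a proper subset. Hence points of $V$ leave $V$ under your deformation, and no horizontal slice repairs this. Likewise, rescaled double cones are not double cones, since their boundaries become spacelike, so your poset-level zigzag does not live in $\mathbf{DCone}_{\bbM^n}$.

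\textbf{The missing idea is a Cauchy surface adapted to $V$.} Set $\Sigma_V=\{p\in V:\langle p-p^\circ,p^+-p^-\rangle=0\}$, which $\pi$ maps homeomorphically onto $\pi(V)$, and let $\iota$ be the inverse. Use the homotopy $h(s,\und{p})_i=\iota\pi(p_i)+s\,(p_i-\iota\pi(p_i))$. It stays in $V$ by convexity. It keeps the points causally disjoint because $v=\iota\pi(p_i)-\iota\pi(p_j)$ and $w=p_i-p_j$ are spacelike with equal spatial parts. That gives $\langle v,w\rangle>0$, and hence $\|(1-s)v+sw\|^2>0$.

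With this lemma, 2-out-of-3 gives the second weak-approximation condition. The first follows from filteredness of $\mathbf{DCone}_{\bbM^n}$ and contractibility of $\mathrm{Emb}_f(S,S')$. The paper does not compare the two localizations directly. It shows that both $\P_{\mathbf{DCone}_{\bbM^n}^\perp}\to\P_{\CConv_{\Sigma^{n-1}}^\perp}$ and $\P_{\mathbf{Conv}_{\Sigma^{n-1}}^\perp}\to\P_{\CConv_{\Sigma^{n-1}}^\perp}$ are weak approximations, hence localizations, and concludes that $\pi^\ast$ is an equivalence from the commutative triangle.
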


A proof of Theorem \ref{theo:DCone} will be presented in Appendix \ref{app:technical}, 
because it involves techniques and results from the theory of 
$\infty$-categories and $\infty$-operads, which in our opinion would distract
the reader from the core idea. Let us however anticipate that 
the above mentioned techniques formalize the intuitive idea 
that local constancy allows one to shrink and enlarge the open subsets 
of the corresponding prefactorization operads, and also move them around
in a way that respects the orthogonality relations. Since all double cones $U\subseteq \bbM^n$ in
$\mathbf{DCone}_{\bbM^n}^\perp$ and all convex opens $S\subseteq \Sigma^{n-1}$ 
in $\mathbf{Conv}_{\Sigma^{n-1}}^\perp$ come endowed with a distinguished point,
see Remark \ref{rem:doublecone} and Definition \ref{def:Convast}, one can 
shrink such subsets to their distinguished points, leading to 
configuration spaces of (causally) disjoint points. Summing up, the theory of 
$\infty$-categories and $\infty$-operads allows us to reduce the proof 
of Theorem \ref{theo:DCone} to the comparison from Proposition \ref{prop:Confhomotopy} below
between the following configuration spaces of (causally) disjoint points.
\begin{defi}\label{def:conf}
\begin{itemize}
\item[(a)] The \textit{configuration space of $m\in\bbZ_{\geq 0}$ disjoint points in 
$S\in \mathbf{Conv}_{\Sigma^{n-1}}^{\perp}$} is defined as the subset
\begin{flalign}
\mathrm{Conf}_m(S)\,:=\,\Big\{\und{q} = (q_1,\dots, q_m) \in S^{\times m} 
\,:\, q_i \neq q_j~\forall \,i\neq j \Big\}\,\subseteq\,S^{\times m}\quad,
\end{flalign}
endowed with the subspace topology.

\item[(b)] The \textit{configuration space of $m\in\bbZ_{\geq 0}$ causally disjoint points in 
$U\in \mathbf{DCone}_{\bbM^{n}}^{\perp}$} is defined as the subset
\begin{flalign}
\mathrm{cConf}_m(U)\,:=\,\Big\{\und{p} = (p_1,\dots, p_m) \in U^{\times m} 
\,:\,p_j\not\in J(\{p_i\})~\forall \,i\neq j \Big\}\,\subseteq\,U^{\times m}\quad,
\end{flalign}
endowed with the subspace topology.
\end{itemize}
\end{defi}

For every $m\in\bbZ_{\geq 0}$ and every double cone $U\in \mathbf{DCone}_{\bbM^n}^\perp$, 
the projection map $\pi : \bbM^n\to \Sigma^{n-1}$ defines a continuous map 
between configuration spaces
\begin{flalign}\label{eqn:Confpi}
\pi\,:\, \mathrm{cConf}_{m}(U)~\longrightarrow~\mathrm{Conf}_m(\pi(U))~,~~
\und{p}~\longmapsto~\pi(\und{p}) \,=\,\big(\pi(p_1),\dots,\pi(p_m)\big)\quad.
\end{flalign}
Our goal is to show that this map is a homotopy equivalence, which we 
achieve by constructing an explicit quasi-inverse.
\begin{constr}\label{constr:quasiinverse}
We start with observing that every double cone $U = I^-(\{p^+\})\cap I^+(\{p^-\}) 
\in \mathbf{DCone}_{\bbM^n}^\perp$ can be endowed with a canonical Cauchy surface
defined by
\begin{flalign}
\Sigma_U\,:=\, \Big\{p \in  U\,:\, \big\langle p-p^\circ , p^+-p^-\big\rangle_{\bbM^n}  = 0 \Big\}\,\subset\, U\quad,
\end{flalign}
i.e.\ it is the set of all points $p\in U$ whose difference vector $p-p^\circ\in\bbR^n$ from the center
$p^\circ \in U$ is normal, with respect to the Minkowski 
inner product $\langle\cdot,\cdot\rangle_{\bbM^n}$, to the timelike vector $p^+-p^-\in \bbR^n$. 
Restricting the projection map 
\eqref{eqn:Confpi} to configurations in this Cauchy surface defines an isomorphism of topological spaces
\begin{flalign}
\xymatrix{
\pi\vert\,:\,\mathrm{cConf}_{m}(U)\cap \Sigma_U^{\times m} \ar[r]^-{\cong}
~&~\mathrm{Conf}_m(\pi(U))
}\quad.
\end{flalign}
We use its inverse to define the continuous map
\begin{flalign}\label{eqn:Confiota}
\xymatrix{
\iota\,:\, \mathrm{Conf}_m(\pi(U)) \ar[r]^-{\pi\vert^{-1}}_-{\cong}~&~
\mathrm{cConf}_{m}(U)\cap \Sigma_U^{\times m}\ar@{^{(}->}[r]~&~\mathrm{cConf}_{m}(U)
}\quad,
\end{flalign}
which by construction satisfies $\pi\circ \iota = \id_{\mathrm{Conf}_m(\pi(U))}$.
The other composition $\iota\circ \pi$ is clearly not equal to the identity
because it projects any configuration of causally disjoint points
$\und{p}\in \mathrm{cConf}_{m}(U)$ to a configuration $\iota\pi(\und{p})\in \mathrm{cConf}_{m}(U)\cap \Sigma_U^{\times m}$
of (causally) disjoint points in the Cauchy surface $\Sigma_U\subset U$.
Instead, as we shall prove below in Proposition \ref{prop:Confhomotopy},
one has that $\iota\circ \pi\sim\id_{\mathrm{cConf}_{m}(U)}$
are homotopic through the homotopy
\begin{subequations}\label{eqn:Confhomotopy}
\begin{flalign}
h\,:\,[0,1]\times\mathrm{cConf}_{m}(U) ~\longrightarrow~\mathrm{cConf}_{m}(U)
\end{flalign}
defined by
\begin{flalign}
h(s,\und{p})\,:=\,
\Big(\iota\pi(p_1) + s\big(p_1-\iota\pi(p_1)\big),\dots,\iota\pi(p_m) + s\big(p_m-\iota\pi(p_m)\big)\Big)\quad.
\end{flalign}
\end{subequations}
The idea behind this homotopy is to recover the original 
points $p_i\in U$ from their projections $\iota\pi(p_i)\in \Sigma_U$ to the Cauchy surface
by translating along the timelike vectors $p_i-\iota\pi(p_i)\in\bbR^n$.
\end{constr}

\begin{propo}\label{prop:Confhomotopy}
The continuous map $\iota$ in \eqref{eqn:Confiota} and the homotopy $h$ in 
\eqref{eqn:Confhomotopy} define a quasi-inverse of the continuous map $\pi$ in \eqref{eqn:Confpi}. 
It then follows in particular that \eqref{eqn:Confpi} is a homotopy equivalence of topological spaces.
\end{propo}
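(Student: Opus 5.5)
Our plan is to verify directly the three claims implicit in Construction \ref{constr:quasiinverse}: that $\pi|$ is a homeomorphism onto $\mathrm{Conf}_m(\pi(U))$, that $\iota$ is continuous with $\pi\circ\iota=\id$, and that $h$ is a well-defined continuous homotopy from $\iota\circ\pi$ (at $s=0$) to $\id$ (at $s=1$).

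\textbf{Simplification and the map $\iota$.} First we reduce to a double cone at rest. Write $\tau:=p^+-p^-$. The decomposition $\bbM^n=\bbR^1\times\Sigma^{n-1}$ is fixed, so we cannot Lorentz-boost. Instead we work directly with the hyperplane $P:=\{p:\langle p-p^\circ,\tau\rangle_{\bbM^n}=0\}$. Since $\tau$ is timelike, $P$ is spacelike, and it is the graph of an affine function $t=\phi(\mathbf{x})$ with $|\nabla\phi|<1$. Consequently $\pi|_P:P\to\Sigma^{n-1}$ is an affine isomorphism. We will show $\pi(\Sigma_U)=\pi(U)$. Given $p\in U$, the vertical line $\{p+(t,0)\}$ meets $P$ in exactly one point $p'$. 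The triangle with vertices $p,p^+,p^-$ together with convexity of $U$ and the fact that $P$ separates $p^+$ from $p^-$ shows $p'\in U$. To see this, note that $U$ is the intersection of the future of $p^-$ and the past of $p^+$. A point of $P$ on the vertical line through $p$ lies between $p$ and either the future or past boundary, and each boundary half is handled by the corresponding tip's cone, using that the vertical direction is timelike. Hence $\pi|:\Sigma_U\to\pi(U)$ is a homeomorphism. Any two distinct points of the spacelike hyperplane $P$ are spacelike separated, so $\Sigma_U^{\times m}\cap\mathrm{cConf}_m(U)$ consists exactly of configurations of distinct points in $\Sigma_U$. This gives the homeomorphism $\pi|$, and therefore $\iota$ with $\pi\iota=\id$.

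\textbf{The homotopy.} Continuity of $h$ is clear, and $h(0,-)=\iota\pi$, $h(1,-)=\id$. The main obstacle is to show that $h(s,\und p)\in\mathrm{cConf}_m(U)$ for all $s$. Two things are needed.

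(a) Each point stays in $U$. Since $p_i$ and $\iota\pi(p_i)$ both lie in the convex set $U$, the segment joining them does too.

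(b) Causal disjointness is preserved. Write $p_i=x_i+a_i e_0$, where $x_i:=\iota\pi(p_i)\in P$ and $e_0$ is the time unit vector. Then $h(s,\und p)_i=x_i+s a_i e_0$. For $i\neq j$ set $w:=x_i-x_j$. This is a vector tangent to $P$, hence spacelike, with spatial part $\Delta\mathbf{x}$ and time part $\Delta t=\phi$-increment. Because $|\nabla\phi|<1$, we have $|\Delta t|<|\Delta\mathbf{x}|$. The separation at parameter $s$ is $w+s(a_i-a_j)e_0$. Its spatial part is $\Delta\mathbf{x}$, independent of $s$, and its time part is $\Delta t+s(a_i-a_j)$, which is affine in $s$. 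The causal-disjointness condition $|\Delta t+s(a_i-a_j)|<|\Delta\mathbf{x}|$ defines an interval-shaped set of $s$, which is convex. It holds at $s=0$ by the bound above, and at $s=1$ by the hypothesis $\und p\in\mathrm{cConf}_m(U)$. Therefore it holds for all $s\in[0,1]$.

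This gives $\iota\circ\pi\sim\id$ and $\pi\circ\iota=\id$, so $\pi$ is a homotopy equivalence. The only delicate points are the convexity argument in (b) and the verification that $\pi(\Sigma_U)=\pi(U)$. For the latter, if the triangle argument proves awkward, we would instead use the explicit inequalities defining $U$: $\langle p-p^-\rangle$ is future timelike and $\langle p^+-p\rangle$ is future timelike.
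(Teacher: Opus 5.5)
Your proposal is correct and follows the paper's proof. You use the same $\iota$ and the same straight-line homotopy $h$, and your key step, that $(1-s)\big(\iota\pi(p_i)-\iota\pi(p_j)\big)+s(p_i-p_j)$ stays spacelike, rests on the same convexity fact: you read it off from the fixed spatial part and affine time part, while the paper expands the Minkowski square and uses $\langle v,w\rangle_{\bbM^n}>0$, which follows from $\pi(v)=\pi(w)$.

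You also justify $\pi(\Sigma_U)=\pi(U)$, which the paper takes for granted in Construction~\ref{constr:quasiinverse}. Your triangle remark does not by itself locate the point $p^\prime$. The step is, however, immediate from $P\cap I^+(\{p^-\})=P\cap I^-(\{p^+\})=\Sigma_U$ (write points of $P$ as $p^\circ+y$ with $\langle y,p^+-p^-\rangle_{\bbM^n}=0$), because moving $p$ vertically towards $P$ is a future (resp.\ past) timelike displacement that keeps it in $I^+(\{p^-\})$ (resp.\ $I^-(\{p^+\})$).
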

\begin{proof}
It remains to show that the homotopy \eqref{eqn:Confhomotopy} is well-defined
because by definition it satisfies the desired identities 
$h(0,\,\cdot\,) = \iota\circ \pi$ and $h(1,\,\cdot\,) = \id_{\mathrm{cConf}_{m}(U)}$
leading to $\iota\circ \pi\sim\id_{\mathrm{cConf}_{m}(U)}$. (The property
$\pi\circ \iota = \id_{\mathrm{Conf}_m(\pi(U))}$ was already observed above.)
To prove well-definedness, we have to show that $h(s,\und{p})\in \mathrm{cConf}_{m}(U)$
is a configuration of causally disjoint points, for all $\und{p}\in \mathrm{cConf}_{m}(U)$
and all $s\in[0,1]$. Using the squared Minkowski distance to detect causal disjointness, 
this amounts to showing that
\begin{flalign}
\nn & d_{\bbM^n}^2\Big(\iota\pi(p_i) + s\big(p_i-\iota\pi(p_i)\big),\iota\pi(p_j) + s\big(p_j-\iota\pi(p_j)\big)\Big)\\[4pt]
&\qquad = \Big\vert\!\Big\vert (1-s) \big(\iota\pi(p_i) - \iota\pi(p_j)\big) + s \big(p_i-p_j\big)\Big\vert\!\Big\vert_{\bbM^n}^2 \,>\,0
\label{tmp:estimate}
\end{flalign}
is positive, for all $i\neq j$ and all $s\in[0,1]$. Introducing for notational convenience
the two spacelike vectors $v:=\iota\pi(p_i) - \iota\pi(p_j)\in \bbR^n$ and $w:=p_i-p_j\in \bbR^n$,
one can write this expression as
\begin{flalign}
\nn \big\vert\!\big\vert (1-s)v + s w\big\vert\!\big\vert_{\bbM^n}^2 \,&=\,
(1-s)^2 \,\vert\!\vert v \vert\!\vert_{\bbM^n}^{2}  + 2s(1-s)\,\langle v,w\rangle_{\bbM^n}^{}
+ s^2\,\vert\! \vert w \vert\!\vert_{\bbM^n}^{2}\\
\,&>\, 2s(1-s)\,\langle v,w\rangle_{\bbM^n}^{} \quad,
\end{flalign}
where in the last step we used that 
$\vert\!\vert v \vert\!\vert_{\bbM^n}^{2}>0$ and $\vert\!\vert w \vert\!\vert_{\bbM^n}^{2}>0$.
Using now that the spatial projections $\pi(v) = \pi(w)$ of the two spacelike vectors
coincide, which is a direct consequence
of $\pi\circ\iota = \id_{\mathrm{Conf}_m(\pi(U))}$, it follows that the inner product 
$\langle v,w\rangle_{\bbM^n}^{} > 0$ is positive too,
leading to a proof of the inequality \eqref{tmp:estimate}.
\end{proof}

To identify the equivalent prefactorization algebra $\SSS^{\Sigma^{n-1}}_{(\AAA,\pi_0)}$
on the spatial factor $\Sigma^{n-1}$ from Theorem \ref{theo:DCone}
with an $\bbE_{n-1}$-algebra, we have to understand how our prefactorization
operad $\P_{\mathbf{Conv}_{\Sigma^{n-1}}^\perp}$ relates to an 
$\bbE_{n-1}$-operad. A concrete and useful model for the latter
is given by the standard \textit{little $(n-1)$-disks operad}, see e.g.\ \cite{Horel}.
In our context, the latter can be defined by fixing any point $q^\circ\in \Sigma^{n-1}$
in the spatial factor $\Sigma^{n-1}\cong \bbR^{n-1}$ of the Minkowski spacetime
in order to regard $D:=\Sigma^{n-1}\in \mathbf{Conv}_{\Sigma^{n-1}}^\perp$
as an object in the orthogonal category from Definition \ref{def:Convast}. 
A model for the $\bbE_{n-1}$-operad is then given by the topological operad
consisting of the single object $D$ and $m$-ary operation spaces
given by the framed embedding spaces $\mathrm{Emb}_f\big(D^{\sqcup m},D\big)$.
The link between our discrete operad $\P_{\mathbf{Conv}_{\Sigma^{n-1}}^\perp}$ and this
topological model for the $\bbE_{n-1}$-operad is given by the following intermediate topological
operad which is based on \textit{all} non-empty convex opens in $\Sigma^{n-1}$
as in Definition \ref{def:Convast}.
\begin{defi}\label{def:PConvtop}
We denote by $\P_{\CConv_{\Sigma^{n-1}}^\perp}$ the topological suboperad
of the topological operad from \cite[Definition 8.1]{Horel} whose objects
are all objects $S\in \mathbf{Conv}_{\Sigma^{n-1}}^\perp$ of the 
orthogonal category from Definition \ref{def:Convast} and whose operations are
given by the framed embedding spaces $\P_{\CConv_{\Sigma^{n-1}}^\perp}\big(\substack{T\\ \und{S}}\big)
:=\mathrm{Emb}_f\big(S_1\sqcup\cdots\sqcup S_m,T\big)$.
\end{defi}

By definition, there exist morphisms of topological operads
\begin{flalign}\label{eqn:zigzag}
\xymatrix{
\P_{\mathbf{Conv}_{\Sigma^{n-1}}^\perp} \ar[r]~&~
\P_{\CConv_{\Sigma^{n-1}}^\perp} ~&~\ar[l] \bbE_{n-1}
}\quad,
\end{flalign}
where $\P_{\mathbf{Conv}_{\Sigma^{n-1}}^\perp}$ is considered
as a topological operad with discrete operation spaces.
The right-pointing morphism acts as the identity on objects 
and it assigns to every mutually disjoint inclusion $\und{S}=(S_1,\dots,S_m) \to T$ 
of subsets of $\Sigma^{n-1}\cong\bbR^{n-1}$, i.e.\ $S_{i}\subseteq T$ with $S_i\cap S_j =\varnothing$ for all $i\neq j$, 
its corresponding framed embedding $S_1\sqcup\cdots\sqcup S_m\to T$. The left-pointing morphism
fully faithfully embeds the $\bbE_{n-1}$-operad by regarding the spatial factor
$D=\Sigma^{n-1}\in  \mathbf{Conv}_{\Sigma^{n-1}}^\perp$ as a non-empty convex open subset
with the chosen distinguished point $q^\circ \in D$. The proof of the following
result is presented in Appendix \ref{app:technical}.
\begin{theo}\label{theo:Enidentification}
The morphisms of topological operads \eqref{eqn:zigzag} induce pullback $\infty$-functors
\begin{flalign}\label{eqn:AlgsoverConv}
\resizebox{0.91\hsize}{!}{
\xymatrix@C=1.5em{
\AAlg_{\P_{\mathbf{Conv}_{\Sigma^{n-1}}^\perp}}^{\mathrm{l.c.}}\!\Big(\AAlg_{\bbE_1}\big(\CCastCat\big)\Big)~&~
\ar[l]_-{\sim}\AAlg_{\P_{\CConv_{\Sigma^{n-1}}^\perp}}\!\Big(\AAlg_{\bbE_1}\big(\CCastCat\big)\Big)\ar[r]^-{\sim}~&~
\AAlg_{\bbE_{n-1}}\!\Big(\AAlg_{\bbE_1}\big(\CCastCat\big)\Big)
}}
\end{flalign}
which are equivalences of $\infty$-categories. Together with Theorem \ref{theo:DCone} and
Dunn-Lurie additivity $\AAlg_{\bbE_{n-1}}\!\big(\AAlg_{\bbE_1}\big(\CCastCat\big)\big) \simeq
\AAlg_{\bbE_{n}}\big(\CCastCat\big)$,
this provides an equivalent description of the locally constant prefactorization algebra
of superselection sectors \eqref{eqn:SSSDconeinfty} in terms of an $\mathbb{E}_n$-algebra
\begin{flalign}
\SSS^{D}_{(\AAA,\pi_0)}\,\in\, \AAlg_{\bbE_{n}}\big(\CCastCat\big)\quad,
\end{flalign}
i.e.\ an $\bbE_n$-monoidal $C^\ast$-category.
\end{theo}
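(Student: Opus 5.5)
The plan is to prove the two equivalences in \eqref{eqn:AlgsoverConv} separately, then compose them with Theorem \ref{theo:DCone} and Dunn-Lurie additivity \cite{LurieHA}. Throughout, write $\CC_\otimes:=\AAlg_{\bbE_1}(\CCastCat)$ for the target. I will use the standard localization machinery for $\infty$-operads. An $\infty$-operad $\mathcal{O}$ together with a set of unary arrows $W$ has a localization $\mathcal{O}[W^{-1}]$, and $\AAlg_{\mathcal{O}[W^{-1}]}(\CC_\otimes)$ is equivalent to the full subcategory of $\AAlg_{\mathcal{O}}(\CC_\otimes)$ on algebras inverting $W$. Local constancy is exactly this condition, with $W$ the set of all $1$-ary operations of $\P_{\mathbf{Conv}_{\Sigma^{n-1}}^\perp}$.

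\textbf{Left equivalence.} First I show that every $1$-ary operation of $\P_{\CConv_{\Sigma^{n-1}}^\perp}$ is invertible up to homotopy. Each nonempty convex open $S$ is diffeomorphic to $\bbR^{n-1}$, and $\mathrm{Emb}_f(S,T)$ is contractible for convex opens: shrink toward the distinguished point, then straighten via the linear part. Hence every $\P_{\CConv}$-algebra is automatically locally constant, and the pullback lands in $\AAlg^{\mathrm{l.c.}}$. It then suffices to prove that $\P_{\mathbf{Conv}_{\Sigma^{n-1}}^\perp}\to \P_{\CConv_{\Sigma^{n-1}}^\perp}$ exhibits the target as the localization of the discrete operad at all unary maps.

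I would do this arity-wise. By a Boavida de Brito--Weiss / Lurie style criterion (cf.\ \cite[Theorem 5.4.5.9]{LurieHA} and the argument in \cite[Section 4.1]{BCNSsectors}), the localization computes $m$-ary operations as the classifying space of the category of factorizations. Concretely, the operations $\und{S}\to T$ in $\P_{\mathbf{Conv}}[W^{-1}]$ are $\hocolim$ over the poset of disjoint families $(S_i'\subseteq T)$ with $S'_i$ convex, which should be homotopy equivalent to $\mathrm{Conf}_m(T)$. Shrinking each $S'_i$ to its distinguished point gives a map to $\mathrm{Conf}_m(T)$. Its fibers, the families of disjoint convex opens containing given distinct points, are directed posets (intersect, using convexity) and therefore contractible. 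Quillen's Theorem A, applied to an open-cover/Seifert--van Kampen style argument, then gives the equivalence with $\mathrm{Conf}_m(T)\simeq \mathrm{Emb}_f(\sqcup S_i,T)$.

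I expect this step to be the main obstacle. The fiberwise contractibility and the identification of the localization's mapping spaces need a genuine $\infty$-operadic localization theorem, not just a pointwise statement. I would try first to invoke the general result used for Theorem \ref{theo:DCone}, namely a locally constant operad localization via a ``shrinking to points'' homotopy, and to reuse it verbatim here with $\mathrm{Conf}_m$ in place of $\mathrm{cConf}_m$.

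\textbf{Right equivalence.} The inclusion $\bbE_{n-1}\to\P_{\CConv_{\Sigma^{n-1}}^\perp}$ is fully faithful by construction, since $D=\Sigma^{n-1}$ is one object with framed embedding spaces. It is essentially surjective up to equivalence, because every $S$ is equivalent to $D$: framed embeddings $S\to D$ and $D\to S$ exist (a diffeomorphism $\bbR^{n-1}\cong S$ adjusted to be framed), and they are mutually inverse up to homotopy by contractibility of the unary spaces. Thus the inclusion is an equivalence of $\infty$-operads, and restriction of algebras is an equivalence.

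\textbf{Assembly.} Composing these two equivalences with Theorem \ref{theo:DCone} gives $\SSS^{\Sigma^{n-1}}_{(\AAA,\pi_0)}$ as an $\bbE_{n-1}$-algebra in $\AAlg_{\bbE_1}(\CCastCat)$. Dunn-Lurie additivity then turns it into $\SSS^D_{(\AAA,\pi_0)}\in\AAlg_{\bbE_n}(\CCastCat)$.
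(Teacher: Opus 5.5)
Your proposal follows essentially the paper's argument. The right-hand equivalence comes from full faithfulness plus weak contractibility of $\mathrm{Emb}_f(S,S^\prime)$ (Horel, Prop.~7.6; Harpaz, Prop.~4.1.23). The left-hand one comes from Lurie's argument for HA, Lemma~5.4.5.11, which compares $\hocolim_{\und{S}^\prime}\prod_i S^\prime_i$ with $\mathrm{Conf}_m(T)$ via HA, Theorem~A.3.1, using that the subposets of tuples containing a given configuration are cofiltered.

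The ``genuine $\infty$-operadic localization theorem'' you say is missing is the weak approximation criterion (HA, Theorem~2.3.3.23, or Harpaz, Section~4.2). The paper applies it directly to the map $\P_{\mathbf{Conv}_{\Sigma^{n-1}}^\perp}\to\P_{\CConv_{\Sigma^{n-1}}^\perp}$, by checking that the canonical map $\hocolim_{\und{S}^\prime}\prod_i S^\prime_i\to\mathrm{Conf}_m(T)$ is a weak equivalence. You should use that canonical map rather than a ``shrink to distinguished points'' map out of the nerve, which is not functorial under inclusions.
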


\begin{rem}\label{rem:identification}
From this theorem one recovers the standard result \cite{DHR,Halvorson} 
that the $C^\ast$-category of superselection sectors
of an AQFT which is defined on double cones in the 
$(n\geq 2)$-dimensional Minkowski spacetime $\bbM^n$ is
braided monoidal in dimension $n=2$
and symmetric monoidal in dimension $n\geq 3$. The latter degenerate behavior follows
from the fact that $\CCastCat$ is only a $2$-category, hence
$\AAlg_{\bbE_{n}}\big(\CCastCat\big) \simeq \AAlg_{\bbE_{\infty}}\big(\CCastCat\big)$
for all $n\geq 3$. The novel aspect of our approach is that it explains the 
conceptual and geometric origins of this result: The $\bbE_1$ factor
in Dunn-Lurie additivity $\bbE_{n}\simeq \bbE_1\otimes \bbE_{n-1}$ has an analytic origin
in Haag duality and the associated object-wise monoidal structures
from Proposition \ref{prop:monoidal}. In contrast, the $\bbE_{n-1}$ factor
has a geometric origin arising from the interplay between the general
locally constant prefactorization algebra structure from Theorem 
\ref{theo:PFAstructure} and example-specific homotopy theoretical results
about double cones in $\bbM^n = \bbR^1\times\Sigma^{n-1}$ and their projections to the spatial factor $\Sigma^{n-1}$,
leading in particular to the key Proposition \ref{prop:Confhomotopy}. 
Hence, this second type of contribution to the algebraic structures on 
the $C^\ast$-categories of superselection sectors depends on the geometry 
and topology of the orthogonal category $\CC^\perp$ on which the AQFTs are defined. 
Concrete examples of how this can alter or refine these algebraic structures
have been described in \cite{BCNSsectors} in the context of lattice quantum systems. 
In the present context, one can show that enlarging the orthogonal category 
of double cones $\mathbf{DCone}_{\bbM^n}^\perp$ to the orthogonal category of diamonds 
$\mathbf{RC}_{\bbM^n}^{\diamond\, \perp}$ (i.e.\ the poset of relatively 
compact causally convex open subsets of $\bbM^n$ that are 
diffeomorphic to $\bbR^n$ endowed with the orthogonality relation 
given by causal disjointness) leads to the same conclusions.
\end{rem}

We conclude this section with a brief observation about an evident 
$G$-equivariant generalization of the above results. 
Consider a discrete group $G$ acting through isometries on 
the spatial factor $\Sigma^{n-1}$ of the Minkowski 
spacetime $\bbM^n=\bbR^1\times\Sigma^{n-1}$. This induces $G$-actions
\begin{subequations}\label{eqn:operadGactions}
\begin{flalign}
\P_{\mathbf{DCone}_{\bbM^n}^\perp}^{} \,&:\, \mathsf{B}G~\longrightarrow~\mathbf{Op}\quad,\\
\P_{\mathbf{Conv}_{\Sigma^{n-1}}^\perp}^{} \,&:\, \mathsf{B}G~\longrightarrow~\mathbf{Op}\quad,\\
\P_{\CConv_{\Sigma^{n-1}}^\perp}^{}\,&:\,\mathsf{B}G~\longrightarrow~\mathbf{Op}\quad,\\
\bbE_{n-1}\,&:\,\mathsf{B}G~\longrightarrow~\mathbf{Op}
\end{flalign}
\end{subequations}
on the (topological) operads introduced above.
Since all comparison
operad morphisms in \eqref{eqn:doublecone2conv} and \eqref{eqn:zigzag}
preserve these $G$-actions, they induce pullback
$\infty$-functors which allow us to compare their corresponding
$\infty$-categories of $G$-equivariant (locally constant) algebras,
which are defined in analogy to Definition \ref{def:eqvPFA} by replacing
the bicategorical limit with an $\infty$-categorical one.
Then, Theorems \ref{theo:DCone} and \ref{theo:Enidentification}
imply directly the following result.
\begin{cor}\label{cor:eqvexample}
Suppose that a discrete group $G$ acts through isometries on 
the spatial factor $\Sigma^{n-1}$ of the $(n\geq 2)$-dimensional Minkowski 
spacetime $\bbM^n=\bbR^1\times\Sigma^{n-1}$ and consider the induced
$G$-actions \eqref{eqn:operadGactions} on operads. Then, the equivalence
from Theorem \ref{theo:DCone} induces an equivalence
\begin{flalign}
\AAlg_{\P_{\mathbf{DCone}_{\bbM^n}^\perp}}^{\mathrm{l.c.},\,G}\!\Big(\AAlg_{\bbE_1}\big(\CCastCat\big)\Big)
~\simeq~
\AAlg_{\P_{\mathbf{Conv}_{\Sigma^{n-1}}^\perp}}^{\mathrm{l.c.},\,G}\!\Big(\AAlg_{\bbE_1}\big(\CCastCat\big)\Big)
\end{flalign}
between the corresponding $\infty$-categories of $G$-equivariant locally constant prefactorization algebras.
Furthermore, the equivalence from Theorem \ref{theo:Enidentification} induces an equivalence
\begin{flalign}
\AAlg_{\P_{\mathbf{Conv}_{\Sigma^{n-1}}^\perp}}^{\mathrm{l.c.},\,G}\!\Big(\AAlg_{\bbE_1}\big(\CCastCat\big)\Big)
~\simeq~
\AAlg_{\bbE_{n-1}}^{G}\!\Big(\AAlg_{\bbE_1}\big(\CCastCat\big)\Big)
~\simeq~
\AAlg_{\bbE_{n}^G}^{}\big(\CCastCat\big)
\quad,
\end{flalign}
where the last equivalence follows from \cite{Arakawa}.
Here $\bbE_{n}^G:= G\ltimes\bbE_n$ denotes
the semi-direct product operad from \cite{Markl,Wahl}, which in our case 
is obtained from the induced $G$-action on $\bbR^1\times \Sigma^{n-1}$.
\end{cor}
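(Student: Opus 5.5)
The plan is to deduce both equivalences formally from the non-equivariant Theorems \ref{theo:DCone} and \ref{theo:Enidentification}. The $\infty$-categorical analogue of Definition \ref{def:eqvPFA} writes each $G$-equivariant $\infty$-category of (locally constant) algebras as a limit over $(\mathsf{B}G)^\op$ of the diagram obtained by composing the $G$-action on the relevant operad with the functor $\AAlg_{(-)}(\AAlg_{\bbE_1}(\CCastCat))$ (restricted to locally constant algebras where appropriate). In the first step I will check that the comparison morphisms $\pi$ from \eqref{eqn:doublecone2conv} and the zigzag \eqref{eqn:zigzag} are $G$-equivariant, i.e.\ they give natural transformations of functors $\mathsf{B}G\to\mathbf{Op}$ (respectively to topological operads).
\begin{itemize}
\item For $\pi$ this holds because $G$ acts on $\bbM^n=\bbR^1\times\Sigma^{n-1}$ through the spatial factor by isometries. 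Such maps send double cones to double cones, preserve causal disjointness and center points, and commute with the affine projection.
\item For \eqref{eqn:zigzag}, note that isometries preserve convexity and disjointness. They also act on framed embedding spaces by conjugation. The point $q^\circ$ used to build $\bbE_{n-1}$ is just a distinguished point of $D=\Sigma^{n-1}$, and $D$ is $G$-stable, so $G$ acts on $\bbE_{n-1}$ compatibly.
\item Local constancy is preserved by pullback along $\alpha_g$, since $\alpha_g$ sends $1$-ary operations to $1$-ary operations. Hence the $G$-actions restrict to the locally constant subcategories.
\end{itemize}

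In the second step, pullback along these equivariant morphisms yields natural transformations between the diagrams $(\mathsf{B}G)^\op\to\mathbf{Cat}_\infty$. By Theorems \ref{theo:DCone} and \ref{theo:Enidentification}, every component of these transformations is an equivalence. A natural transformation of diagrams that is objectwise an equivalence induces an equivalence on $\infty$-categorical limits. This gives the first displayed equivalence and $\AAlg^{\mathrm{l.c.},G}_{\P_{\mathbf{Conv}}}\simeq\AAlg^G_{\bbE_{n-1}}(\AAlg_{\bbE_1}(\CCastCat))$, passing through the $G$-equivariant algebras over $\P_{\CConv_{\Sigma^{n-1}}^\perp}$ in the middle.

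For the last equivalence I will first identify $G$-equivariant $\bbE_{n-1}$-algebras in $\AAlg_{\bbE_1}(\CCastCat)$ with $G$-equivariant $\bbE_{n-1}$-algebras in $\CCastCat$ that carry a compatible $\bbE_1$-structure. Here $G$ acts trivially on the $\bbE_1$ direction, which is the time factor $\bbR^1$. I will then invoke the equivariant Dunn–Lurie additivity of \cite{Arakawa}, namely $(G\ltimes\bbE_{n-1})\otimes\bbE_1\simeq G\ltimes\bbE_n$ for the product action on $\bbR^1\times\Sigma^{n-1}$. Combined with the standard identification of homotopy-coherent $G$-equivariant $\mathcal{O}$-algebras with $G\ltimes\mathcal{O}$-algebras \cite{Markl,Wahl}, this gives $\AAlg_{\bbE_n^G}(\CCastCat)$.

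I expect the main obstacle to be this last step rather than the limit argument. One has to check that the homotopy-limit model of equivariant algebras agrees with algebras over the semidirect product operad in the $\infty$-setting. One also has to check that the hypotheses of \cite{Arakawa} are met: the action must be by isometries, or at least pass to a framed or linear action on the little disks model. If that citation needs the action to factor through $O(n-1)$, I would first reduce to this case by conjugating with a translation, since isometries of $\bbR^{n-1}$ are affine.
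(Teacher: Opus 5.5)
Your argument is the same as the paper's. The comparison maps \eqref{eqn:doublecone2conv} and \eqref{eqn:zigzag} are $G$-equivariant, so pullback gives natural transformations of $(\mathsf{B}G)^{\op}$-indexed diagrams that are objectwise equivalences by Theorems \ref{theo:DCone} and \ref{theo:Enidentification}, hence equivalences on the $\infty$-categorical limits, and the last step is the same appeal to \cite{Arakawa}. One caution: your fallback of conjugating by a translation to linearize the action only works when $G$ has a common fixed point in $\Sigma^{n-1}$ (e.g.\ $G$ finite), not for, say, a lattice of translations; likewise, strict equivariance of $\bbE_{n-1}\to\P_{\CConv_{\Sigma^{n-1}}^\perp}$ needs $g q^\circ = q^\circ$, and otherwise holds only up to the canonical isomorphisms $(D,q^\circ)\cong(D,g q^\circ)$, which is harmless $\infty$-categorically but should be stated.
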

\begin{rem}
The $G$-equivariant locally constant prefactorization algebras
of (covariant) superselection sectors from Theorems \ref{theo:SSSeqvPFA} and \ref{theo:covSSSeqvPFA}
define objects
\begin{flalign}
\SSS_{(\AAA,\pi_0)}^{(\mathrm{cov})}\,\in\,\AAlg_{\P_{\mathbf{DCone}_{\bbM^n}^\perp}}^{\mathrm{l.c.},\,G}\!
\Big(\AAlg_{\bbE_1}\big(\CCastCat\big)\Big)\quad.
\end{flalign}
(Note that these objects are defined in the more general case of $G$ 
acting through arbitrary isometries of the Minkowski spacetime $\bbM^n$.)
According to Corollary \ref{cor:eqvexample}, the latter can be identified with
$G$-equivariant locally constant prefactorization algebras 
\begin{flalign}
\SSS^{\Sigma^{n-1},\,(\mathrm{cov})}_{(\AAA,\pi_0)}\,\in\,
\AAlg_{\P_{\mathbf{Conv}_{\Sigma^{n-1}}^\perp}}^{\mathrm{l.c.},\,G}\!\Big(\AAlg_{\bbE_1}\big(\CCastCat\big)\Big)
\end{flalign}
on the spatial factor $\Sigma^{n-1}$ of the Minkowski spacetime $\bbM^n$,
which admit a further equivalent description
in terms of $G$-equivariant $\bbE_n$-monoidal $C^\ast$-categories
\begin{flalign}
\SSS^{D,\,(\mathrm{cov})}_{(\AAA,\pi_0)}\,\in\,\AAlg_{\bbE_{n-1}}^{G}\!\Big(\AAlg_{\bbE_1}\big(\CCastCat\big)\Big)\,
\simeq\, \AAlg_{\bbE_{n}^G}^{}\big(\CCastCat\big)\quad.
\end{flalign}
(For these identifications it is instead crucial that $G$ acts through isometries of the spatial factor $\Sigma^{n-1}$.)
The physical interpretation of this result is that a group $G$ of 
space(time) symmetries of an AQFT $\AAA$ defined on double cones in $\bbM^n$
induces a $G$-action on its associated $\bbE_n$-monoidal $C^\ast$-category of (covariant) superselection sectors. 
\end{rem}

\begin{rem} 
Note that in the case $G=\mathrm{SO}(n-1)$, the topological operad $\bbE_{n-1}^G$ 
we consider above is \textit{not} equivalent to the framed $\bbE_{n-1}$-operad from \cite{Markl,Wahl}.
This stems from the fact that we are regarding $\mathrm{SO}(n-1)$ as a discrete group,
ignoring its Lie group structure. 
\end{rem}


\section*{Acknowledgments}
The work of M.B.\ is supported in part by the MUR Excellence 
Department Project awarded to Dipartimento di Matematica, 
Universit{\`a} di Genova (CUP D33C23001110001) and it is fostered by 
the National Group of Mathematical Physics (GNFM-INdAM (IT)). 
V.C.\ is partially supported by the project PID2024-157173NB-I00 funded
by MCIN/AEI/10.13039/501100011033 and by FEDER, UE.


%
%


\appendix

\section{\label{app:technical}Technical details for Theorems \ref{theo:DCone} and \ref{theo:Enidentification}}
In this appendix, we provide the missing $\infty$-categorical arguments 
which demonstrate the equivalences of $\infty$-categories of operadic algebras 
stated in Theorems \ref{theo:DCone} and \ref{theo:Enidentification}.
The main technical tool we employ is that of \textit{weak approximations} of Lurie's $\infty$-operads, 
which provide a suitable criterion in our setting to detect $\infty$-localizations 
of $\infty$-operads. The original source is \cite[Proposition 2.3.3.11 and Theorem 2.3.3.23]{LurieHA}, 
but the variation that we will use can be found in \cite[Section 4.2]{Harpaz}.
A summary of the latter appears also in \cite[Appendix C]{CarmonaSvraka}, 
see in particular Definition C.4 and Corollary C.16 of this work.
\begin{rem}\label{rem:generalsmcatinsteadofAAlg} 
Since the arguments below do not depend on the target 
category in which our prefactorization algebras take values,
we can and will replace $\AAlg_{\bbE_1}\big(\CCastCat\big)$ by a general symmetric monoidal $\infty$-category $\V$ 
in the rest of this appendix. In this way we can also alleviate our notations.
\end{rem}

First of all, let us observe that, even for a general symmetric monoidal $\infty$-category
$\V$ in place of $\AAlg_{\bbE_1}\big(\CCastCat\big)$,
the right-pointing equivalence in \eqref{eqn:AlgsoverConv} is a direct consequence of 
the fact that the canonical inclusion of topological operads
$\mathbb{E}_{n-1}\rightarrow \P_{\CConv_{\Sigma^{n-1}}^\perp}$
yields an equivalence of $\infty$-operads, see \cite[Proposition 4.1.23]{Harpaz}. 
Indeed, it is essentially surjective up to homotopy since $\mathrm{Emb}_f(S,S^\prime)$ 
is weakly contractible, for any $S,S^\prime\in \mathbf{Conv}_{\Sigma^{n-1}}$, see e.g.\ 
\cite[Proposition 7.6]{Horel}, and homotopically fully-faithful by definition.
\sk

The remaining $\infty$-functors between $\infty$-categories of operadic
algebras appearing in Theorems \ref{theo:DCone} and \ref{theo:Enidentification} 
fit into the commutative triangle
\begin{equation}
\begin{tikzcd}
\AAlg_{\P_{\mathbf{DCone}_{\mathbb{M}^{n}}^\perp}}^{\mathrm{l.c.}}\!\big(\V\big)  && 
\AAlg_{\P_{\mathbf{Conv}_{\Sigma^{n-1}}^\perp}}^{\mathrm{l.c.}}\!\big(\V\big)  \ar[ll,"\pi^\ast"']\\
& \AAlg_{\P_{\CConv_{\Sigma^{n-1}}^\perp}}\!\big(\V\big)  \ar[ru] \ar[lu]
\end{tikzcd}\quad,
\end{equation}
hence it suffices to verify that both diagonal $\infty$-functors 
are equivalences for achieving the sought conclusions. This problem 
is equivalent to proving that the two maps of topological operads
\begin{flalign}\label{eqn:Opdmapstobeweakapprox}
\P_{\mathbf{DCone}_{\mathbb{M}^{n}}^\perp} ~\stackrel{\pi}{\longrightarrow}~
\P_{\mathbf{Conv}_{\Sigma^{n-1}}^\perp}~\longrightarrow~\P_{\CConv_{\Sigma^{n-1}}^\perp}\quad,\qquad
\P_{\mathbf{Conv}_{\Sigma^{n-1}}^\perp} ~\longrightarrow~\P_{\CConv_{\Sigma^{n-1}}^\perp}
\end{flalign}
exhibit $\P_{\CConv_{\Sigma^{n-1}}^\perp}$ as the $\infty$-localization 
at all unary operations of the two operads 
$\P_{\mathbf{DCone}_{\mathbb{M}^{n}}^\perp}$ and $\P_{\mathbf{Conv}_{\Sigma^{n-1}}^\perp}$.
In light of \cite[Theorem 2.3.3.23]{LurieHA} or \cite[Corollary C.16]{CarmonaSvraka}, 
this is a consequence of the following technical result.
\begin{propo}\label{prop:weakapproximations}
The two maps of topological operads in \eqref{eqn:Opdmapstobeweakapprox} 
induce weak approximations\footnote{Note that Lurie's and Harpaz's definitions of weak 
approximations are not technically equivalent. This subtlety does not affect 
our arguments and conclusions because the underlying $\infty$-category of 
$\P_{\CConv_{\Sigma^{n-1}}^{\perp}}$ is an $\infty$-groupoid.} 
between their associated $\infty$-operads.
\end{propo}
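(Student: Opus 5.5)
We use the formulation of weak approximations from \cite[Definition C.4]{CarmonaSvraka}, following \cite[Section 4.2]{Harpaz}. Fix a map of operads $\phi:\mathcal{O}\to\P_{\CConv_{\Sigma^{n-1}}^\perp}$, where $\mathcal{O}$ is either $\P_{\mathbf{DCone}_{\mathbb{M}^{n}}^\perp}$ or $\P_{\mathbf{Conv}_{\Sigma^{n-1}}^\perp}$. Up to the footnoted subtlety, which is harmless because the underlying $\infty$-category of the target is an $\infty$-groupoid, the criterion has two parts. First, $\phi$ must be essentially surjective on colors. Second, for every color $T$ of $\mathcal{O}$ and every tuple $\und{S}$ of target colors, a certain $\infty$-category must be weakly contractible. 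This $\infty$-category consists of pairs formed by an $\mathcal{O}$-operation $\und{U}\to T$ together with a path in the target operation space, joining the image operation to a point of $\mathrm{Emb}_f(\und{S},\phi(T))$. Its morphisms are $1$-ary operations of $\mathcal{O}$ acting on the inputs. By a standard Quillen Theorem A / Grothendieck construction argument, this reduces to a statement about spaces. Concretely, the homotopy colimit over the poset $\mathcal{O}(T)_{\und{U}}$ of inputs $\und{U}$ with an operation to $T$ (ordered by inclusion) of the contractible target spaces $\mathrm{Emb}_f(\und{S},\und{U})$, namely the tuples of framed embeddings $S_i\to U_i$, must map by a weak equivalence to $\mathrm{Emb}_f(\und{S},\phi(T))\simeq\mathrm{Conf}_m(\phi(T))$. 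Equivalently, the nerve of the poset of orthogonal $m$-tuples in $T$ must be weakly equivalent to the configuration space $\mathrm{Conf}_m(\phi(T))$.

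For $\mathcal{O}=\P_{\mathbf{Conv}_{\Sigma^{n-1}}^\perp}$ we argue as follows. Using the distinguished points, we obtain a map from the poset $P_m(T)$ of $m$-tuples of mutually disjoint convex opens in $T$ to $\mathrm{Conf}_m(T)$. To show it is a weak equivalence we apply a Lurie-type recognition result, \cite[Theorem A.3.1]{LurieHA} (Seifert–van Kampen for open covers). For each $\und{S}\in P_m(T)$, the open subsets $S_1\times\cdots\times S_m\subseteq\mathrm{Conf}_m(T)$ are contractible, because they are products of convex sets. Moreover, for every configuration $\und{q}$, the subposet of tuples containing $\und{q}$ is filtered, hence weakly contractible: given two disjoint tuples around $\und{q}$, we shrink to small balls centered at the $q_i$, which lie in both. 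This covers the fact that morphisms need not preserve the distinguished points, since only containment of $\und{q}$ matters. That theorem then yields $|N P_m(T)|\simeq\mathrm{Conf}_m(T)$, giving the second condition. Essential surjectivity on colors is trivial, since $\phi$ is the identity on objects.

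For the composite starting at $\P_{\mathbf{DCone}_{\mathbb{M}^{n}}^\perp}$ we run the same argument. Essential surjectivity holds because any two objects of $\CConv$ are connected by a framed embedding and the embedding spaces are weakly contractible \cite[Proposition 7.6]{Horel}, so every color lies in the essential image. For the second condition we use the poset of causally disjoint $m$-tuples of double cones in a double cone $U$, covering $\mathrm{cConf}_m(U)$ by the products $U_1\times\cdots\times U_m$. These products are contractible because double cones are convex. The filteredness condition holds because small double cones around causally disjoint points remain causally disjoint, as causal disjointness is an open condition. This gives $|N P^c_m(U)|\simeq\mathrm{cConf}_m(U)$. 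Composing with Proposition~\ref{prop:Confhomotopy} and the convex case identifies the relevant map with a weak equivalence onto $\mathrm{Conf}_m(\pi(U))$.

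The main obstacle is the bookkeeping in the second step: we must check that the comparison map produced by the weak approximation formalism really agrees, up to homotopy, with the map ``evaluate at distinguished points'' into configuration spaces. Since framed embeddings of convex sets deformation retract onto their values at centers, this should follow by restricting embeddings to the distinguished points. We also need to verify that \cite[Theorem A.3.1]{LurieHA} applies to the non-open-cover indexing here, for which we use that the indexing poset maps to opens in $\mathrm{Conf}_m$ and satisfies the pointwise contractibility condition.
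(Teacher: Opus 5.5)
Your multi-ary step is essentially the paper's argument. You reduce, by evaluating at distinguished points, to the map $\hocolim_{\und{U}\in\mathbf{A}^m_V}\prod_i\pi(U_i)\to\mathrm{Conf}_m(\pi(V))$. You then get $\hocolim\prod_i U_i\simeq\mathrm{cConf}_m(V)$ from Lurie's Theorem A.3.1 using pointwise cofilteredness, and finish with Proposition~\ref{prop:Confhomotopy}. The paper settles your ``bookkeeping'' worry with the square induced by the natural transformation coming from $\pi$. Its left vertical map is an equivalence because both $U_i$ and $\pi(U_i)$ are contractible, so your appeal to ``the convex case'' is not needed at that point. For $\P_{\mathbf{Conv}_{\Sigma^{n-1}}^\perp}$ you write out explicitly what the paper obtains by citing Lurie's Lemma 5.4.5.11.

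The gap is in the first condition. It is not essential surjectivity on colors. The paper verifies it using two facts: the underlying $\infty$-category of $\P_{\CConv_{\Sigma^{n-1}}^\perp}$ is a contractible $\infty$-groupoid, \emph{and} the source category is filtered. In effect the underlying functor must be an $\infty$-localization, which is what lets a weak approximation exhibit the target as the localization of the source at all unary operations. Essential surjectivity plus your multi-ary condition cannot suffice, because that condition is local over each color $T$ and cannot see global connectivity of the source. For example, take the fold map from two disjoint copies of the commutative operad to one copy. It is surjective on colors and satisfies your hocolim condition, since each indexing category is a single tuple. Yet its algebras are pairs of commutative algebras, even though every such algebra is trivially locally constant.

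To close the gap, add weak contractibility of the sources' underlying categories. You already have the target half via weak contractibility of framed embedding spaces. For $\mathbf{DCone}_{\bbM^n}$, this is exactly where the paper uses filteredness from Lemma~\ref{lem:doublecone}. For $\mathbf{Conv}_{\Sigma^{n-1}}$, any object with underlying set $\Sigma^{n-1}$ is terminal. With this added, your argument goes through.
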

\begin{proof}
We start with observing that the argument 
from \cite[Lemma 5.4.5.11]{LurieHA} applies verbatim to the second operad map
$\P_{\mathbf{Conv}_{\Sigma^{n-1}}^\perp}\to \P_{\CConv_{\Sigma^{n-1}}^\perp}$,
see also \cite[Proposition 5.2.4]{Harpaz}. 
The proof for the first operad map $\P_{\mathbf{DCone}_{\mathbb{M}^{n}}^\perp} \to 
\P_{\mathbf{Conv}_{\Sigma^{n-1}}^\perp}\to
\P_{\CConv_{\Sigma^{n-1}}^\perp}$ will follow from a slight
variation of this argument involving the two types of configuration spaces from Definition \ref{def:conf}.
\sk

The first condition for a weak approximation,
see \cite[Definition 4.2.14]{Harpaz} or \cite[Definition C.4]{CarmonaSvraka}, 
is a consequence of the fact that the underlying $\infty$-category of 
$\P_{\CConv_{\Sigma^{n-1}}^\perp}$ is a contractible $\infty$-groupoid 
(recall that $\mathrm{Emb}_f(S,S^\prime)$ is weakly contractible, 
for any $S,S^\prime\in \mathbf{Conv}_{\Sigma^{n-1}}$), and that 
the underlying category of $\P_{\mathbf{DCone}_{\mathbb{M}^{n}}^\perp}$, 
which is given by $\mathbf{DCone}_{\mathbb{M}^{n}}$, 
is filtered by Lemma \ref{lem:doublecone}. 
\sk

Following a standard argument, see e.g.\ 
the proof of \cite[Lemma 5.4.5.11]{LurieHA}, 
\cite[Proposition 5.2.4]{Harpaz} or \cite[Lemma 3.6]{CarmonaSvraka}, 
the second condition for a weak approximation can be reduced to 
showing that the canonical map of topological spaces
\begin{flalign}
\underset{\underline{U}\in \mathbf{A}_V^m}{\hocolim}\bigg(\prod_{1\leq i\leq m} \mathrm{Emb}_f\big(S_i,\pi(U_i)\big)\bigg)
~\longrightarrow~ 
\mathrm{Emb}_f\Big(\bigsqcup_{1\leq i \leq m} S_i, \pi(V)\Big)
\end{flalign}
is a weak homotopy equivalence, for any given $V\in \mathbf{DCone}_{\mathbb{M}^n}$ 
and $m$-tuple $\underline{S}\in \mathbf{Conv}_{\Sigma^{n-1}}^{\times m}$.
The homotopy colimit in this expression is taken over the 
partially ordered set $\mathbf{A}_V^m \subseteq  \mathbf{DCone}_{\mathbb{M}^{n}}^{\times m}$ 
of all $m$-tuples of double cones $\und{U}=(U_1,\dots,U_m)$ 
satisfying the property that there exists a (unique) operation $\und{U}\to V$ 
in $\P_{\mathbf{DCone}_{\mathbb{M}^n}^{\perp}}$. 
By evaluating at the distinguished points of the $S_i$'s and 
using the standard homotopy theoretical result 
from \cite[Proposition 7.6]{Horel}, one can replace the spaces of framed embeddings 
above by (ordinary) configuration spaces, and alternatively prove that 
\begin{flalign}
\underset{\underline{U}\in \mathbf{A}_V^m}{\hocolim}\bigg(\prod_{1\leq i\leq m} \pi(U_i)\bigg)~
\longrightarrow~ \mathrm{Conf}_{m}\big(\pi(V)\big)
\end{flalign}
is a weak homotopy equivalence. We demonstrate that this is the case by considering the square
\begin{equation}\label{eqn:CommSquareforWeakApprox}
\begin{tikzcd}
\underset{\underline{U}\in \mathbf{A}_V^m}{\hocolim}\Big(\prod_{1\leq i\leq m} U_i\Big) 
\ar[rr]\ar[d,"\pi"'] && \mathrm{cConf}_{m}\big(V\big) \ar[d,"\pi"]\\
\underset{\underline{U}\in \mathbf{A}_V^m}{\hocolim}\Big(\prod_{1\leq i\leq m} \pi(U_i)\Big) 
\ar[rr] && \mathrm{Conf}_{m}\big(\pi(V)\big) 
 \end{tikzcd}\quad,
\end{equation}
whose commutativity is due to the fact that $\pi: \mathbb{M}^n\to \Sigma^{n-1}$ 
induces a natural transformation filling the  right triangle in 
\begin{equation}
\begin{tikzcd}
\underline{U} \ar[rr, mapsto] \ar[dd, mapsto] &[-9mm]& \prod_i U_i  \\[-7mm]
&\mathbf{A}^m_V \ar[r] \ar[d] &  \mathbf{Open}\big(\mathrm{cConf}_m\big(V\big)\big) \ar[d] \ar[ld,"\pi"'] \ar[d, phantom, "\Longleftarrow" description, shift right=12, near end]\\
\prod_{i}\pi(U_i)
& 
\mathbf{Open}\big(\mathrm{Conf}_m\big(\pi(V)\big)\big) \ar[r] & \EuScript{S}\mathsf{pc} 
\end{tikzcd}\quad,
\end{equation}
where $\EuScript{S}\mathsf{pc}$ denotes the $\infty$-category of spaces.
\sk

Concerning the diagram \eqref{eqn:CommSquareforWeakApprox}, we observe: 
\begin{itemize}
\item[(1)] The left vertical map is a weak homotopy equivalence because
convexity of the involved subsets implies that
$U_i\cong\mathbb{R}^{n}$ and $\pi(U_i)\cong \mathbb{R}^{n-1}$, for all $i=1,\dots,m$.

\item[(2)] The right vertical map is a homotopy equivalence by Proposition 
\ref{prop:Confhomotopy}.

\item[(3)] The upper horizontal map is a weak 
homotopy equivalence as a consequence of \cite[Theorem A.3.1]{LurieHA} since double cones 
provide a basis for the topology of $\mathbb{M}^n$. For 
this reason, for any $\underline{p}=(p_1,\dots,p_m)\in \mathrm{cConf}_m(V)$, the subposet
\begin{flalign}
\Big\{ \underline{U}\in \mathbf{A}^m_V\,:\, p_i\in U_i\,\text{ for all } i=1,\dots,m\Big\}\,\subseteq\, \mathbf{A}^m_V
\end{flalign}
is cofiltered and therefore weakly contractible. 
\end{itemize}
By the 2-out-of-3 property of weak homotopy equivalences, this implies that
the bottom horizontal map \eqref{eqn:CommSquareforWeakApprox} is a weak homotopy equivalence too,
which concludes the proof.
\end{proof}


\end{document}